\newcommand\vldbdoi{10.14778/3551793.3551867}
\newcommand\vldbpages{3249 - 3262}
\newcommand\vldbvolume{15}
\newcommand\vldbissue{11}
\newcommand\vldbyear{2022}
\newcommand\vldbauthors{\authors}
\newcommand\vldbtitle{\shorttitle} 
\newcommand\vldbavailabilityurl{https://github.com/antonis-m/HYDRA_VLDB}
\newcommand\vldbpagestyle{empty}
\newtheorem{theorem}{Theorem}
\def\multiset#1#2{\ensuremath{\left(\kern-.3em\left(\genfrac{}{}{0pt}{}{#1}{#2}\right)\kern-.3em\right)}}
\newcommand{\parentheses}[1]{\left(#1\right)}
\newcommand{\brackets}[1]{\left[#1\right]}
\newcommand{\set}[1]{\left\{#1\right\}}
\newcommand{\ie}{\emph{i.e.,}\xspace}
\newcommand{\eg}{\emph{e.g.,}\xspace}
\newcommand{\etc}{\emph{etc.}\xspace}
\newcommand{\etal}{\emph{et al.}\xspace}
\newcommand{\sysname}{{\sc Hydra}\xspace}
\newcommand{\antonis}[1]{{{#1}}}
\newcolumntype{C}[1]{>{\centering\let\newline\\\arraybackslash\hspace{0pt}}m{#1}}
\title{Enabling Efficient and General \Aggregation Analytics in  Multidimensional Data Streams}
\author{Antonis Manousis}
\affiliation{%
 \institution{Carnegie Mellon University}
}
\email{antonis@cmu.edu}
\author{Zhuo Cheng}
\affiliation{%
 \institution{Carnegie Mellon University}
}
\email{zhuoc2@andrew.cmu.edu}
\author{Ran Ben Basat}
\affiliation{%
 \institution{University College London}
}
\email{r.benbasat@cs.ucl.ac.uk}
\author{Zaoxing Liu}
\affiliation{%
 \institution{Boston University}
}
\email{zaoxing@bu.edu}
\author{Vyas Sekar}
\affiliation{%
 \institution{Carnegie Mellon University}
}
\email{vsekar@andrew.cmu.edu}
\newcounter{insightlabel}
\newcounter{insightnmbr}
\renewcommand{\theinsightlabel}{\textbf{\theinsightnmbr}}
\newcounter{packednmbr}
\newcommand{\mypara}[1]{\smallskip \noindent{\bf {#1}:}~}
\newcommand{\myparaq}[1]{\medskip\noindent{\bf {#1}?}~}
\newenvironment{packedenumerate}{\begin{list}{\thepackednmbr.}{\usecounter{packednmbr}\setlength{\itemsep}{0.5pt}\addtolength{\labelwidth}{-4pt}\setlength{\leftmargin}{\labelwidth}\setlength{\listparindent}{\parindent}\setlength{\parsep}{1pt}\setlength{\topsep}{0pt}}}{\end{list}}
\newcommand{\aggregations}{subpopulations\xspace}
\newcommand{\aggregation}{subpopulation\xspace}
\newcommand{\Aggregation}{Subpopulation\xspace}
\newcommand{\subpopulations}{\aggregations}
\newcommand{\multidimensional}{multidimensional\xspace}
\newcommand{\analytics}{analytics\xspace}
\newcommand{\HYDRA}{{\sc Hydra}\xspace}
\newcommand{\sketch}{{\sc Hydra}-sketch\xspace}
\newcommand{\statistic}{summary statistic\xspace}
\newcommand{\statistics}{summary statistics\xspace}
\newcommand{\DataIngestion}{Data Ingestion\xspace}
\newcommand{\QueryEstimation}{Query Estimation\xspace}
\newcommand{\ingestion}{ingestion\xspace}
\newcommand{\datapoints}{data records\xspace}
\newcommand{\datapoint}{data record\xspace}
\newcommand{\dimensions}{dimensions\xspace}
\newcommand{\dimension}{dimension\xspace}
\newcommand{\datapointvar}{\ensuremath{\mathit{x}}}
\newcommand{\numdimensions}{\ensuremath{\mathit{D}}}
\newcommand{\dimensionval}{\ensuremath{\mathit{d}}}
\newcommand{\dimensionindex}{\ensuremath{\mathit{i}}}
\newcommand{\dimensionvar}{\ensuremath\mathit{D}}
\newcommand{\dimCardinality}{\ensuremath\mathit{C}}
\newcommand{\metricval}{\ensuremath{\mathit{m}}}
\newcommand{\metricname}{\ensuremath{\mathit{M}}}
\newcommand{\aggregationval}{\ensuremath{\mathit{Q}}}
\newcommand{\aggregationindex}{\ensuremath{\mathit{i}}}
\newcommand{\dimensionk}{\ensuremath{\mathit{l}}}
\newcommand{\query}{\ensuremath{\mathit{q}}}
\newcommand{\queryindex}{\ensuremath{\mathit{k}}}
\newcommand{\NumData}{\ensuremath{\mathit{V}}}
\newcommand{\NumDimensionsPerData}{\ensuremath{\mathit{D}}}
\newcommand{\NumAggregations}{\ensuremath{\mathit{Q}}}
\newcommand{\NumStatistics}{\ensuremath{\mathit{S}}}
\newcommand{\sketchwidth}{\ensuremath{\mathit{w}}}
 \newcommand{\sketchrows}{\ensuremath{\mathit{r}}}
  \newcommand{\Datastream}{\ensuremath{\mathbb{S}}}
\newcommand{\streamlength}{\ensuremath{\mathit{m}}}
\newcommand{\streamdistinct}{\ensuremath{\mathit{n}}}
 \newcommand{\frequency}{\ensuremath{\mathit{f}}}
  \newcommand{\gstatistic}{\ensuremath{\mathit{g}}}
   \newcommand{\Gstatistic}{\ensuremath{\mathit{G}}}
   \newcommand{\itemindex}{\ensuremath{\mathit{j}}}
  \newcommand{\CountSketch}{\ensuremath\mathit{CS}}
   \newcommand{\UnivSketch}{\ensuremath\mathit{US}}
   \newcommand{\hash}{\ensuremath{\mathit{h}}}
   \newcommand{\layerindex}{\ensuremath{\mathit{l}}}
    \newcommand{\rowindex}{\ensuremath{\mathit{k}}}
\newcommand{\NumLayers}{\ensuremath{\mathit{L}}}
\newcommand{\heavykeys}{\ensuremath{\mathit{k}}}
\newcommand{\conviva}{video-QoE\xspace}
\newcommand{\primary}{frontend\xspace}
\begin{document}

\begin{abstract}
Today's large-scale services  (\eg video streaming platforms, data centers, sensor grids) need diverse real-time \statistics across multiple \aggregations of \multidimensional datasets. However, state-of-the-art frameworks do not offer general and accurate \analytics in real time at  reasonable costs. The root cause is the combinatorial explosion of data \aggregations and the diversity of \statistics we need to monitor simultaneously. We present \HYDRA, an efficient framework for multidimensional analytics that presents a novel combination of  using  a ``sketch of sketches''  to avoid the overhead of monitoring exponentially-many \aggregations and universal sketching to ensure accurate estimates for multiple statistics. We build \sysname as an Apache Spark plugin and address practical system challenges  to minimize overheads at   scale. Across  multiple real-world and synthetic multidimensional datasets, we show that \sysname can achieve robust error bounds and is an order of magnitude more efficient in terms of operational cost and memory footprint        than
existing frameworks (\eg Spark, Druid) while ensuring interactive estimation times.  
\end{abstract}
\maketitle

\pagestyle{\vldbpagestyle}
\begingroup\small\noindent\raggedright\textbf{PVLDB Reference Format:}\\
\vldbauthors. \vldbtitle. PVLDB, \vldbvolume(\vldbissue): \vldbpages, \vldbyear.\\
\href{https://doi.org/\vldbdoi}{doi:\vldbdoi}
\endgroup
\begingroup
\renewcommand\thefootnote{}\footnote{\noindent
This work is licensed under the Creative Commons BY-NC-ND 4.0 International License. Visit \url{https://creativecommons.org/licenses/by-nc-nd/4.0/} to view a copy of this license. For any use beyond those covered by this license, obtain permission by emailing \href{mailto:info@vldb.org}{info@vldb.org}. Copyright is held by the owner/author(s). Publication rights licensed to the VLDB Endowment. \\
\raggedright Proceedings of the VLDB Endowment, Vol. \vldbvolume, No. \vldbissue\ %
ISSN 2150-8097. \\
\href{https://doi.org/\vldbdoi}{doi:\vldbdoi} \\
}\addtocounter{footnote}{-1}\endgroup

\ifdefempty{\vldbavailabilityurl}{}{
\vspace{.3cm}
\begingroup\small\noindent\raggedright\textbf{PVLDB Artifact Availability:}\\
The source code, data, and/or other artifacts have been made available at \url{\vldbavailabilityurl}.
\endgroup
}

\section{Introduction}
\label{sec:introduction}


Many large-scale infrastructures (\eg Internet services, sensor farms, datacenter monitoring) produce \textit{multidimensional} data streams that are  growing both in data volume and dimensionality~\cite{macrobase, asta, gorilla, linkedin}. These multidimensional data contain  measurements of metrics along with metadata that describe said measurements across domain-specific dimensions. For instance,  video streaming services analyze user  experience issues     across dimensions, such as ISP, CDN, Device, City, \etc~\cite{junchenconext, CFA}. We see similar trends in other domains \eg network and  data center monitoring~\cite{aws1,aws2, univmon}.


In these settings, analysts  need \textit{interactive} and \textit{accurate} estimates of {\em diverse}  \statistics across \textit{multiple data \aggregations} of their data.  For instance,  video analysts want to monitor different statistics of viewer quality  across subpopulations of viewers (\eg entropy of bitrate in each major US city, \etc)~\cite{CFA}. 
 Similarly,  network  operators want  to analyze traffic  grouped by  combinations of their 5-tuple (srcIP,  dstIP, srcPort, dstPort, protocol)~\cite{univmon}.  \antonis{This is analogous to classical OLAP cube applications where  the number of cube vertices grows exponentially as more \aggregations are admitted.}


In such  {\em     multidimensional telemetry} settings, we ideally want frameworks offering high fidelity and interactive estimates at  low operational cost. However, there are two fundamental  challenges. First,  there is a combinatorial explosion of data subpopulations to monitor, which can result in exponential overhead in operational costs and resources. Second, estimating multiple statistics entails compute and/or memory overhead  proportional to the number of statistics of interest. 

We find that existing frameworks are fundamentally limited in terms of the tradeoff across operational cost, accuracy, and estimation latencies they can offer. Exact analytics frameworks (\eg  Spark~\cite{ spark},
Hive~\cite{hive}, Druid~\cite{druid}) that rely on horizontal resource scaling entail  
poor cost-performance tradeoffs as datasets become larger. While 
  approximate analytics~\cite{approximate} (\eg sampling- or    sketch-based 
analytics) can  trade off estimation accuracy for lower cost and improved interactivity, 
these too suffer  undesirable tradeoffs.  For instance, sampling-based approaches
provide generality across metrics and can handle many \aggregations, but  their accuracy guarantees can be weak. On the other hand,    sketch-based analytics (\eg~\cite{moment, opensketch,  
beaucoup, ting, ting2, coopstore, aqua, agarwal,ben2017constant,basat2020faster,basat2021salsa}) can offer robust accuracy guarantees, but  cannot address the combinatorial explosion of data \aggregations and also incur per-statistic effort.

\antonis{In this paper, we present \sysname, a  framework for efficient and general analytics over multidimensional data streams. \sysname builds on the novel combination of two key ideas.} 
First, to tackle the combinatorial explosion of \aggregations, we use  a ``sketch of sketches'' that enables memory efficient data stream summarization. This reduces the framework's data-resident memory footprint by one to two orders of magnitude compared to Spark- and
Druid-based alternatives and offers robust and provable accuracy guarantees. Second, to provide high-fidelity estimations simultaneously for many statistics, we leverage universal sketching~\cite{univmon}. Unlike canonical sketch-based approaches that deploy one custom sketch type per statistic~\cite{moment, ting, ting2}, a universal sketch estimates multiple different \statistics with only one sketching instance.  

\antonis{To the best of our knowledge, \HYDRA is the first work  to: (a) propose the  combination   of a sketch-of-sketches   with    universal sketching  for the multidimensional telemetry problem. While  some prior works have proposed the concept of a sketch-of-sketches, they do so for more narrow estimates of interest and do not demonstrate practical system implementations supporting a broad range of estimates (\eg~\cite{considine});  (b) analytically prove the theoretical guarantees of such a construction; and  (c) design a practical end-to-end system design and implementation of this idea using the theoretical analysis.}  
We build a  prototype \sysname on Apache Spark but note that  our core   design is  platform agnostic and can be ported to other streaming/batching systems as well~\cite{druid, hadoop, kafkastreams}.  We also implement  practical  optimizations to mitigate compute bottlenecks to further  reduce \sysname's  runtime and cost.

We evaluate \sysname using two real-world datasets; (1) a 2h-long, January 2019 CAIDA trace from the equinix-NYC vantage point~\cite{caida, caida2} and (2) an anonymized real-world trace of video QoE from a video analytics provider capturing the perceived QoE of viewers of a US-based content provider~\cite{conviva}.
To further evaluate the sensitivity of \sketch, we also leverage   a
synthetic multidimensional dataset drawn from a Zipf distribution with different parameter values~\cite{coopstore, ting}. 

We compare \sysname against six baselines: A native Spark-SQL implementation for exact analytics, a Spark-based  implementation that uniformly samples incoming data, a sketch-based approach that allocates one universal sketch instance per\aggregation, VerdictDB~\cite{verdictdb} (a sampling-based alternative) and two key-value based   implementations
(on Apache  Spark and Druid) that pre-aggregate data at ingestion time and provide precisely accurate analytics . 

Our evaluation shows that: (1) \sysname offers robust accuracy (mean error across statistics $\leq$5\% with 90\% probability) at 1/10 of the operational cost of exact analytics frameworks; (2) \sysname's configuration heuristics ensure close to optimal accuracy-memory tradeoffs; (3) \sysname's memory footprint scales sub-linearly with dataset size and number of data \aggregations. Combined with performance optimizations that improve end-to-end runtime by 45\%, \sysname offers  7-20$\times$ better query latency  than Spark- and Druid-based alternatives.
\section{Background and Motivation}
\label{sec:motivation}

In this section, we present several motivating scenarios, introduce key aspects of multidimensional telemetry, and discuss the limitations of existing analytics frameworks.

\subsection{Motivating Scenarios}

\mypara{Video Experience Monitoring} To maintain their  ad- and/or subscription-driven revenues,   video providers need to detect issues that can degrade viewer experience. To that end, analysts first collect  video session summaries (\ie per viewer measurements of video quality) and use them to \antonis{periodically (\eg every minute)} compute \statistics of various video quality metrics. This allows them to monitor viewer experience across multiple \textit{\subpopulations} of viewers~\cite{junchenconext,CFA,surus}. For instance, to track the entropy of bitrate and the L1 Norm of buffering ratio -- a common indicator of streaming anomalies -- for viewers in different cities, analysts may want to estimate the following query:

\begin{verbatim}
    SELECT City, Entropy(Bitrate), L1Norm(Buffering)
    FROM SessionSummaries
    GROUP BY City
\end{verbatim}

\mypara{Network Flow Monitoring}  Network operators commonly rely on control-plane  telemetry frameworks~\cite{gigascope, sonata} for tasks such as traffic engineering~\cite{deriving, univmon}, attack and anomaly detection~\cite{fastmonitoring} or forensics~\cite{worm}. These frameworks \antonis{periodically} monitor performance metrics (\eg flow distributions, per-flow packet sizes, latency, \etc) across different \aggregations of flows, \ie network flows grouped across combinations of packet header fields.  For instance, the operator might want to track indicators of DDoS attacks as follows:
\begin{verbatim}
    SELECT dstIP, Cardinality(srcIP)
    FROM FlowTrace
    GROUP BY dstIP
\end{verbatim}

These use cases share a problem structure that is characteristic of  \textit{multidimensional telemetry}. Queries that involve estimating many statistics across many data subpopulations appear in various settings, such as A/B testing~\cite{AB1, AB2}, exploratory data analysis~\cite{macrobase, exploration1}, operations monitoring~\cite{scuba}, and sensor deployments~\cite{joltik}. 

\subsection{ Requirements and Goals}

Drawing on these use cases,  we derive three key properties of  the telemetry problem we want to tackle: 

\begin{packedenumerate}
\item \textbf{Multidimensional Data}: We define a \multidimensional \datapoint as $\datapointvar = (\dimensionval_1, \dots,\ \dimensionval_\numdimensions,\ \metricval)$, where $\dimensionval_\dimensionindex$ is the value of a dimension $\dimensionvar_\dimensionindex$ and $\metricval$ is the value of metric $\metricname$.  In video, quality metrics  might be bitrate or buffering time whereas \dimensions might be the viewer's location, their player device, their ISP or CDN. Metrics and dimensions are domain- and  usecase-specific. 

\item \textbf{Analytics on Data Subpopulations}: Analytics are estimated in parallel across subpopulations of the input data. A \aggregation $\aggregationval_\aggregationindex$ is a collection of \datapoints $\{\datapointvar_i\}$ such that all  $\datapointvar_i \in \aggregationval_\aggregationindex$ match on a  subset of \dimension values. With a slight abuse of notation, we  define   $\aggregationval_\aggregationindex$ using this set of \dimension values, \ie $\aggregationval_\aggregationindex = \{\dimensionvar_{\aggregationindex,1} =  \dimensionval_{\aggregationindex,1} \land \dots \land \dimensionvar_{\aggregationindex,\dimensionk} = \dimensionval_{\aggregationindex,\dimensionk}\}$, where $\{\dimensionvar_{\aggregationindex,1},\dots, \dimensionvar_{\aggregationindex,\dimensionk}\} \subseteq \{\dimensionvar_1, \dots, \dimensionvar_\NumDimensionsPerData\}$; e.g.,   a data \aggregation  could be  NYC-based viewers using  AppleTV.

\item \textbf{Multiple statistics to estimate:} For each subpopulation, the operator  wants to estimate various \statistics \eg heavy hitters, entropy, cardinality, \etc A query $\query_\queryindex$ specifies a set of \aggregations $\{\aggregationval_\aggregationindex\}$ and a statistic $\gstatistic$ to estimate using the values $\metricval_\itemindex$ of $\datapointvar_\itemindex \in \aggregationval_\aggregationindex$. 

\end{packedenumerate}

In practice, operators have three requirements: (1) High fidelity for a broad set of statistics \ie robust, apriori configured, error bounds for as many statistics as possible; (2) Near real-time estimations and; (3) Low  footprint (e.g., cloud compute and memory costs).

\subsection{Prior Work and Limitations}

Prior work  has focused on developing two broad (but not mutually exclusive)  approaches for multidimensional telemetry.  The first enables distributed computations by horizontally scaling the frameworks' resources.  The second enables approximate analytics that sacrifice estimation accuracy for improved performance. 

\begin{packedenumerate}
 \item \mypara{Horizontal resource scaling} Here we find SQL and NoSQL analytics frameworks whose distributed design reduces estimation latency through horizontal scaling of server resources (\eg Spark~\cite{spark}, Hive~\cite{hive}, Hadoop~\cite{hadoop}, Dremel~\cite{dremel}, Druid~\cite{druid}, Flink~\cite{flink}). These frameworks scale their clusters with input data and can provide \textit{precisely exact} estimations. However, as data volume and dimensionality grow, i) deploying such clusters becomes increasingly expensive and ii) the continuous addition of resources eventually results in marginal estimation latency gains due to data shuffling overheads~\cite{blinkdb}. 
 \item \mypara{Approximate Analytics} Approximate analytics frameworks leverage sampling or data summarization algorithms in order to trade off accuracy for performance and cost. Sampling-based frameworks allow for low estimation latency by sampling data either online, at query time (the analyst applies the sampling operators and parameters as part of the estimation query)~\cite{QTspark, scope,  dremel, pig, hive} or offline, by means of a pre-processing step that creates data samples to be used at query time~\cite{blinkdb, congressional, aqua, sidirourgos}.

While there is a rich  body of  sampling-based efforts, these  have two key  shortcomings.  First, their accuracy guarantees are in the form of confidence bounds that are computed after query estimation has taken place and that depend on the statistic being estimated and the number of samples used~\cite{dbest}. Therefore, when an estimate does not meet accuracy requirements, frameworks  often fall back to using other samplers or precisely exact estimates. Second, to offset the resource overheads of producing offline samples, frameworks often make hard apriori  choices on what subsets of their data to create samples for (\eg BlinkDB~\cite{blinkdb} that mines query logs for frequently queried data or VerdictDB~\cite{verdictdb} that allows users to identify popular data tables).
In contrast, sketch-based analytics ensure bounded accuracy-memory  trade-offs for arbitrary workloads in sub-linear space~\cite{cormode2012synopses, count-min, loglog, univmon, opensketch}. These frameworks build compact data summaries at ingestion and use them to estimate statistics with apriori provable error bounds.
\end{packedenumerate}

We can also combine horizontal resource scaling and approximations. For instance, both Apache Spark and Druid allow for data summarization at ingestion time such that incoming data are stored as a key-value store where the keys are distinct $\langle\aggregationval_\aggregationindex,\ \metricval_j\rangle$ tuples and the values are their respective counts. These hybrid approaches enable data reduction without compromising the framework's ability to offer precise estimations. 

\mypara{Qualitative Analysis} Next, we analyze the   overhead to  process  multidimensional streams and the resident data  cost using the above solutions. Let us denote the  dataset size (in terms of the number of \datapoints) as $\NumData$ and let $Q$ be the number of data \aggregations. Assuming $\NumDimensionsPerData$ dimensions and that each \datapoint belongs in $2^\NumDimensionsPerData$ different \aggregations, then $Q=\mathcal{O}(2^\NumDimensionsPerData  \times \NumData)$. In practice, assuming each dimension has cardinality $\dimCardinality$, there are $\mathcal{O}(\dimCardinality^\NumDimensionsPerData)$ \aggregations in the dataset. In practice, we find that $\mathcal{O}(2^\NumDimensionsPerData  \times \NumData)$ is a tighter empirical bound for $Q$ and we will use that moving forward. In addition, given that the framework needs to  estimate $\mathcal{O}(\NumStatistics)$ different statistics, the number of \statistics to be estimated is an exponential $\mathcal{O}(\NumAggregations \times \NumStatistics) = \mathcal{O}(2^\NumDimensionsPerData \times \NumData \times \NumStatistics)$. Assuming, as it is the case for frameworks for precisely exact analytics, that the CPU and memory requirements for data ingestion and/or statistics estimation scale linearly with \aggregations,  we see that the framework's runtime, resource requirements and cost also scale exponentially.

\begin{figure}[tb]
        \includegraphics[width=1\linewidth]{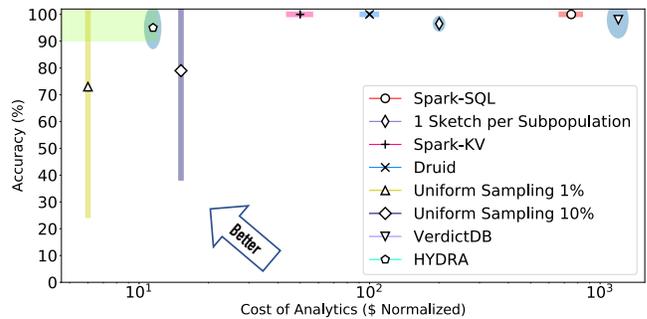}
        \caption{End-to-end cost of analytics. The green-shaded region indicates the ideal
        operating regime for \sysname.}
        \label{fig:sec7:cost}
\end{figure}

\mypara{Quantitative Analysis} To corroborate this qualitative analysis, we  evaluate the operational cost for several analytics frameworks when used in a multidimensional context (Figure~\ref{fig:sec7:cost}). Specifically, we measure their \$ cost as a function of their observed accuracy when asked to estimate in real-time 4 \statistics from a 130GB real-world dataset with approximately 5.6 million data \aggregations. Following the typical cloud billing model~\cite{pricing}, we use the total runtime $\times$ the number of cluster nodes used (20) as a proxy for the \$ cost. We  provide a detailed description of our experimental setup and baselines in \S\ref{sec:eval}.  Ideally, we need a framework whose cost-accuracy tradeoff lies in the  top-left, green region, \ie it offers the accuracy of a precise analytics framework  at the cost of sampling. However, we observe that the cost gap between the cheapest (1\% uniform sampling) and the most expensive baselines (precisely accurate Spark-SQL) is two orders  of magnitude wide. A sketch-based approach where the framework allocates one  sketch per \aggregation, while cheaper than Spark-SQL, remains expensive as it allocates exponentially many sketch instances, thus incurring high memory overheads. As discussed in \S\ref{sec:eval}, this baseline uses universal sketching that can simultaneously estimate all 4 statistics of interest per \aggregation with one sketch. Finally, precise baselines that summarize data at ingestion time, such as Apache Druid and Spark (denoted as Spark-KV) lie in the middle between Spark-SQL and sampling.

\mypara{Key takeaways}
 Multidimensional telemetry  entails a combinatorial explosion of data \subpopulations and \statistics to monitor. 
 Balancing cost, accuracy, and estimation latency is challenging due to the  combinatorial explosion in data \aggregations and the number of \statistics the framework needs to enable. Existing frameworks can only meet a subset of these goals, which motivates us to rethink how to support such analytics workloads at scale.


\section{\sysname: System Overview}
\label{sec:overview}

To support multidimensional workloads at scale, we envision \sysname as a streaming, sketch-based OLAP framework~\cite{druid, olapurl}. \sysname's distributed design (illustrated in Figure~\ref{fig:sec3:design}) includes one \primary and multiple worker nodes and its input are i) streams of multi-dimensional data, ingested in parallel at the worker nodes and ii) estimation queries provided by the  operator to the \primary node. \sysname implements two logical operations: \DataIngestion and \QueryEstimation. 

\begin{enumerate}[align=left, leftmargin=0pt, labelindent=\parindent, listparindent=\parindent, labelwidth=0pt, itemindent=!]
\item \textbf{\DataIngestion:} \DataIngestion happens at the worker nodes.  Each worker summarizes an incoming data stream to a local instance of \sketch. Data summarization happens on a per-\aggregation basis. Specifically, for every incoming \datapoint, \sysname first identifies what \aggregations the \datapoint belongs in and correspondingly updates a novel sketching primitive that we discuss below, \sketch. \sketch instances are configured to ensure accuracy guarantees and low  memory footprint (\S\ref{subsec:configs}).

\item \textbf{\QueryEstimation:} \QueryEstimation involves both \primary and worker nodes. The \primary receives operators' queries with i) the statistics to estimate \antonis{and ii) the set of \aggregations to estimate these statistics on}. 
Using this information, it creates a query plan that is distributed to the worker nodes who execute the queries. After estimation has taken place, the \primary node collects the results from the workers and returns them to the operator.

\end{enumerate}
\begin{figure}[tb]
        \includegraphics[width=1\linewidth, height=120pt]{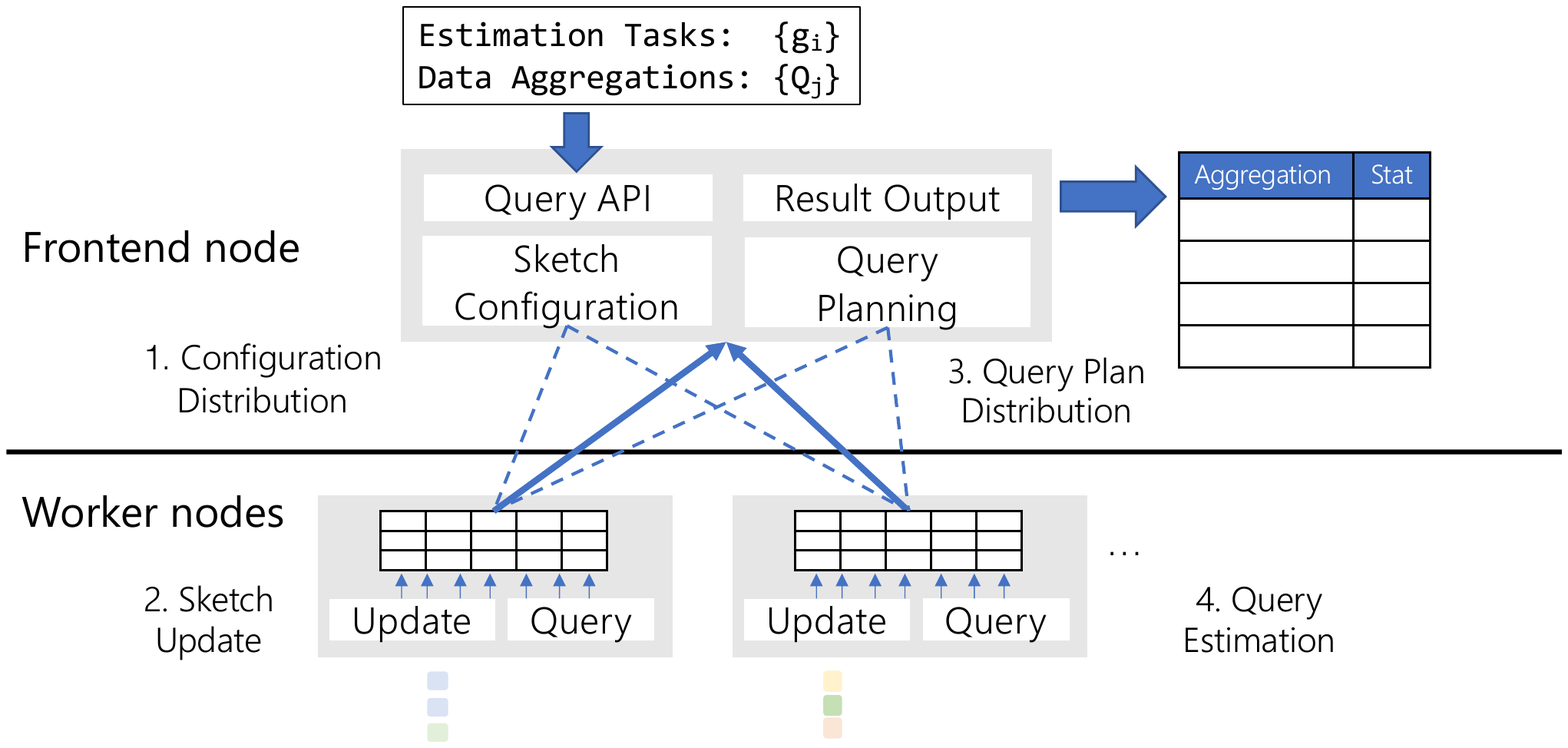}
        \caption{\sysname's example workflow. Workers perform data ingestion and querying. The \primary node exposes the query API to the operator and performs configuration and query plan dissemination.}
        \label{fig:sec3:design}
\end{figure}

\antonis{While the idea of using sketching to optimize analytics is not new, in our context canonical sketch-based approaches will need to  instantiate up to $\mathcal{O}(\NumStatistics)$ sketch instances per \aggregation. This is  inefficient as the framework needs exponentially many sketch instances, despite a sketch's ability to summarize a \aggregation's data in sub-linear space. }

\mypara{Key Idea}
To avoid the above limitations of conventional approaches, \HYDRA uses a novel combination of two 
 ideas. 

\begin{figure}[tb]
        \includegraphics[width=1\linewidth, height=120pt]{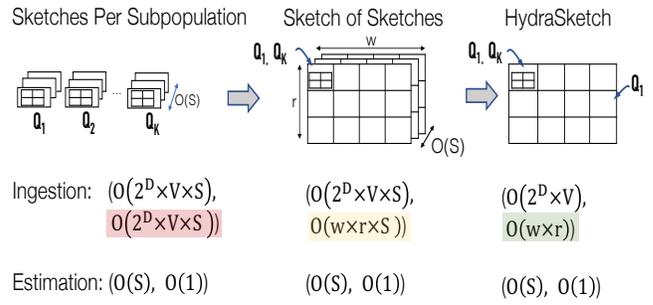}
        \caption{Comparison of Ingestion and Estimation (CPU time, space complexity) for different
        sketch-based designs. We highlight the theoretical improvements in space
        complexity from \sysname's design ideas.}
        \label{fig:insights}
\end{figure}

First, we observe that  we can  reduce  the exponential $\mathcal{O}(\aggregationval) = \ \mathcal{O}(2^\numdimensions \times \NumData)$ \ingestion-time, memory cost of sketch-based approaches through a novel ``sketch of sketches''.  We show that through a $\sketchwidth  \times \sketchrows$ array of sketch instances (Fig.~\ref{fig:insights}), where  $\sketchwidth \times \sketchrows \ll 2^\numdimensions \times \NumData$, \sysname reduces the memory cost of  estimating $\mathcal{O}(\NumStatistics)$ statistics from $\mathcal{O}(2^\numdimensions \times \NumData \times S)$ to $\mathcal{O}(\sketchwidth \times \sketchrows \times \NumStatistics)$. The intuition is that, unlike canonical sketch-based approaches, we can summarize multiple \aggregations into  one sketch instance and then query it with  predictable  error~\cite{considine,ting}.

Second, to reduce the need for instantiating $\mathcal{O}(\NumStatistics)$ different sketch types for $\mathcal{O}(\NumStatistics)$ \statistics,  \sysname leverages \textit{universal sketching}~\cite{univmon, zero_one_freq_law}.      Universal sketching enables replacing $\mathcal{O}(\NumStatistics)$ sketches with a single sketch \antonis{that simultaneously estimates multiple different statistics per subpopulation. This means that as long the desired statistics can be estimated with a universal sketch, there is no limit in the number of statistics that the sketch can estimate with a fixed memory footprint.} This design choice further reduces the framework's space complexity from $\mathcal{O}(\sketchwidth\times \sketchrows
\times \NumStatistics)$ to $\mathcal{O}(\sketchwidth\times \sketchrows)$. 

\antonis{While these two ideas (sketch of sketches and universal sketching)  have been independently proposed in other narrower contexts, to the best of our knowledge, we are the first effort to: (1) propose the combination of these ideas to tackle the multidimensional telemetry problem; (2) rigorously prove the accuracy-resource tradeoffs of this construction; and (3)  demonstrate a practical end-to-end realization atop state-of-art horizontally scalable ``BigData'' platforms. }

\section{\sysname Detailed Design}  
\label{sec:sketch}

We first provide background on sketching  to set up the intuition for \sketch. We then introduce the basic \sketch algorithm, formally prove its error bounds, and devise \sketch configuration strategies. Table~\ref{tab:notation} summarizes the notation we use.

\begin{table}[tb]
\centering
\caption{\sysname Notation. The upper subsection introduces notation specific to the sketch-of-sketches and the lower to universal sketches.}
\begin{small}
    \begin{tabular}{cl} 
    \toprule
    \textbf{Notation} & \textbf{Definition}  \\ 
    $\NumData$ & Input size \\
    $\NumDimensionsPerData$ & Number of data dimensions \\
    $Q$        & Number of data \aggregations \\
    $\NumStatistics$ & Number of \statistics \\
    $\Datastream_{\streamlength,\streamdistinct}$ & Stream of length $\streamlength$ and $\streamdistinct$ distinct keys \\
    \midrule
     $\sketchwidth$ & Number of sketches per 2D-sketch row \\
     $\sketchrows$  & Number of rows in 2D-sketch  \\
     $(\epsilon, \delta)$  & $0<\epsilon<1$ as additive error and $\delta$ is the probability that \\
     & the result error is not bounded by $\epsilon$ (failure probability)\ \\
     \midrule 
     $\sketchwidth_{\UnivSketch}$ & Number of counters per universal sketch row \\
     $\sketchrows_{\UnivSketch}$  & Number of universal sketch rows \\
     $(\epsilon_{\UnivSketch}, \delta_{\UnivSketch})$ & $0<\epsilon_{\UnivSketch}<1$ as additive error in universal sketch and\\
     &$\delta_{\UnivSketch}$ is the failure probability \\
     $\NumLayers$                 & Number of universal sketch layers \\
     $\heavykeys$                 & Number of keys in universal sketch heavy hitter heaps \\
     
    \bottomrule
    \end{tabular}
\end{small}

\label{tab:notation}
\end{table}

\subsection{Background on Sketching} 
\label{subsec:backgroundsketch}
Let $\Datastream_{\streamlength,\streamdistinct}$ denote a data stream with length $\streamlength$ and $\streamdistinct$ distinct keys. Suppose we want to estimate a frequency-based \statistic of the keys (\eg entropy, cardinality, frequency moments). A natural design is to estimate the desired statistic with a key-value  data structure tracking the frequency per key. For instance, for frequency estimation, we can maintain and increment one counter per key. While correct, the  space complexity is linear in $\streamdistinct$ and not space efficient (Figure~\ref{ins1}).
\begin{figure}[h]
        \centering
        \includegraphics[width=1\linewidth]{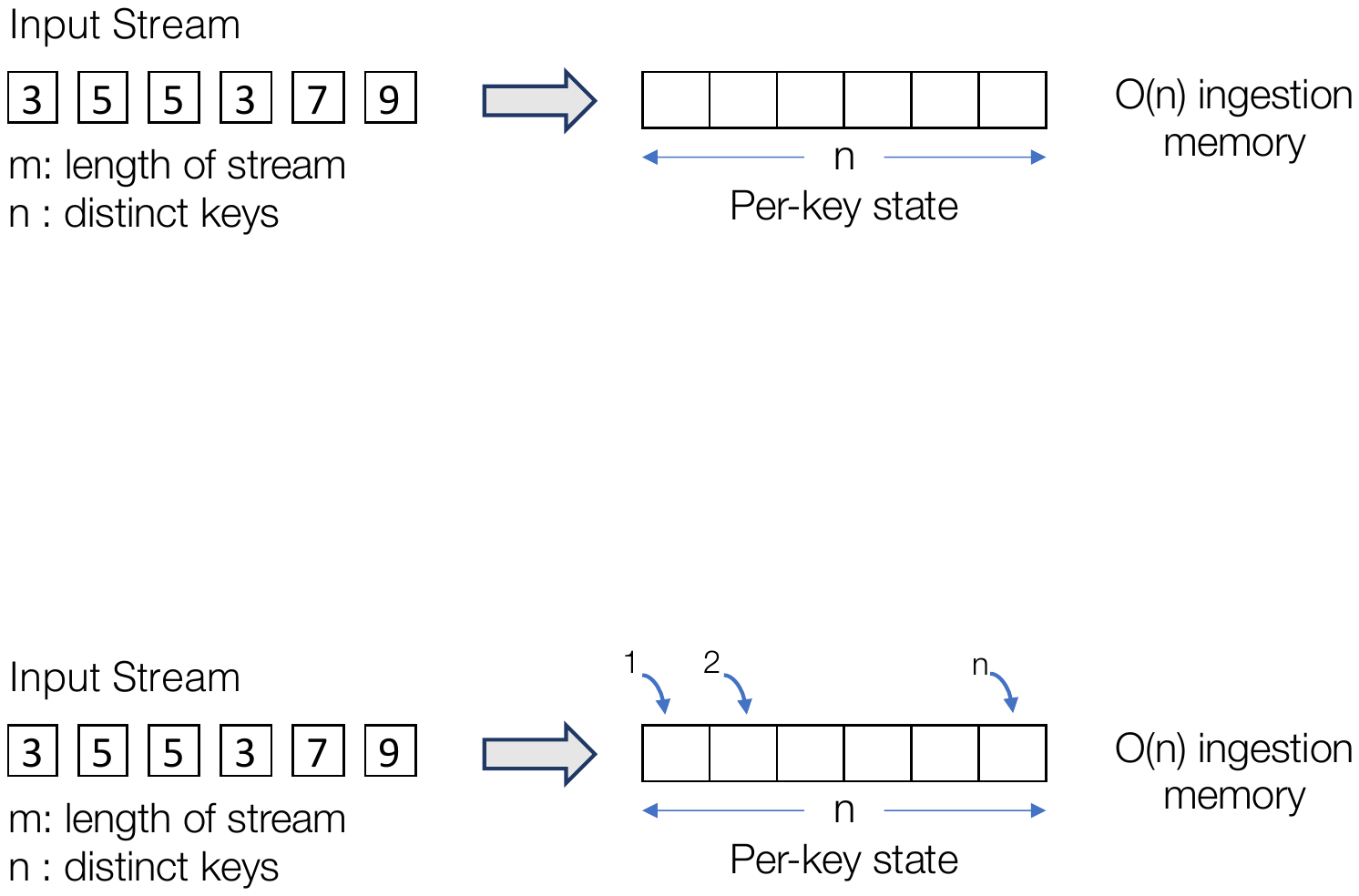}
        \caption{Maintaining per-key state is not space efficient}
        \label{ins1}
\end{figure}

\mypara{Hash-based mappings for space efficiency} To ensure sub-linear (in $\streamdistinct$) space complexity, sketching algorithms do not maintain per-key state but, instead, map multiple keys to the same counters via  hashing. For instance, a simple sketch for frequency estimation   consists of $\sketchwidth$ integer counters, where $\sketchwidth \ll \streamdistinct$.  Based on the hash of the key, an element gets mapped to a counter, which is then incremented to maintain an estimate of that key's frequency. Naturally,  multiple keys {\em colliding}  introduces some  errors (Figure~\ref{ins2}). 
\begin{figure}[h]
        \centering
        \includegraphics[width=1\linewidth]{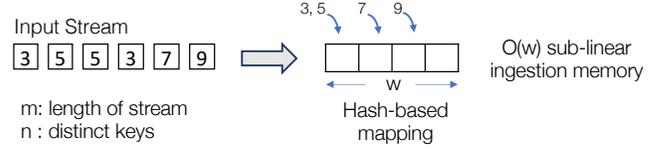}
        \caption{Hashing enables sub-linear memory complexity}
        \label{ins2}
\end{figure}

\mypara{Multiple independent updates for tighter error bounds} As defined, this basic mechanism only provides a small probability that the estimation error will lie within a desirable range of error values~\cite{ams}.  To overcome this, sketches use independent instances (\eg $\sketchrows$ arrays) of the counter structure of length $\sketchwidth$. Each vector  of length $\sketchwidth$ has its own hash function and the $\sketchwidth$ hash  functions are pairwise independent. Thus, ingesting a stream element now  translates to $\sketchrows$ update operations (\eg incrementing $\sketchrows$ integer counters instead of one). For each key, this sketch produces $\sketchrows$ different estimates of the statistic of interest. The final estimate will be a summary of $\sketchrows$ estimates (\ie min, median \etc) (Figure~\ref{ins3})~\cite{cormode2012synopses}. This amplifies  the probability that the estimation error lies within the desired range.
\begin{figure}[h]
        \centering
        \includegraphics[width=1\linewidth]{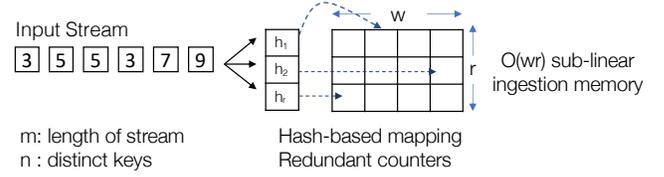}
        \caption{Independent hashing improves accuracy.}
        \label{ins3}
\end{figure}

\subsection{Tackling Subpopulation Explosion}

For now, let us make the simplifying assumption (which we relax later) that
our system only needs to estimate one \statistic (\eg entropy) per     data
\aggregation. 
Similar to Figure~\ref{ins1}, a starting point for our    design
would be to maintain per-\aggregation state, \ie allocate one sketch instance
for each of the $\mathcal{O}(2^\numdimensions \times \NumData)$ distinct  \aggregations.
This approach\antonis{, similar to an OLAP cube,} is not scalable as it requires as many sketches as  the number of data \aggregations.

To avoid keeping per-\aggregation state, we borrow from the   first intuition that we saw in the sketch construction in the background (Figure~\ref{ins2}). The basic sketch construction avoids maintaining per-key state by allowing multiple keys to explicitly collide in a hashed key-value store whose size is less than the number of unique elements. 

Note that the basic sketch is maintaining a single counter per array entry but we want to be able to estimate  some statistical summary of a \aggregation instead. Therefore, instead of keeping a single counter per array entry, we maintain a sketch-per-entry. This brings us to the following construction (Figure~\ref{ins4}). We consider a
single array of $\sketchwidth$ (\eg $\sketchwidth \ll 2^\numdimensions \times  \NumData$) \emph{sketches}. For each ($\aggregationval_i$, $\metricval_j$) pair, we hash the $\aggregationval_i$ and map it to one of the $\sketchwidth$ sketches, thus colliding multiple subpopulations to the same sketch. Then, we update the sketch with $\metricval_j$ and at query time, we estimate the statistic for $\aggregationval_i$.

\begin{figure}[h]
        \centering
        \includegraphics[width=1\linewidth]{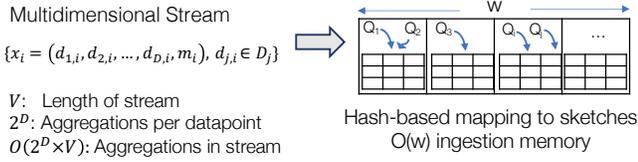}
        \caption{Hash-based mapping of \aggregations to a sketch vector.}
        \label{ins4}
\end{figure}

Analogous to the basic sketch from \S\ref{subsec:backgroundsketch}, by mapping multiple \aggregations to one sketch, this baseline construction will have some estimation error. To control this, we extend the idea of using redundant counter vectors and pairwise-independent hashes (Figure~\ref{ins3}). 
That is, we use $\sketchrows$ \emph{arrays of $\sketchwidth$
sketches} and use $\sketchrows$ pairwise-independent hash functions to map each \aggregation to one sketch per row (Figure~\ref{ins5}). At query time, we return the median of the $\sketchrows$ estimates. 
\begin{figure}[h]
        \centering
        \includegraphics[width=1\linewidth]{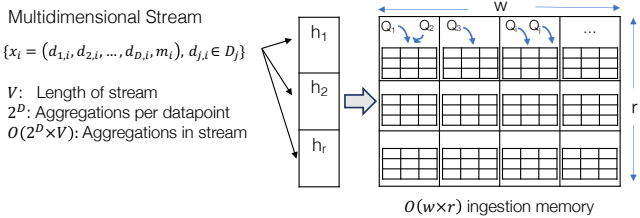}
        \caption{Redundant sketch vectors and pairwise-independent hashes for tighter error bounds.}
        \label{ins5}
\end{figure}

In summary, the above sketch-of-sketch construction maintains a {\em 2D array of sketches}  to  track multiple 
\aggregations. This  reduces  the memory cost of ingestion to 
$O(\sketchwidth\times\sketchrows)$ \ie sub-linear in \aggregations.    
In \S\ref{subsec:proof}, we formally prove the memory-accuracy 
tradeoffs for this construction.

\subsection{ Enabling Multiple Statistics}

The above discussion is based on the simplifying assumption that we
need to only estimate one \statistic. Since sketching algorithms      are
generally custom designed per statistic,  to  support  $O(\NumStatistics)$ different
\statistics,   we need to create  $O(\NumStatistics)$ sketch-of-sketches instances.
This  raises two natural concerns. 
First, the total memory cost of this solution becomes $O(\sketchwidth\times\sketchrows\times\NumStatistics)$,    \ie
linear to the number of \statistics of interest. Second, the framework     cannot
offer generality as it cannot estimate \statistics that are not already allocated; \eg
some future analysis might require estimating the entropy of a metric 
but the framework has not instantiated an entropy-specific sketch-of-sketch instance.

Our insight here is that the sketch of sketches structure can be combined    with
universal sketching~\cite{univmon}  to achieve the desired generality across
statistics. A universal sketch is a sketching primitive that enables    the
simultaneous estimation of multiple different, apriori unknown,   statistics
with one sketch instance. Therefore, instead of  a sketch-of-sketches
per statistic, we can use  one sketch of universal sketches. We formally
prove this in \S\ref{subsec:proof} and 
show that \sysname's ingestion cost drops to $\mathcal{O}(\sketchwidth \times \sketchrows)$.

\mypara{Background on universal sketches} A universal sketch can estimate any \statistic that belongs to a broad class of functions, known as {\em Stream-PolyLog}~\cite{braverman2014universal, zero_one_freq_law, univmon}. We denote each function in {\em Stream-PolyLog}, as $\Gstatistic\text{-sum} = \sum{\gstatistic(\frequency_\itemindex)}$, where $\frequency_\itemindex$ is the frequency of the $\itemindex$-th unique element in the input stream $\Datastream_{\streamlength, \streamdistinct}$~and $\gstatistic$ is a function defined over $\frequency_\itemindex$. If $\gstatistic$ is monotonically increasing and upper bounded by $\mathcal{O}({\frequency_\itemindex}^2)$, then $\Gstatistic\text{-sum}$ can be computed by a single universal sketch with polylogarithmic memory. \antonis{Universal sketch provides $\epsilon$-additive error guarantees to {\em Stream-PolyLog} and demonstrates better memory-accuracy tradeoffs than the composition of custom sketches when estimating multiple statistics from {\em Stream-PolyLog} in practice~\cite{univmon}. Key statistics of interest can be formulated via a suitable $\Gstatistic$-sum $\in$ Stream-PolyLog. 
Such examples include: $\alpha$-Heavy Hitters ($f_i \geq \alpha\Sigma f_i$),  L1-Norm  ($\Sigma f_i$), L2-Norm ($\Sigma f_i^2$) Entropy ($-\Sigma{\frac{f_i}{L_1}\text{log}\frac{f_i}{L_1}}$), and Cardinality ($|\langle f_1,\dots, f_N \rangle|$). Note that there are many  other \statistics that can be estimated by combining statistics, such as standard deviation, histograms, mean, or median. A statistic that cannot be directly estimated by \sketch is quantiles.}

The basic building block of universal sketches are L2-Heavy Hitter (HH)        sketches
\eg Count-sketch~\cite{cs1}. Each count-sketch maintains   $\sketchrows_{\CountSketch}$
arrays of $\sketchwidth_{\CountSketch}$ counters each,     $\sketchrows_{\CountSketch}$
pairwise-independent hash functions and a max-heap keeping track of the top-$k$   Heavy
Hitters in the sketch; When updating each count-sketch with a new  data item, the sketch
updates a randomly located counter in every row based on the corresponding hash index to
keep track of that data item's frequency. The top-$k$ HH heap is subsequently    updated
to reflect the addition of the new item. A universal
sketch consists of $\NumLayers$ layers of count-sketches. Each count sketch applies an
independent 0-1 hash function $\hash_{\layerindex\in[0,\ \NumLayers)}$  to the   input
data stream to sub-sample at every layer (from the previous layer).
These layers then track the heavy hitters, \ie the important contributors    to
the $\Gstatistic$-sum. 

The intuition here is that the layered structure of  the universal sketch is designed     for
sampling representative elements with diverse frequencies, and these elements can be  used
to estimate \Gstatistic-sum with bounded errors. If only one layer of heavy hitter sketch
were  used, the estimations would lack representatives from less frequent elements.
The heavy-hitters at each layer are processed iteratively from the bottom layer to   the
top and the recursively aggregated result is used to compute the desired statistic. This
is an unbiased estimator of \Gstatistic-sum with bounded additive errors (Theorem~\ref{theorem:1}).

\begin{theorem}[\cite{univmon, zero_one_freq_law}]
Given a stream  $\Datastream_{\streamlength, \streamdistinct}$ let us  consider a Universal Sketch $\UnivSketch$  with $\NumLayers= \mathcal{O}(\log \streamdistinct)$ layers. If each layer of $\UnivSketch$ provides an   $(\epsilon_{\UnivSketch},\ \delta_{\UnivSketch})$-L2 error guarantee, then $\UnivSketch$     can
estimate any $\Gstatistic$-sum function $\Gstatistic   \in \text{Stream-Polylog}$
to within a   $(1\pm\epsilon_{\UnivSketch})$  factor with probability
$1-\delta_{\UnivSketch}$. Satisfying a $(\epsilon_\UnivSketch,\      
\delta_\UnivSketch)$-L2 error guarantee requires $O(log \streamdistinct)$    Count-Sketch
instances with $w_{\CountSketch}=\mathcal{O}(\epsilon_\UnivSketch^{-2})$ columns
and $r_{\CountSketch}=\mathcal{O}(\log \delta_\UnivSketch^{-1})$ rows.
\label{theorem:1}
\end{theorem}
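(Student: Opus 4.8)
The plan is to analyze the recursive estimator that the universal sketch computes over its $\NumLayers=\mathcal{O}(\log\streamdistinct)$ layers and to show that the per-layer L2 guarantees compose into a single $(1\pm\epsilon_\UnivSketch)$ bound on $\Gstatistic\text{-sum}=\sum_\itemindex \gstatistic(\frequency_\itemindex)$. First I would fix notation for the subsampling hierarchy: the $0$--$1$ hashes $\hash_0,\dots,\hash_{\NumLayers-1}$ induce nested substreams, where a distinct key survives to layer $\layerindex$ iff $\hash_1(\cdot)=\cdots=\hash_\layerindex(\cdot)=1$, so each key reaches layer $\layerindex$ with probability $2^{-\layerindex}$. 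Writing $\mathrm{HH}_\layerindex$ for the heavy-hitter heap at layer $\layerindex$ and $\hat\frequency_\itemindex$ for the Count-Sketch frequency estimate, the answer is built bottom-up via the recursion $\hat Y_\layerindex = 2\hat Y_{\layerindex+1} + \sum_{\itemindex\in\mathrm{HH}_\layerindex}\gstatistic(\hat\frequency_\itemindex)\bigl(1-2\hash_{\layerindex+1}(\itemindex)\bigr)$, with the reported value $\hat Y_0$. The coefficient $(1-2\hash_{\layerindex+1}(\itemindex))$ is the key bookkeeping device: a key that also survives into layer $\layerindex+1$ is already represented in the doubled term $2\hat Y_{\layerindex+1}$ and is therefore cancelled here, so each key contributes its $\gstatistic$-value exactly once.

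The first substantive step is to establish \emph{unbiasedness and coverage}. Telescoping the recursion and taking expectations over the subsampling hashes shows $\mathbb{E}[\hat Y_0]=\Gstatistic\text{-sum}$, up to the error incurred when a key fails to be flagged as a heavy hitter at the layer where it ought to contribute. To control the latter, I would invoke the defining property of the Stream-PolyLog class: because $\gstatistic$ is monotone and $\gstatistic(\frequency)=\mathcal{O}(\frequency^2)$, the $\Gstatistic$-mass is dominated by keys whose frequency is comparable to the L2 norm of their containing substream. Geometric subsampling then guarantees that every such key becomes an L2-heavy hitter at some layer $\layerindex^\ast$, where its relative weight is amplified to the heavy-hitter threshold, so the union of the $\NumLayers$ heaps collectively captures a $(1-\epsilon_\UnivSketch)$ fraction of the total $\Gstatistic$-mass with the stated failure probability.

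The technical heart, and the step I expect to be the main obstacle, is bounding \emph{error amplification through the recursion}. Naively, doubling the running estimate at each of $\mathcal{O}(\log\streamdistinct)$ layers could inflate any per-layer error by $2^{\NumLayers}=\mathrm{poly}(\streamdistinct)$. The crux is to show this blow-up does not occur: I would argue that the expected $\Gstatistic$-mass contributed at layer $\layerindex$ decays geometrically (roughly as $2^{-\layerindex}$ times the total), so the doubling factor is \emph{exactly} compensated by the shrinking per-layer mass. Combining this with the per-layer $(\epsilon_\UnivSketch,\delta_\UnivSketch)$-L2 guarantee---which, via monotonicity and the quadratic bound on $\gstatistic$, converts a Count-Sketch frequency error into an $\epsilon_\UnivSketch$-relative error on each $\gstatistic(\hat\frequency_\itemindex)$ measured against the L2 mass---lets the variance contributions telescope to $\mathcal{O}\bigl(\epsilon_\UnivSketch^2(\Gstatistic\text{-sum})^2\bigr)$ rather than exploding. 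A Chebyshev (or martingale) argument on $\hat Y_0$ then yields the $(1\pm\epsilon_\UnivSketch)$ multiplicative guarantee.

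Finally I would assemble the probability and resource accounting. A union bound over the $\NumLayers=\mathcal{O}(\log\streamdistinct)$ layers, each operated at failure probability $\delta_\UnivSketch/\NumLayers$, keeps the overall failure probability at $\delta_\UnivSketch$ after absorbing the logarithmic factor into constants. For the per-layer L2 guarantee I would cite the standard Count-Sketch analysis: $w_{\CountSketch}=\mathcal{O}(\epsilon_\UnivSketch^{-2})$ columns bound the second-moment collision error so that each estimate is $\epsilon_\UnivSketch$-accurate in L2, and $r_{\CountSketch}=\mathcal{O}(\log\delta_\UnivSketch^{-1})$ independent rows combined by the median amplify the per-row success probability to $1-\delta_\UnivSketch$ via a Chernoff bound. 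Together these give that $\hat Y_0$ estimates any $\Gstatistic\text{-sum}\in\text{Stream-PolyLog}$ to within $(1\pm\epsilon_\UnivSketch)$ with probability $1-\delta_\UnivSketch$, as claimed.
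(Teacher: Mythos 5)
The paper does not prove this theorem at all: it is imported verbatim from the cited universal-sketching works (\cite{univmon, zero_one_freq_law}) and used as a black box, so there is no in-paper proof to compare yours against. Judged against the canonical argument in those references, your outline is essentially the right one: the recursive estimator $\hat Y_\layerindex = 2\hat Y_{\layerindex+1} + \sum_{\itemindex\in\mathrm{HH}_\layerindex}\gstatistic(\hat\frequency_\itemindex)\bigl(1-2\hash_{\layerindex+1}(\itemindex)\bigr)$ with the $(1-2\hash)$ cancellation is exactly the Recursive Sum construction, and you correctly identify the two load-bearing steps --- that the union of per-layer heavy-hitter heaps covers almost all of the $\Gstatistic$-mass because $\gstatistic$ is monotone and $\mathcal{O}(\frequency^2)$-bounded, and that the $2^{\NumLayers}$ doubling is offset by the geometric decay of per-layer mass so the variance telescopes rather than explodes. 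The one place your sketch is still a promissory note rather than a proof is precisely that coverage-plus-variance step: the actual argument requires a case analysis on whether a layer's substream has a dominant L2-heavy element (the ``$\gstatistic$-core'' machinery), and your phrase ``every such key becomes an L2-heavy hitter at some layer $\layerindex^\ast$'' holds only with a probability that must itself be folded into the union bound, alongside the fact that the heaps and the Count-Sketch frequency estimates are themselves random. Those are genuine technical obligations, but they are the known ones from the cited proofs, not a wrong turn; as a blind reconstruction of a result the paper merely quotes, your proposal is sound in structure.
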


\subsection{The \sketch Algorithm} 
\label{subsec:operations}

Combining these ideas gives us the \sketch algorithm.

\begin{enumerate}[align=left, leftmargin=0pt, labelindent=\parindent, listparindent=\parindent, labelwidth=0pt, itemindent=!]
\item \textbf{Updating \sketch:} 
Updating \sketch with a \datapoint,
$\datapointvar_\itemindex = <\dimensionval_{1, \itemindex},\ \dimensionval_{2,\itemindex},\
\dots, \dimensionval_{\NumDimensionsPerData, \itemindex,},\  \metricval_\itemindex>$ is        a
three-step process.  At the first, ``fan-out'' stage, we compute          the
$O(2^\NumDimensionsPerData)$ \aggregations $\{\aggregationval_1,        \dots,
\aggregationval_{2^\NumDimensionsPerData}\}$ that $\datapointvar_\itemindex$
belongs in. \antonis{Note that while $O(2^\NumDimensionsPerData)$ is an exponential term, it is exponential to the number of dimensions $\NumDimensionsPerData$ and, thus, significantly smaller than the total number of \aggregations $\aggregationval$, which is exponential to the cardinality of values in each dimension.} Then, we map each $\aggregationval_\aggregationindex$         to
$\sketchrows$ universal sketches  instances using $\sketchrows$ pairwise-independent
hash functions  $\hash_{\rowindex\in[0, \sketchrows)}: \aggregationval_\aggregationindex
\rightarrow    [0,\sketchwidth)$.   
For the $\rowindex^{th}$ row, the index of the universal sketch to update $\UnivSketch_\rowindex$
is the hash of $\aggregationval_\aggregationindex$ using hash function $\hash_\rowindex$.
Last, we update each $\UnivSketch_\rowindex$ with the metric value $\metricval_j$.

\item \textbf{Querying \sketch:} \sketch's querying algorithm  takes as input a  statistic
$\gstatistic$ and an aggregation $\aggregationval_\aggregationindex$ \ie the aggregation to
estimate $\gstatistic$ on. Querying consists of 2 steps. The first involves identifying the
set  of $\sketchrows$ universal sketch instances $\{\UnivSketch_\rowindex\}$ that 
$\aggregationval_\aggregationindex$ maps to. Then  $\gstatistic$ is estimated from each 
$\UnivSketch_\rowindex$, and the median value of these estimations is returned. 

\end{enumerate}

\antonis{Given this basic algorithm, we now focus on formally proving that \sketch offers rigorous accuracy guarantees and that it is usable in practice.}

\subsection{Accuracy Guarantees}
\label{subsec:proof}

 Theorem~\ref{theorem:2} states the accuracy  bounds
of \sketch. 
\begin{theorem}
Let us assume that each Universal Sketch $\UnivSketch$ can approximate the $\Gstatistic$-sum,
for a monotone function $g$ within a $(1+\epsilon_{\UnivSketch})$-factor with    probability
$1-\delta_{\UnivSketch} > 1/2$. Further, let $\Gstatistic_{\Datastream}$ be the $\Gstatistic$-sum  
applied to the  stream $\Datastream$ and $\Gstatistic_i$ when applied to the target \aggregation
$\mathcal \aggregationval_i$. Then \sketch with $w=\mathcal{O}(\epsilon^{-1})$ columns and   $r=\mathcal{O}(\log
\delta^{-1})$ rows, for user defined parameters $\epsilon,\ \delta$, provides an    estimate
$\widehat{\Gstatistic_i}$ that with probability $1-\delta$ satisfies:
\begin{align}
\Gstatistic_i(1-\epsilon_{\UnivSketch})\le\widehat{\Gstatistic_i}\le \Gstatistic_i(1+\epsilon_{\UnivSketch})+\epsilon\cdot \Gstatistic_{\Datastream}
\label{eq:1}
\end{align}
\label{theorem:2}
\end{theorem}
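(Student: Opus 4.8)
The plan is to treat this as a Count-Min-style collision analysis lifted from scalar counters to universal sketches, where the role of ``frequency'' is played by the $\Gstatistic$-sum and the role of the $\ell_1$ mass is played by $\Gstatistic_{\Datastream}$. First I would fix a single row $\rowindex$ and isolate the universal-sketch cell that $\aggregationval_i$ hashes to under $\hash_\rowindex$. This cell ingests the metric values of $\aggregationval_i$ together with those of every other subpopulation $\aggregationval_\itemindex$ that collides with it, i.e. with $\hash_\rowindex(\aggregationval_\itemindex)=\hash_\rowindex(\aggregationval_i)$. Writing $\frequency_\itemindex(v)$ for the frequency of metric value $v$ inside $\aggregationval_\itemindex$ and $X_\itemindex=\mathbf{1}[\hash_\rowindex(\aggregationval_\itemindex)=\hash_\rowindex(\aggregationval_i)]$, the exact quantity stored in the cell is $\Gstatistic^{\mathrm{cell}}_{i,\rowindex}=\sum_v \gstatistic\!\big(\frequency_i(v)+\sum_{\itemindex\ne i}X_\itemindex\,\frequency_\itemindex(v)\big)$, and Theorem~\ref{theorem:1} gives that the per-cell estimate $\widehat{\Gstatistic}_{i,\rowindex}$ lies in $(1\pm\epsilon_{\UnivSketch})\,\Gstatistic^{\mathrm{cell}}_{i,\rowindex}$ with probability $1-\delta_{\UnivSketch}$.

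Next I would dispatch the two directions separately. The lower bound is the easy half: since $\gstatistic$ is monotone and all collision counts are nonnegative, $\Gstatistic^{\mathrm{cell}}_{i,\rowindex}\ge \Gstatistic_i$ holds deterministically, so whenever the universal-sketch guarantee holds we immediately get $\widehat{\Gstatistic}_{i,\rowindex}\ge(1-\epsilon_{\UnivSketch})\Gstatistic_i$ with no appeal to the hashing at all. The upper bound is where the collision argument enters: I would write $\widehat{\Gstatistic}_{i,\rowindex}\le(1+\epsilon_{\UnivSketch})(\Gstatistic_i+\Delta_{i,\rowindex})$ with $\Delta_{i,\rowindex}=\Gstatistic^{\mathrm{cell}}_{i,\rowindex}-\Gstatistic_i$ the interference contributed by colliding subpopulations, and show $\mathbb{E}[\Delta_{i,\rowindex}]\le \Gstatistic_{\Datastream}/\sketchwidth$ using the pairwise independence of $\hash_\rowindex$. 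Choosing $\sketchwidth=\mathcal{O}(\epsilon^{-1})$ then makes the expected interference $\mathcal{O}(\epsilon\,\Gstatistic_{\Datastream})$, and Markov's inequality bounds $\Delta_{i,\rowindex}\le\epsilon\,\Gstatistic_{\Datastream}$ with constant probability, yielding the additive $\epsilon\,\Gstatistic_{\Datastream}$ term of Eq.~\ref{eq:1}.

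Then I would amplify via the median over the $\sketchrows$ rows. Because the $\sketchrows$ hash functions are independent, I can fix the constant hidden in $\sketchwidth=\mathcal{O}(\epsilon^{-1})$ so that a single row simultaneously satisfies both bounds with probability strictly above $1/2$ (combining $1-\delta_{\UnivSketch}>1/2$ with a small constant Markov failure probability). A Chernoff bound over $\sketchrows=\mathcal{O}(\log\delta^{-1})$ rows then forces a strict majority of rows to satisfy the lower bound and a strict majority to satisfy the upper bound, so the median estimate $\widehat{\Gstatistic_i}$ lies in the claimed interval with probability $1-\delta$.

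The hard part will be the interference bound $\mathbb{E}[\Delta_{i,\rowindex}]\le \Gstatistic_{\Datastream}/\sketchwidth$. Because $\gstatistic$ need not be subadditive (for instance $\gstatistic(\frequency)=\frequency^2$ for the L2 norm), I cannot simply write $\gstatistic(\frequency_i(v)+N(v))-\gstatistic(\frequency_i(v))\le \gstatistic(N(v))$ and sum the colliding per-subpopulation $\Gstatistic_\itemindex$; moreover the fan-out replicates each raw record across the $\mathcal{O}(2^{\numdimensions})$ subpopulations it belongs to, so the colliding subpopulations are not disjoint sub-streams of $\Datastream$. I would resolve this by bounding $\Delta_{i,\rowindex}$ through the monotonicity of $\gstatistic$ against the global frequency vector of $\Datastream$, which dominates any cell's merged frequencies coordinate-wise, so that $\Gstatistic_{\Datastream}$ serves as the aggregate interference budget exactly as $\|\frequency\|_1$ does in the Count-Min analysis, and then taking expectation over the $1/\sketchwidth$ collision probability. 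The remaining steps are routine.
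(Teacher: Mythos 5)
Your overall skeleton matches the paper's proof: fix a row, use monotonicity of $\gstatistic$ for the deterministic lower bound, bound the expected interference by $\Gstatistic_{\Datastream}/\sketchwidth$ via pairwise independence, apply Markov, then take a median over $\sketchrows=\mathcal{O}(\log\delta^{-1})$ independent rows and finish with a Chernoff bound. The constant-factor bookkeeping ($\sketchwidth=c(1+\epsilon_{\UnivSketch})\epsilon^{-1}$ with $1-\delta_{\UnivSketch}-1/c>1/2$) is also the same.

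The gap is exactly in the step you flagged as ``the hard part,'' and your proposed resolution does not close it. You model the colliding cell as computing $\sum_v \gstatistic\bigl(\frequency_i(v)+\sum_{\itemindex\ne i}X_\itemindex\,\frequency_\itemindex(v)\bigr)$, i.e.\ the frequency vectors of colliding subpopulations are \emph{merged coordinate-wise} before $\gstatistic$ is applied. In that model, $\mathbb{E}[\Delta_{i,\rowindex}]\le \Gstatistic_{\Datastream}/\sketchwidth$ genuinely fails for general monotone $\gstatistic$ (consider a threshold function $\gstatistic(\frequency)=\mathbf{1}[\frequency\ge T]$: collisions can push merged frequencies over $T$ even when every individual group contributes zero to $\Gstatistic_{\Datastream}$ as the paper defines it). Your fix --- dominating the cell's merged frequencies by the global frequency vector of $\Datastream$ --- is a deterministic bound; it yields $\Delta_{i,\rowindex}\le \Gstatistic(\Datastream)$ with no $1/\sketchwidth$ factor, hence an additive error of $\Gstatistic_{\Datastream}$ rather than $\epsilon\cdot\Gstatistic_{\Datastream}$, which is vacuous. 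The paper sidesteps the issue entirely by a different modeling of what the cell ingests: the universal sketch is keyed on $(\aggregationval_k,\metricval)$ pairs (this is the ``accuracy-improving heuristic'' of the implementation section, and it is baked into the proof's definition of $f_j$), so colliding subpopulations occupy \emph{disjoint coordinates} of the cell's frequency vector. The noise is then exactly $N_j=\sum_{\aggregationval_k\in\mathfrak{Q}_j\setminus\{\aggregationval_i\}}\Gstatistic_k$, a sum of per-group $\Gstatistic$-sums, and $\mathbb{E}[N_j]\le\Gstatistic_{\Datastream}/\sketchwidth$ is immediate from linearity of expectation and $\Gstatistic_{\Datastream}=\sum_k\Gstatistic_k$ --- no subadditivity, convexity, or coordinate-domination argument is needed. (This definition of $\Gstatistic_{\Datastream}$ as a sum over groups also absorbs your concern about the fan-out replicating records across $\mathcal{O}(2^{\numdimensions})$ subpopulations.) To repair your proof, replace the merged-frequency cell model with the per-group-keyed model; the rest of your argument then goes through as written.
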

\begin{proof}
To bound the error of our algorithm, we analyze the frequency vector $f_j$ of the stream of elements    mapped to each Universal Sketch instance $\UnivSketch_j=\hash_j(\aggregationval_i)$, where $\aggregationval_i$ is the queried \aggregation. The frequencies of all $\metricval_i \in \aggregationval_i$ are guaranteed to   appear in $f_j$, since the {\sc Update} algorithm of~\S\ref{subsec:operations} \mbox{maps them  to
$\UnivSketch_j$. }

Let $\mathfrak Q = \set{\aggregationval_1,\ldots,}$ denote all groups in the input stream
$\Datastream$, and let $\mathfrak Q_j=\set{\aggregationval_k\in \mathfrak Q\mid \hash_j(\aggregationval_k)=
\hash_j(\aggregationval_i)}$ denote the set of groups mapped to $\UnivSketch_j$. That is, 
$\Gstatistic_{\Datastream}=\sum_{\aggregationval_k\in\mathfrak Q}\sum_{\metricval_k\in\aggregationval_k}\gstatistic(f_{\metricval_k})$.

The quantity which we wish to estimate is $\Gstatistic_i\triangleq \sum_{x\in\aggregationval_i}\gstatistic(f_\metricval)$, \ie
the $\gstatistic$-sum of the group $\aggregationval_i$, while the $\UnivSketch_j$ processes all groups in  $\mathfrak Q_j$ and
thus approximates $\sum_{\aggregationval_k\in\mathfrak Q_j}\sum_{\metricval_k\in\aggregationval_k}g(f_{\metricval_k})= \Gstatistic_i + \sum_{\aggregationval_k\in\mathfrak Q_j\setminus\set{\aggregationval_i}}\sum_{\metricval_k\in\aggregationval_k}g(f_{\metricval_k})$.
For all $j\in\set{0,\ldots,r-1}$, denote by $\widehat{\Gstatistic_{i,j}}$ the estimate of $\UnivSketch_j$, and denote the \emph{noise} added by the other groups as $N_j=\sum_{\aggregationval_k\in\mathfrak Q_j\setminus\set{\aggregationval_i}}\sum_{\metricval_k\in\aggregationval_k}g(f_{\metricval_k})$.
Notice that, since any group in $\mathfrak Q\setminus\set{\aggregationval_i}$ has a probability of $1/\sketchwidth$ of being in $\mathfrak Q_j$, its expectation satisfies that: 
$
\mathbb E[N_j]= \frac{\sum_{\aggregationval_k\in\mathfrak Q\setminus\set{\aggregationval_i}}\sum_{\metricval_k\in\aggregationval_k}g(f_{\metricval_k})}{\sketchwidth}\le \frac{\Gstatistic_{\Datastream}}{\sketchwidth}.
$
Therefore, according to Markov's inequality, for any $c\in\mathbb R^+$, $\Pr[N_j\ge c\cdot \frac{\Gstatistic_{\Datastream}}{\sketchwidth}]\le 1/c$.
Next, by the correctness of the universal sketch, we have that, $$\Pr[\widehat{\Gstatistic_{i,j}}\notin[(\Gstatistic_i+N_j)(1-\epsilon_{\UnivSketch}),(\Gstatistic_i+N_j)(1+\epsilon_{\UnivSketch})]]\le\delta_{\UnivSketch}.$$
Since $\gstatistic$ is part of $G$-sum $\in$ {\em Stream-PolyLog}, it must be monotone, and thus $N_j\ge 0$. This means that with probability of at least $1-\delta_{\UnivSketch}-1/c$ both $\widehat{\Gstatistic_{i,j}}\in[\Gstatistic_i(1-\epsilon_{\UnivSketch}),(\Gstatistic_i+N_j)(1+\epsilon_{\UnivSketch})]$ and $N_j< c\cdot \frac{\Gstatistic_{\Datastream}}{\sketchwidth}$ simultaneously hold, and thus
\begin{multline}
    \Gstatistic_i(1-\epsilon_{\UnivSketch})\le\widehat{\Gstatistic_{i,j}}\le \Gstatistic_i(1+\epsilon_{\UnivSketch})+\frac{c}{\sketchwidth}(1+\epsilon_{\UnivSketch})\Gstatistic_{\Datastream}.
    \label{eq:w}
\end{multline}

Therefore, we pick $\sketchwidth=c\cdot (1+\epsilon_{\UnivSketch})\cdot \epsilon^{-1}$ and a $c$ value such that $1-\delta_{\UnivSketch}-1/c > 1/2$, to get that 
$$
\Pr\brackets{\Gstatistic_i(1-\epsilon_{\UnivSketch})\le\widehat{\Gstatistic_{i,j}}\le \Gstatistic_i(1+\epsilon_{\UnivSketch})+\epsilon\cdot \Gstatistic_{\Datastream}}> 1/2
$$
Recall that the algorithm's query sets $\widehat{\Gstatistic_i}=\text{median}_j \widehat {\Gstatistic_{i,j}}$ and that
the $\sketchrows$ rows are i.i.d. and thus a  Chernoff bound yields  
$$
\Pr\brackets{\Gstatistic_i(1-\epsilon_{\UnivSketch})\le\widehat{\Gstatistic_i}\le \Gstatistic_i(1+\epsilon_{\UnivSketch})+\epsilon\cdot \Gstatistic_{\Datastream}}\ge 1-\delta.\qedhere
$$
\end{proof}


\mypara{Takeaways} We note the following from Theorem~\ref{theorem:2}. The error bounds of \sketch are tunable based on the choice of its configuration parameters that control $(\epsilon,\ \delta)$ and $(\epsilon_{\UnivSketch},\ \delta_{\UnivSketch})$. In addition, the upper error bound is additive, which means that it will allow for loose error bounds in cases where $\epsilon\cdot\Gstatistic_{\Datastream} \approx \Gstatistic_i$. We discuss these takeaways in more detail below.

\subsection{\sketch Configuration}
\label{subsec:configs}

\begin{figure}[thb]
        \includegraphics[width=1\linewidth, height=70pt]{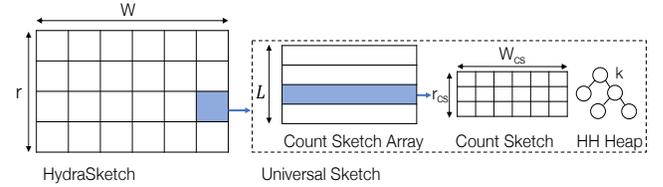}        \caption{\sketch structure and configuration parameters.}
        \label{fig:sec4:params}
\end{figure}

We now focus on techniques to tune \sketch's parameters. As illustrated in Figure~\ref{fig:sec4:params}, \sketch has six configuration parameters: two parameters ($\sketchrows$ and $\sketchwidth$) define the structure of the sketch arrays and additional four ($\NumLayers,\ \sketchwidth_{\CountSketch},\ \sketchrows_{\CountSketch},$ and $\heavykeys$) determine the inner structure of the Universal Sketches. The choice of configuration parameters of \sketch affects its empirical accuracy and memory footprint. For instance, larger $\sketchwidth$ and $\sketchrows$ values ensure better estimation accuracy but require more memory.

It is often useful to reason about the \emph{relative error} of the estimation; rephrasing Theorem~\ref{theorem:2}, we can write:
{\small $$
\Pr\bigg[{-\epsilon_{\UnivSketch}\le\frac{\widehat{\Gstatistic_i}-\Gstatistic_i}{\Gstatistic_i}\le \epsilon_{\UnivSketch}+\epsilon\cdot \frac{\Gstatistic_{\Datastream}}{\Gstatistic_i}}\bigg]\ge 1-\delta.
$$
}
and thus
{\small $$
\Pr\brackets{\frac{\left|\widehat{\Gstatistic_i}-\Gstatistic_i\right|}{\Gstatistic_i}\le \epsilon_{\UnivSketch}+\epsilon\cdot \frac{\Gstatistic_{\Datastream}}{\Gstatistic_i}}\ge 1-\delta.
$$}
That is, we have that with probability $1-\delta$, the relative error is at most $\epsilon_{\UnivSketch}+\epsilon\cdot \frac{\Gstatistic_{\Datastream}}{\Gstatistic_i}$. Since $\epsilon_{\UnivSketch},\ \epsilon$, and $\Gstatistic_{\Datastream}$ are determined by the configuration and not a specific subpopulation, we get that the relative error bound is looser if $G_i$ is small. Intuitively, if a subpopulation is very small, the noise we get from the colliding subpopulations may be larger than its own statistics.

With that in mind, we consider a quantity $G_{\min}$ that denotes the minimal G-sum for which we want to guarantee some relative error, \eg of 20\%, with a high probability, \eg 90\%. This means that for any subpopulation with a higher G-sum, the error is upper bounded by $\epsilon_{\UnivSketch}+\epsilon\cdot \frac{\Gstatistic_{\Datastream}}{G_{\min}}$. This allows us
to derive configuration heuristics for \sketch as follows:

\mypara{Controlling the probability of error bounds holding} From Theorem~\ref{theorem:2}, for the error bound of our example to hold with 90\% probability, $1-\delta=0.9$ and, hence,  $\delta= 0.1$. This translates to $\sketchrows  \approx 3$. Similarly, from Theorem~\ref{theorem:1}, a universal sketch will estimate any $\Gstatistic$-sum function within an $\epsilon_\UnivSketch$ factor with probability $1-\delta_\UnivSketch$. For probability $90\%$, $\delta_\UnivSketch = 0.1$ and, thus, $\sketchrows_\CountSketch \approx3$.

\mypara{Minimizing upper error bound} To minimize the upper error bound of \sketch, we need to minimize $E = \epsilon_\UnivSketch + \epsilon \frac{\Gstatistic_\Datastream}{\Gstatistic_{min}}$ under a memory constraint, $\mathcal{O}(M) = \sketchwidth\times\sketchwidth_\UnivSketch$. From Theorems~\ref{theorem:1} and~\ref{theorem:2}, we know that $\epsilon\approx 1/\sketchwidth$ and $\epsilon_\UnivSketch\approx1/\sqrt{\sketchwidth_\UnivSketch}$. This allows us to minimize $E$  for $\sketchwidth$ and $\sketchwidth_\UnivSketch$ as follows:

\begin{packedenumerate}
\item \mypara{Solving for $\epsilon_\UnivSketch$} Given the memory constraint, we can write $E=\epsilon_\UnivSketch + \frac{\Gstatistic_\Datastream}{M\Gstatistic_{min}\epsilon_\UnivSketch^2}$.
Minimizing $E$ over $\epsilon_\UnivSketch$ gives us: 
\begin{align}
\epsilon_\UnivSketch = \sqrt[3]{\frac{2\Gstatistic_\Datastream}{M\Gstatistic_{min}}} \Rightarrow  \sketchwidth_\UnivSketch = \Theta(\frac{M\Gstatistic_{min}}{\Gstatistic_\Datastream})^{2/3}.
\label{eq:h1}
\end{align}

\item \mypara{Solving for $\epsilon$} Similarly, we can write $E= \sqrt{\frac{1}{M\epsilon}} + \epsilon\frac{\Gstatistic_\Datastream}{\Gstatistic_{min}}$. Minimizing over $\epsilon$ gives:
\begin{align}
\epsilon = \parentheses{\frac{2\sqrt{M}\Gstatistic_{\Datastream}}{\Gstatistic_{min}}}^{-2/3} \Rightarrow \sketchwidth = \Theta\parentheses{\frac{\sqrt{M}\Gstatistic_\Datastream}{\Gstatistic_{min}}}^{2/3}
\end{align}
\end{packedenumerate}

\mypara{Controlling remaining universal sketch parameters} 
Last, we configure the levels ($\NumLayers$) maintained in each universal sketch instance and the number of heavy keys ($\heavykeys$) needed to store at each level's heavy hitter heap. From Theorem~\ref{theorem:1}, $\NumLayers=\mathcal{O}(\log{\streamdistinct_{\UnivSketch}})$, where $\streamdistinct_{\UnivSketch}$ is the average number of  distinct \aggregations summarized at a universal sketch. For the value of $\heavykeys$, we empirically set its lower bound to $\heavykeys = \Omega(1/\epsilon_\UnivSketch^2)$. For $\epsilon_\UnivSketch=0.1$, this translates to $k \approx 100$.

Let us now see how we can use these guidelines in practice. As an example, let us assume we want the relative error of estimation to not exceed $0.2$ with 90\% probability for \aggregations where $\Gstatistic_i/\Gstatistic_\Datastream \geq 10^{-3}$. Thus, $\Gstatistic_{min}=10^{-3}\cdot \Gstatistic_\Datastream$. Let us also assume that $\epsilon_\UnivSketch = \epsilon\cdot\frac{\Gstatistic_\Datastream}{\Gstatistic_{min}}=0.1$. From Eq.~\eqref{eq:h1}, we can get an estimate of memory needed, $M\approx10^6$ needed. Note that here $M$ measures ``units of $\sketchwidth_\UnivSketch$'' \ie counters. Thus, $\epsilon_\UnivSketch=0.1$ and  $\sketchwidth_\UnivSketch=\Theta(10^2)$. From $\mathcal{O}(M) = \sketchwidth\cdot\sketchwidth_\UnivSketch$, we can further see that also $\sketchwidth=\Theta(10^2)$. 

\antonis{In \S\ref{sec:eval}, we show that these strategies can  achieve near optimal tradeoffs.    We acknowledge that implementing this workflow  assumes  that the operator has some    prior knowledge about the workload \ie a rough estimate of the number of subpopulations.  We believe this is not an unreasonable requirement in many  practical settings.}

\section{Implementation} 
\label{sec:query}

This section discusses our implementation of \sysname and  the practical performance  challenges we faced. Our prototype of \sketch can be found in~\cite{repo}.

\mypara{Baseline Implementation and Workflow} We implement \sysname's workflow (\S\ref{sec:overview}) on top of Apache Spark~\cite{spark} as Spark's extensibility allowed us to easily prototype design alternatives. However, \sysname's workflow can easily fit into different analytics frameworks \eg Druid~\cite{druid}.

Data ingestion happens at the worker nodes. Each worker node splits its input into $\sim$64MB partitions, allocates one \sketch instance per partition and updates it with that partition's  data. We implement these and \sketch instances as Spark RDDs. To allocate appropriately configured \sketch instances, workers rely on configuration manifests distributed by the \primary node.

As a result of splitting input data into smaller batches, each worker node maintains multiple instances  of \sketch. \antonis{The design of \sysname enables sketch merging due to the well-known linearity property of frequency-based sketches. Therefore, during data ingestion, worker nodes merge \sketch instances  of fully ingested partitions until \sysname
is left with one \sketch instance to query.} For sketch merging, we use Spark's ``treeAggregation'' module~\cite{treeagg}, thus mitigating the risk of  performance bottlenecks.

Query estimation involves both the \primary and the worker nodes. The operator inputs the desired queries and the \primary then generates a query plan for the worker nodes to execute. Estimation results are collected at the \primary node.

\mypara{An accuracy-improving heuristic}
Recall from \S\ref{subsec:operations} that after $\aggregationval_i$ is mapped to a universal sketch, that sketch only stores the frequencies of metric values $\metricval_j$. This design, however, does not keep track of which \aggregation $\aggregationval_i$ each $\metricval_j$ maps to. As a result, a universal sketch will return the same estimations for all \aggregations whose data it stores. Our heuristic is simple:  Instead of updating each universal sketch with $\metricval_j$, we can use a more fine-grained key, \ie the concatenation of the metric value and its corresponding \aggregation. 
This way, heavy hitter heaps will maintain heavy counts for each ($\aggregationval_i,\ \metricval_j$) pair and will    be
able to differentiate between them.

\mypara{Implementation optimizations}
To further reduce the system's runtime, we introduce a few optimizations:

\begin{enumerate}[align=left, leftmargin=0pt, labelindent=\parindent, listparindent=\parindent, labelwidth=0pt, itemindent=!]
\item \textbf{One Large Hash per ($\aggregationval_i$, $\metricval_j$) Pair:} 
Updating \sketch ($\aggregationval_i$, $\metricval_j$)  requires
$\mathcal{O}(\sketchrows\times\NumLayers)$ hash computations, $\sketchrows$ to identify
the universal sketches to update and up to $\NumLayers$ per universal  sketch.
We reduce the number of hashes to $\mathcal{O}(1)$   by
computing one large 128-bit hash and breaking it down into substrings      of
variable lengths and treating each substring as a separate hash. Prior analysis~\cite{flajolet1985probabilistic,kirsch2006less} shows that different substrings from the same long hash provide sufficient independence.

\item \textbf{One Layer Update:} In prior universal sketching implementations, 
the algorithm keeps a heap to track frequent keys per layer. For each datapoint update, the universal sketching needs to update two of its layers on average.
In \sketch, we follow~\cite{joltik} and update only the lowest sampled layer per datapoint. This technique reduces the layers updated to one per datapoint, while providing an equivalent implementation.

\item \textbf{Heap-only sketch merge} Merging two \sketch instances involves iterating over two 2D universal sketches arrays  $\text{HS}^1$ and $\text{HS}^2$ and merging each pair $(\text{HS}^1_{k,l},\ \text{HS}^2_{k,l})$. This means iterating over the universal sketch layers, summing up corresponding counters, recomputing the heavy elements and re-populating the heavy hitter heaps. However, we find that we can only merge the heavy hitter heaps \mbox{instead of all counters.} 

\end{enumerate}

\section{Evaluation}
\label{sec:eval}

We now evaluate \sysname  using real-world and synthetic datasets. We provide a sensitivity analysis of our design, and evaluate our configuration strategies and optimizations. In summary:
\begin{packedenumerate}


\item \sysname offers $\leq 10$sec query latencies and is $7$-$20\times$ smaller than existing analytics engines. 

\item \sysname offers $\leq $5\% mean errors (combined across statistics) with 90\% probability for a broad set of \statistics at  1/10 of the \$ cost of exact analytics engines.

\item Thanks to \sysname's sub-linear (to the number of \aggregations) memory scaling, \sysname achieves close to an order of magnitude improvement in operational cost compared to the best exact analytics baseline. 

\item \sysname's  sketch configuration strategies ensure near-optimal \mbox{memory-accuracy tradeoffs.}

\item \sysname's performance optimizations improve   end-to-end system runtime by 45\% compared to a deployment that uses  the basic \sketch design.  
\end{packedenumerate}


\subsection{Experimental Methodology}

 \mypara{Setup} We evaluate \sysname on a 20-node cluster of m5.xlarge (4CPU - 16GB memory) AWS servers~\cite{pricing}. In practice, we observe that nodes have $\approx$10-11GB of available main memory. We allocate 3 CPUs for \sysname and its input data is CSV files that are streamed from AWS S3.
We configure \sketch  using the heuristics of \S\ref{subsec:configs} to ensure a conservative lower error bound of -10\% (\ie $\epsilon_\UnivSketch=0.1$) and upper bound of 20\% with 90\% probability for $\Gstatistic_{min}/\Gstatistic_\Datastream=2\cdot10^{-3}$. We also use the performance optimizations of \S\ref{sec:query}. While these bounds are  conservative, they ensure a memory footprint of $< 100$MB per \sketch instance;  our results show that the actual errors were much smaller.

 \mypara{Datasets}  We use two real-world datasets and a synthetic trace. Each dataset maps to a different usecase. First, we use CAIDA flow traces~\cite{caida} collected at a  backbone link of a Tier1 US-based ISP. The total trace is up to 130GB in initial size and flow  data can be clustered in up to approximately 5.6M subpopulations $\aggregationval_i$. Given that we analyze $\metricval_j$ metric values per subpopulation, this dataset contains up to 506M distinct $\langle\aggregationval_i, \metricval_j\rangle$ pairs. Second, we use a real-world  trace of video session summaries  corresponding to one major US-based streaming-video provider. The size of the video-QoE trace is approximate 5GB, with data that we cluster in up to 700k \aggregations and up to 25M $\langle\aggregationval_i, \metricval_j\rangle$ pairs. Third, we generate synthetic traces following Zipf distribution with varying skewness (\eg 0.7 to 0.99).

 \mypara{Summary statistics} We evaluate \sysname's accuracy using L1/L2 norms, entropy and cardinality \ie statistics that map to the queries described in
\S\ref{sec:motivation}. For each \aggregation, we compute the precise value of each statistic as  ground truth and then estimate the relative error with respect to \sysname's accuracy.

%
%

\begin{figure*}[t]
\centering
\subfigure[L2-Norm.]{
\includegraphics[width=0.24\textwidth]{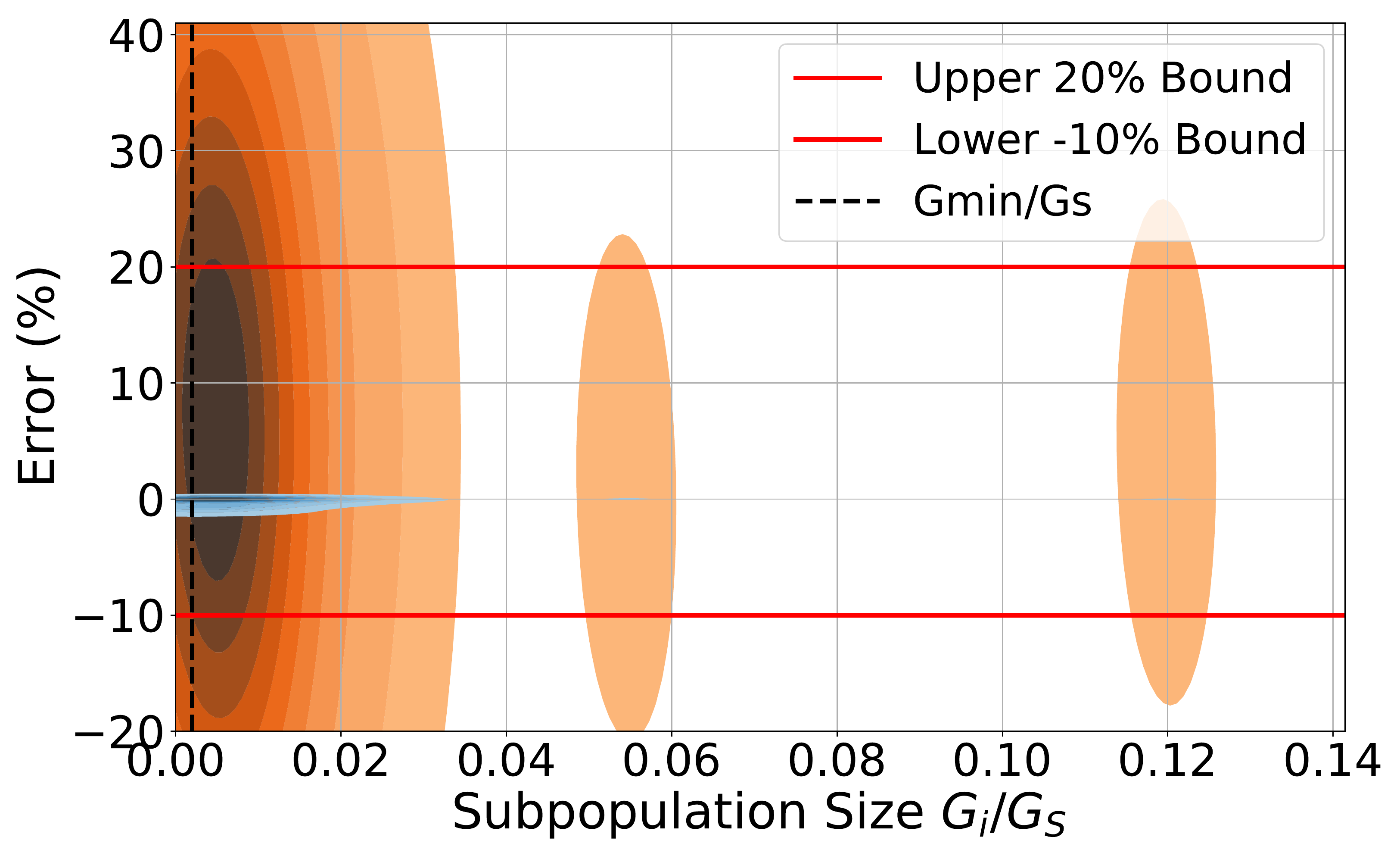}\label{fig:kde_l2}}
\subfigure[Entropy.]{
\includegraphics[width=0.24\textwidth]{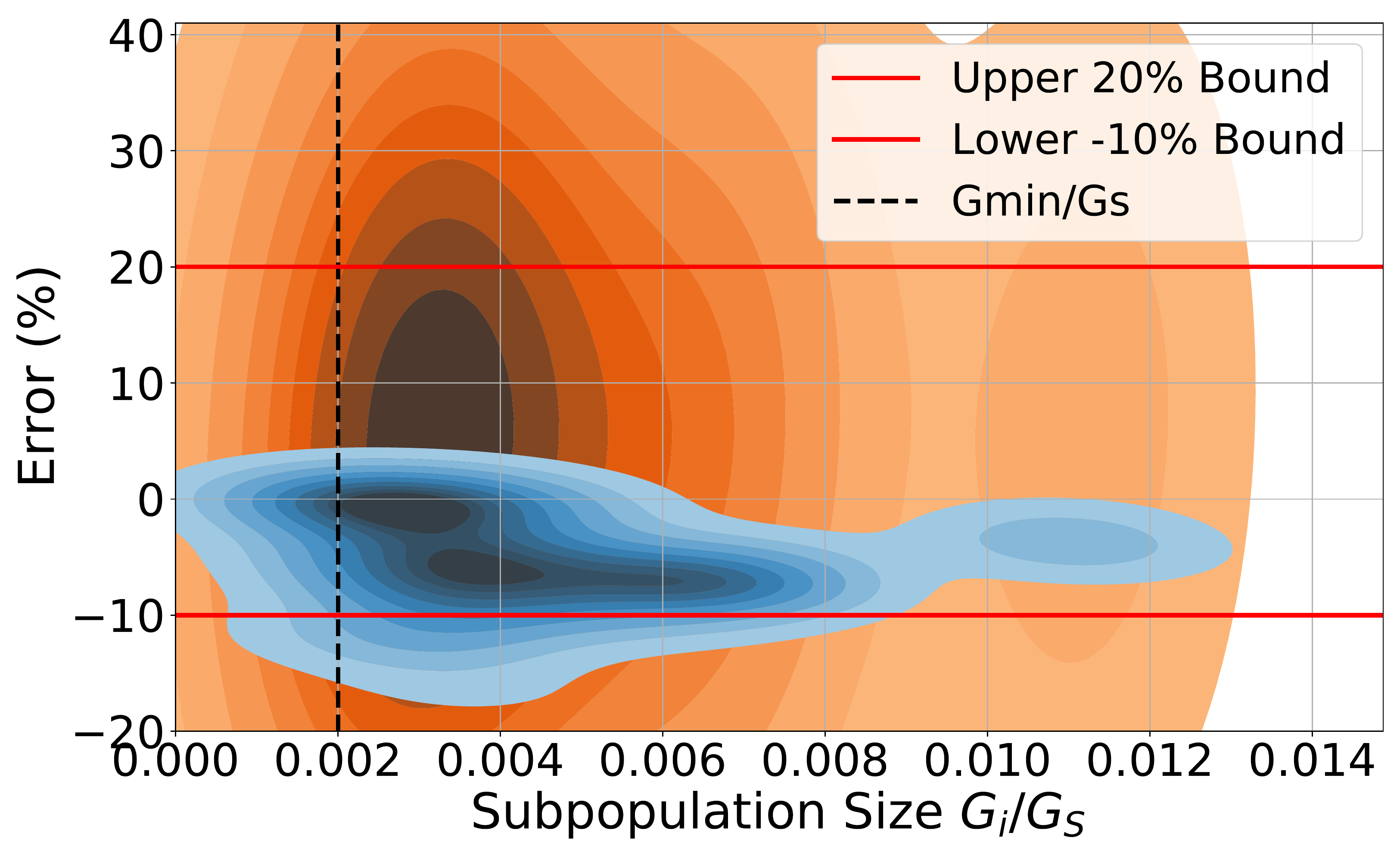}
\label{fig:nf-chain}}
\subfigure[L1-Norm.]{
\includegraphics[width=0.24\textwidth]{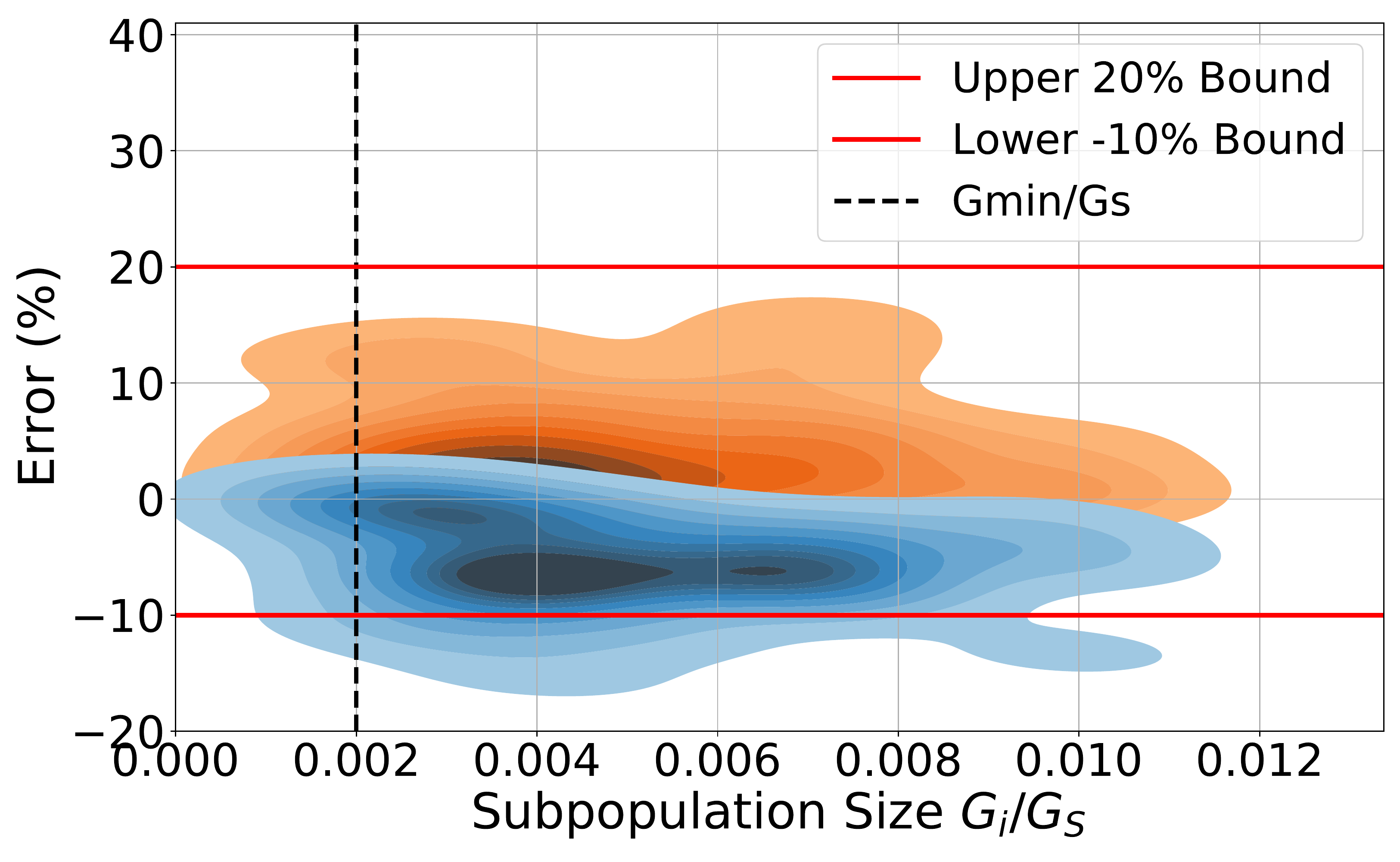}\label{fig:latency}}
\subfigure[Cardinality.]{
\includegraphics[width=0.24\textwidth]{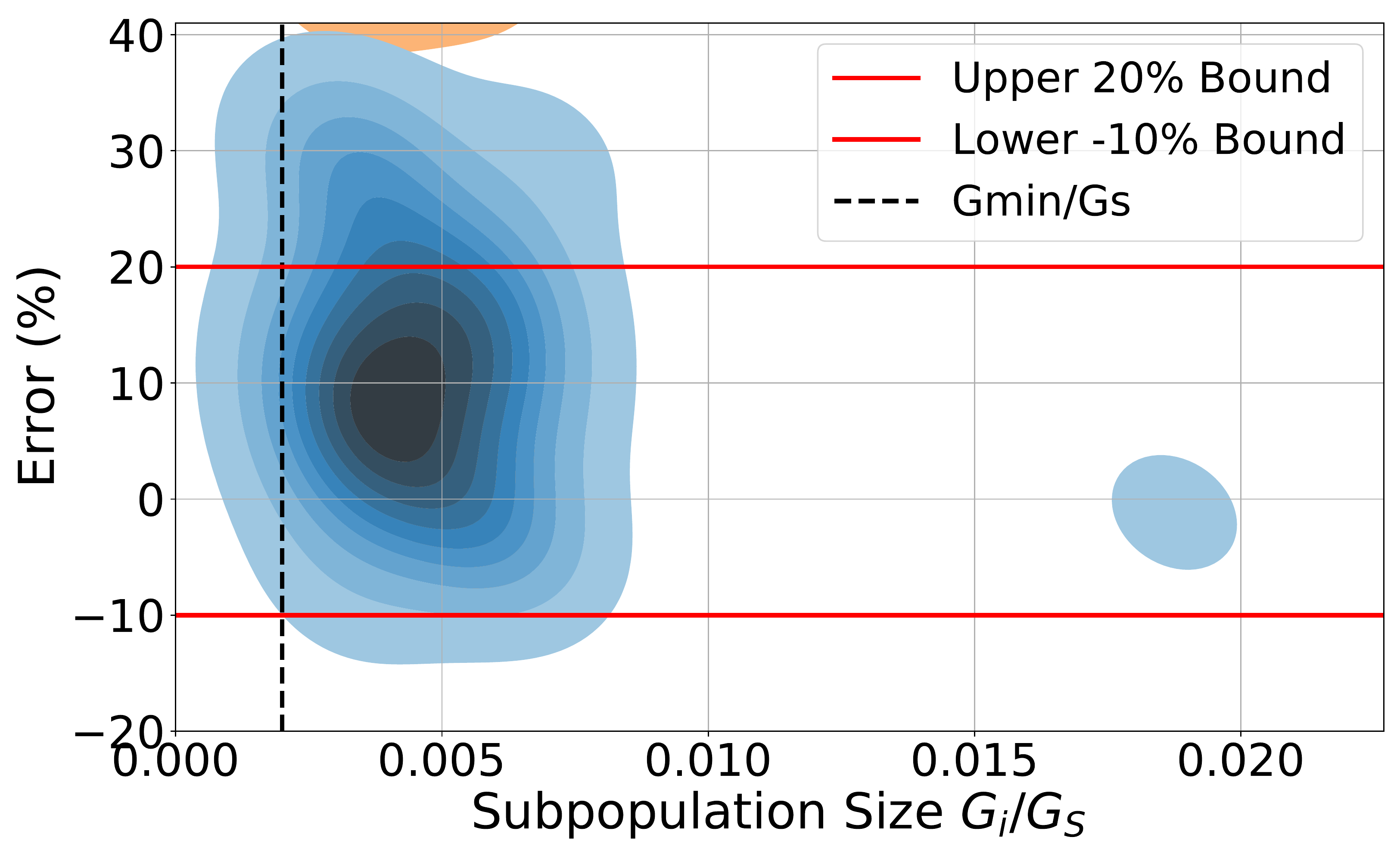}\label{fig:latency}}
\caption{\antonis{Error distribution for different data subpopulations for \sysname (blue) and uniform sampling (orange).
Red lines indicate the error threshold.}}
\label{fig:kde}
\end{figure*}


\mypara{Evaluation baselines} For our experiments, we compare \sysname against several baselines:
From the space of precise analytics we compare with: (1) \textbf{Spark-SQL:} This is a traditional
SQL implementation where incoming \datapoint is stored as a row in one   (logical)
data table. At estimation time, we create a Key-Value store, where the keys are
distinct subpopulations $\aggregationval_\aggregationindex$ and the values are lists of metric
values $\metricval_j$ per subpopulation;  (2) \textbf{Spark-KV:} Here, we summarize incoming
data at ingestion time and maintain a Key-Value store where the keys       are
distinct $\langle\aggregationval_i,\ \metricval_j\rangle$ pairs and   the
values are their respective frequency counts; (3) \textbf{Druid:} This is similar to Spark-KV but uses Druid's data roll-up feature
to generate the key-value store. 

From the space of approximate analytics engines, we compare against:
(1) \textbf{Uniform Sampling:} We implement 10\% uniform sampling at ingestion time and then
apply the Spark-KV approach to the sub-sampled data that contains $\approx82$M distinct $\langle\aggregationval_i, \metricval_j\rangle$ pairs;
(2) \textbf{VerdictDB~\cite{verdictdb}:} We deploy VerdictDB on Amazon Redshift and use the default nodes of that service (20 dc2.large nodes, each with 2CPU, 15GB memory and 160GB NVMe-SSD as storage) as backend SQL engine. VerdictDB builds offline samples, so we create hash-based sample tables for cardinality metric and uniform sample tables for L1 and L2 norm. We set sampling rate = 1\% for both sample tables. VerdictDB does allow  entropy estimations; (3) \textbf{One Universal  Sketch per \aggregation}. 


\subsection{End-to-End Evaluation of \sysname}

To evaluate \sysname end-to-end we investigate whether the system meets  operators' requirements as outlined in \S\ref{sec:motivation}. To that end, we investigate three questions:

\myparaq{What is \sysname's operating cost compared to our baselines}
We measure the normalized query estimation \$ cost for 4 statistics for the CAIDA dataset (130GB, 5.6M subpopulations, 506 distinct $\langle\aggregationval_i, \metricval_j\rangle$ pairs). We estimate their normalized cost as VerdictDB on Amazon Redshift constrained us to specific servers with a different pricing model. 

Figure~\ref{fig:sec7:cost} depicts \sysname's cost-accuracy tradeoff. \sysname's cost is $\sim$2 orders of magnitude smaller than that of Spark-SQL. That is because Spark-SQL processes the entire dataset at query time and because estimation happens at the \primary node. \sysname's  estimation cost is also an order of magnitude lower than  Druid's which uses data summaries created at ingestion. However, as we will see later, Druid's ingestion is very inefficient.  The best performing, precisely accurate baseline is Spark-KV that produces frequency counts for the resulting 506 KV-pairs at ingestion time and uses that for estimating statistics. Spark-KV is $\sim$7$\times$ more expensive than \sysname. 

Regarding approximate analytics baselines, we observe that VerdictDB, while very accurate ($\sim$98\% mean accuracy for 1\% sampling, exhibits large estimation times, comparable to worst-case estimation times in the original VerdictDB paper~\cite{verdictdb}. When normalized by server cost, VerdictBD's cost is comparable to Spark-SQL. \sysname's operational cost is on par with   a sampling approach that uniformly samples 10\% of all data but whose error can be
very large. Perhaps surprisingly, the 10\% baseline exhibits higher cost. This is because this baseline still needs to process $\approx82$M KV pairs and still requires more memory than \sysname. In the case of the smaller \conviva  dataset (not shown due to lack of space), \sysname is only 3$\times$ cheaper than Spark-SQL and approximately as costly as Spark-KV. This smaller gap is due to the smaller size of the dataset. 
In \S\ref{subsec:analysis}, we look at the empirical runtime and memory requirements that explain the observed cost results.

\begin{figure}[tb]
        \includegraphics[width=1\linewidth]{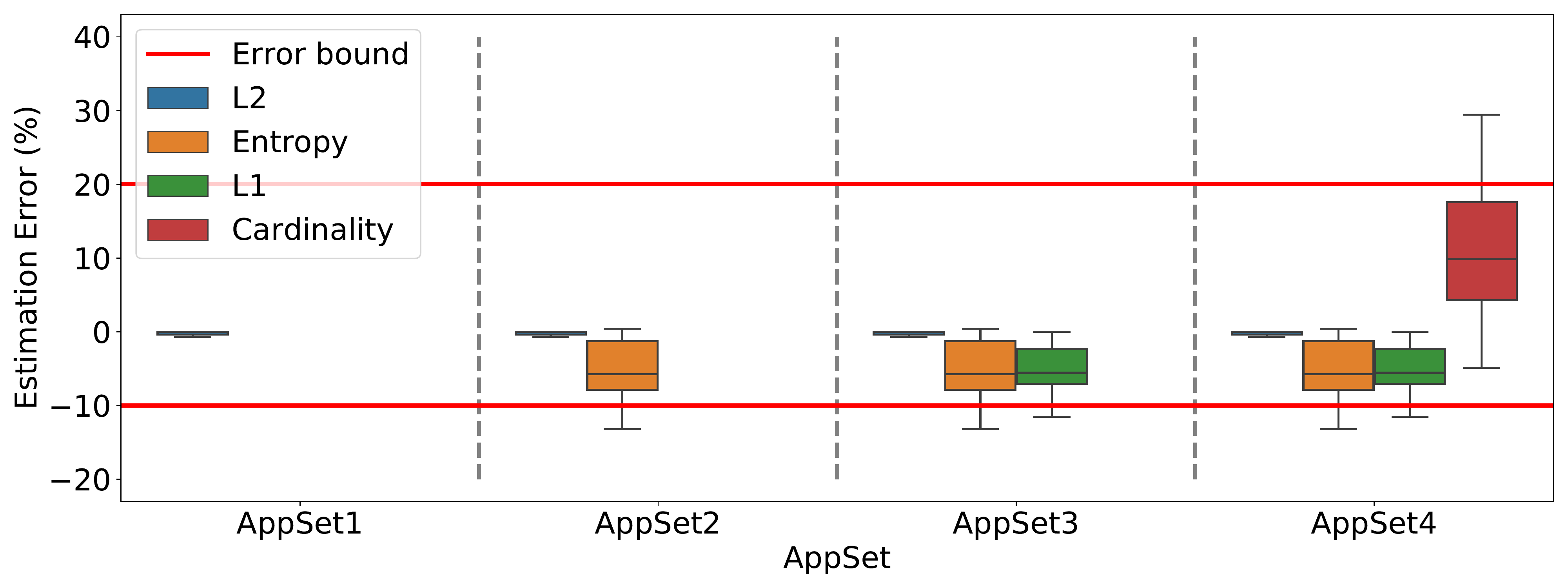}
        \caption{\sysname's estimation error for the CAIDA dataset.}
        \label{fig:sec7:caidaAppSet}
\end{figure}

\myparaq{Does \sysname enable interactive query latencies}
Figure~\ref{fig:sec7:caidaruntime} illustrates \sysname's runtime  as a function   of
the dataset size and the number of data  \aggregations for the CAIDA dataset.
We can see that \sysname's query time is $\sim$11sec for 5.6 million data \aggregations,
almost one order of magnitude ($7\times$) smaller than that of Spark-KV.  \antonis{We find this to be an acceptable query latency for a framework that is configured to periodically run estimations on streaming data (\eg every minute) and large volumes of \aggregations.}
Due to the centralized statistics estimation of Spark-SQL, execution would
fail for dataset sizes larger than 30GB. However, even for small input, the querying latency of Spark-SQL is $\sim$2
orders of magnitude larger than \sysname's. Druid's 
ingestion would prematurely terminate for dataset sizes $\geq$60GB because the framework (a) indexes data upon ingestion and (b) is optimized for reads over writes~\cite{druidingest}. We did not focus on improving Druid's ingestion. 

\myparaq{Is \sysname accurate and general across \statistics} To evaluate \sysname's accuracy and generality, we look at the accuracy of four different sets containing different numbers of \statistics. Figure~\ref{fig:sec7:caidaAppSet} depicts the boxplot of empirical estimation error for each statistic. Positive error values indicate overestimation errors and  negative error values indicate underestimation. For all application sets, \sysname  operates
under the same resource budget and configuration as described  previously. We find that estimating multiple \statistics does not incur accuracy     reduction, compared to when individual statistics are estimated. \antonis{This highlights  \sysname's
generality, which is enabled by the fact that information maintained in    the
universal sketches is statistic-agnostic and is equally used for multiple statistics of interest.}
\sysname's median estimation error is almost 0 for the L2-norm,  -5.7\% and  -5.5\% for entropy and L1 norm respectively and  9.8\% for cardinality estimation. We can observe that the estimated errors are well within the accuracy threshold that we set. However, for cardinality, we observe a higher median and variance in error values. This is due to a large concentration of $\Gstatistic_i$'s near $\Gstatistic_{min}$. Recall from the discussion of \S\ref{subsec:configs}
that \sysname's error is loosest when $\Gstatistic_i \approx \Gstatistic_{min}$ and this allows for higher error variance.

Figure~\ref{fig:kde} corroborates this observation by depicting the distribution of estimation error values for all summary statistics as a function of the \aggregation's normalized  $\Gstatistic$-sum \ie $\Gstatistic_i/\Gstatistic_{\Datastream}$. Note that for values of $\Gstatistic_i/\Gstatistic_\Datastream\approx\Gstatistic_{min}/\Gstatistic_{\Datastream}$ the variance of empirical error becomes larger as that is the region where the error is allowed to approach our worst-case error bound.  Cardinality estimation using one universal sketch per subpopulation yields estimations with $ <7\%$ error. \antonis{The figure also compares Hydra with uniform sampling and
highlights the high variance in error that sampling exhibits.} We observe the same behavior for the \conviva dataset with a mean error across statistics of $\sim$6\%. 

\begin{figure}[tb]
        \includegraphics[width=1\linewidth]{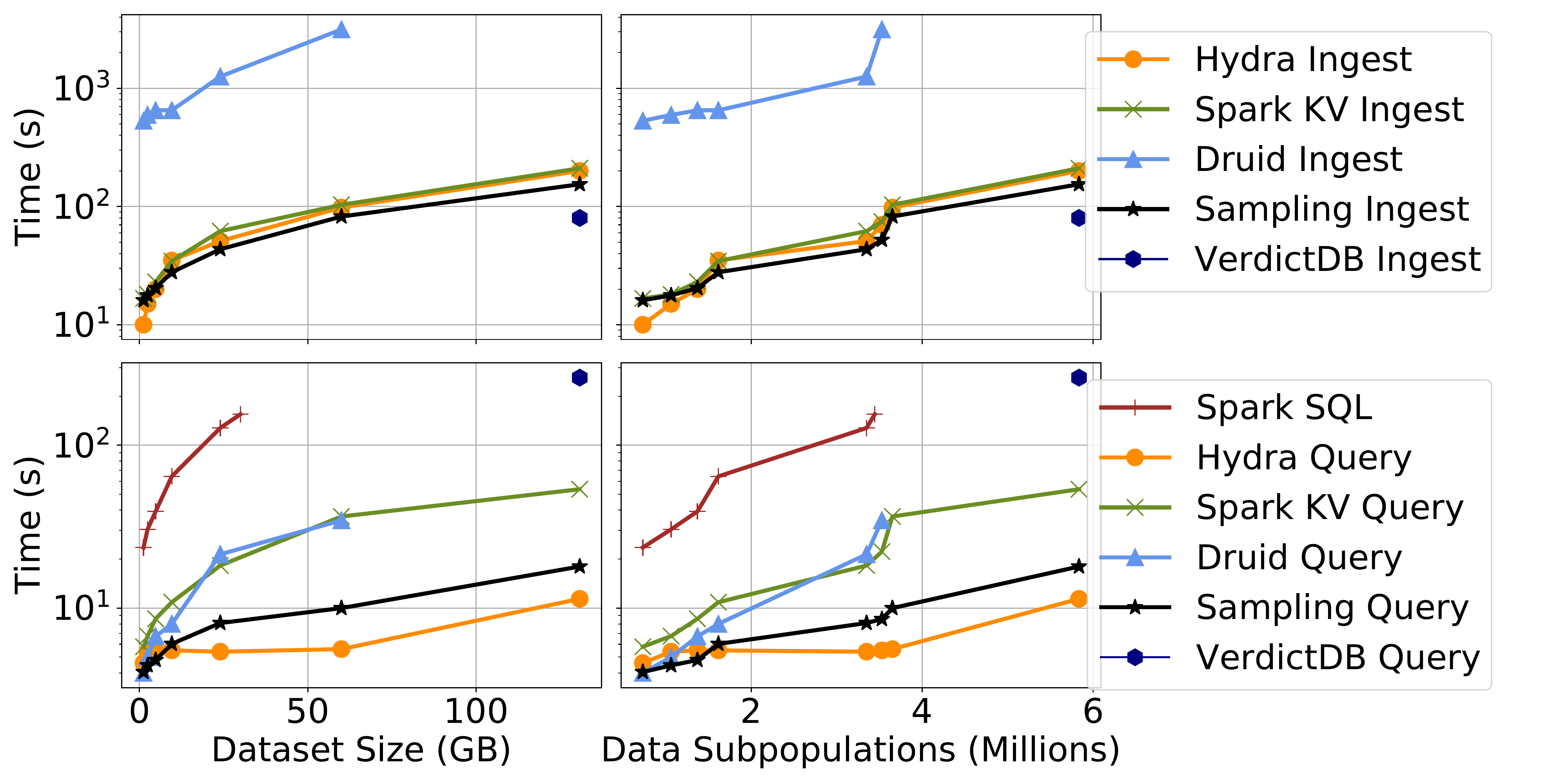}
        \caption{\antonis{Runtime for CAIDA Dataset}}
        \label{fig:sec7:caidaruntime}
\end{figure}

\subsection{Detailed Analysis of \sketch}
\label{subsec:analysis}

First, we compare \sketch's memory footprint to that of our baselines. 
\antonis{Second, we show that our configuration strategies converge to a near-optimal  configuration with respect to memory and runtime. Lastly, we show that
our performance optimizations reduce  \sysname's runtime by 45\%.}

\mypara{Memory Footprint vs. Subpopulations} Figure~\ref{fig:sec7:memoryScalability}
shows memory footprint as a function of the
number of \aggregations monitored for the CAIDA  dataset. \sysname follows the theoretically-expected sub-linear memory scaling as the dataset size and
\aggregations increase.
Indeed, while we observe that for smaller datasets, a Spark-KV   implementation
might be preferable in terms of memory footprint (as the size of the     sketch
instances might even exceed that of the input), this trend is very      quickly
reversed. This is        an
observation that is also confirmed for the \conviva dataset. 

\begin{figure}[thb]
        \includegraphics[width=1\linewidth]{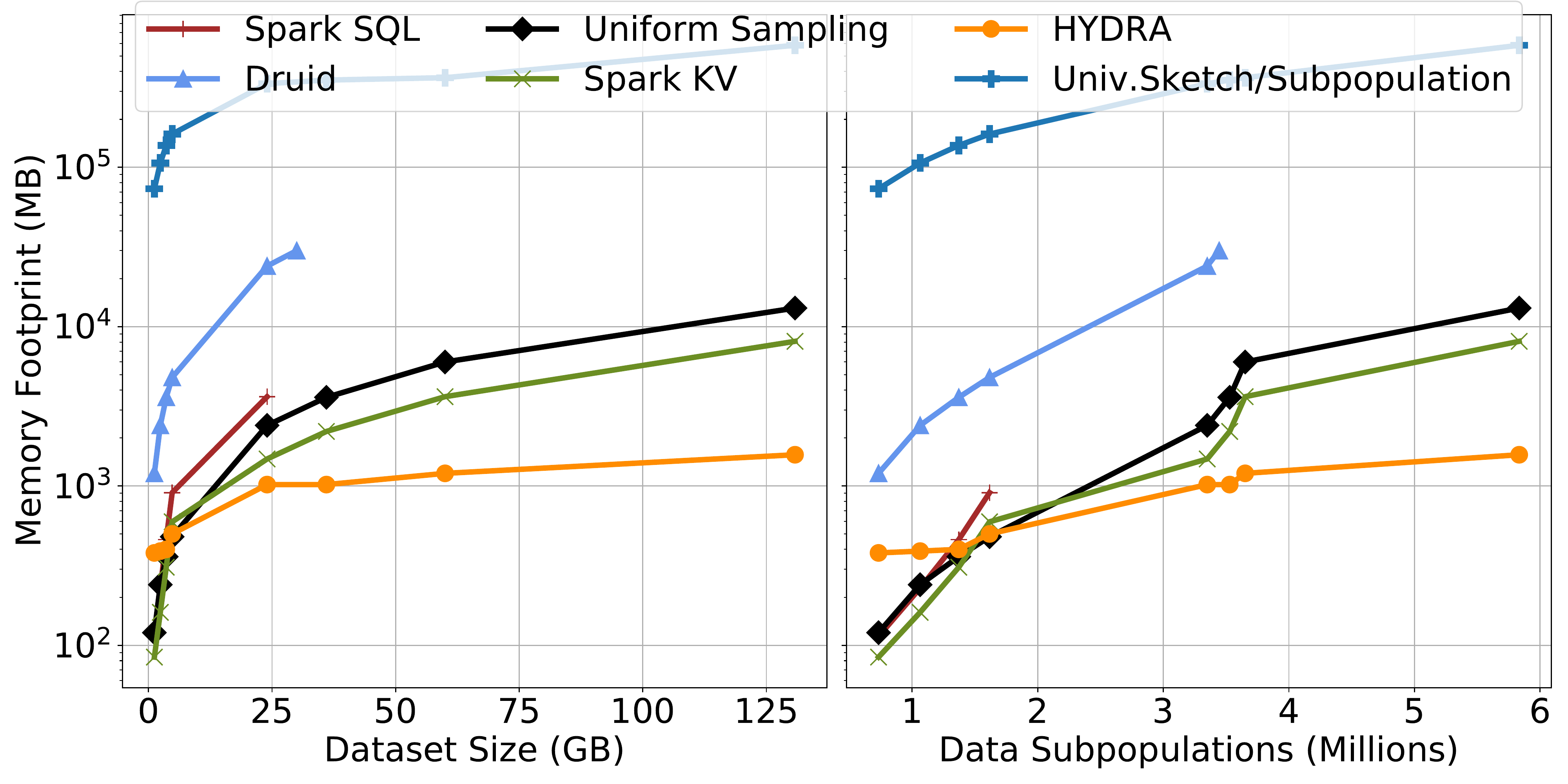}
        \caption{\antonis{Memory footprint per dataset size and subpopulation. VerdictDB numbers do not expose memory utilizations.}}
        \label{fig:sec7:memoryScalability}
\end{figure}

\antonis{\mypara{Configuration Heuristics} Figure~\ref{fig:sec7:Pareto} depicts the 
relationship between the memory footprint of \sketch and its
estimation error for different configurations. The  estimation error    of
the figure is that of the L1-Norm of the       CAIDA
dataset. The optimal configurations 
simultaneously minimize the estimation error and \sketch memory  footprint
(marked with red stars). The orange         diamond
configuration is the suggested configuration based on the    configuration
strategies discussed in \S\ref{sec:sketch}. Thus, our 
strategies result in a configuration comparable to the
optimal configurations. This observation holds across all          summary
statistics and datasets. 

\begin{figure}[h]
        \includegraphics[width=1\linewidth]{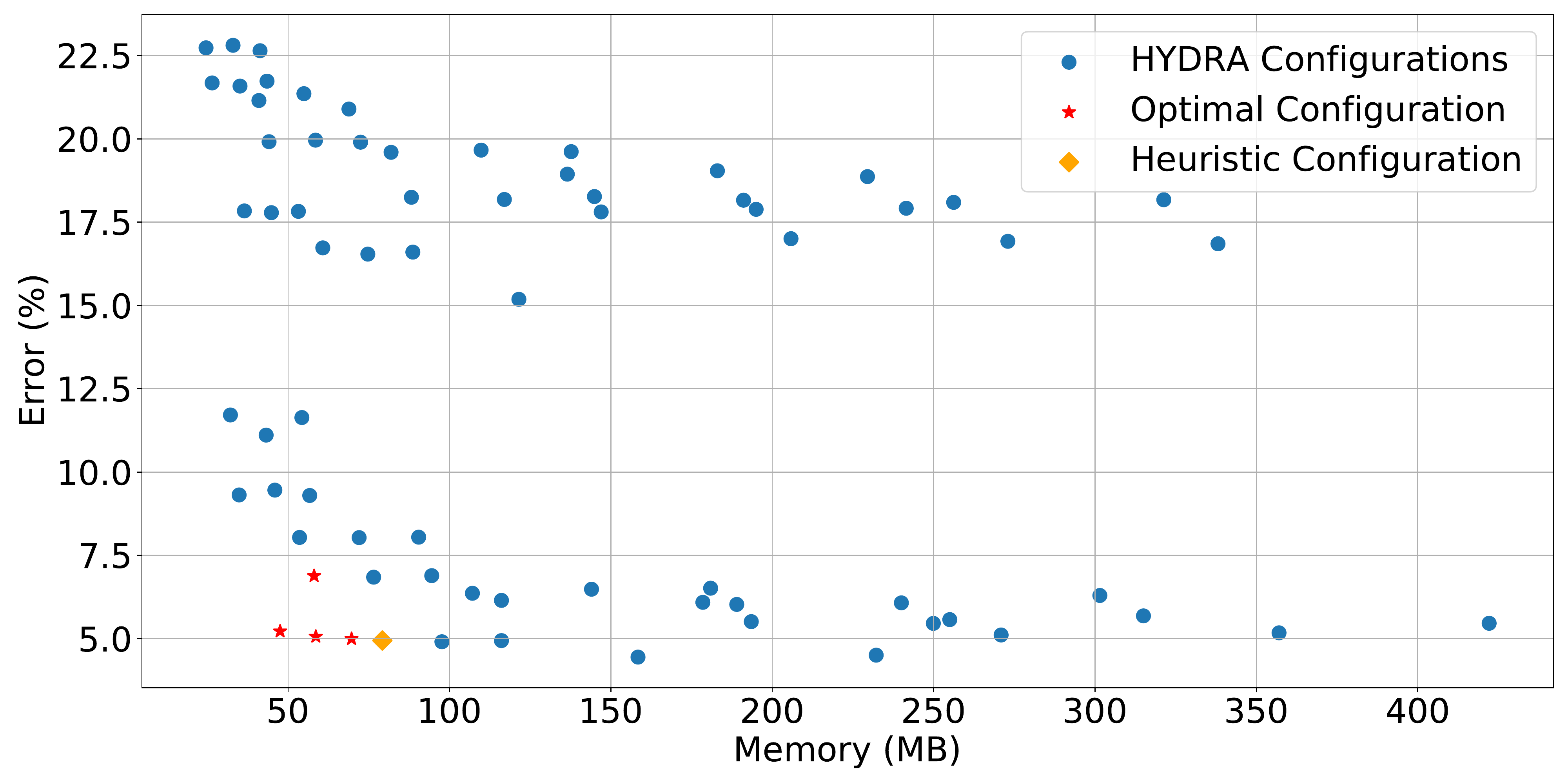} 
        \caption{\sysname's configuration strategies are close to optimal}
        \label{fig:sec7:Pareto}
\end{figure}



\mypara{Analysis of Performance Optimizations} Figure~\ref{fig:sec7:pareto_comparisons}
depicts the cumulative improvement in \sysname's performance using  the
performance optimizations of \S\ref{sec:query}. Each datapoint corresponds
to a different \sketch configuration (the Pareto frontier of 
Figure~\ref{fig:sec7:Pareto}) and we run each configuration twice, once for  the       
basic \sketch design and once with the performance optimizations. The performance
optimizations further reduce the memory footprint of \sketch and the total system runtime. 
Table~\ref{table:optimizations} captures \sysname's runtime reduction
after each performance optimization. The baseline is \sysname without
optimizations; overall, we see a total  performance 
improvement of 45\%. }

\begin{table}[htb]
\centering
\caption{Runtime improvements with performance optimizations}
\begin{small}
    \begin{tabular}{{p{1.6cm}|p{1.6cm}|p{1.6cm}|p{1.6cm}}} 
    \hline
    \textbf{Baseline} & \textbf{Heap-only Merge} & \textbf{One Hash} & \textbf{One Layer Update} \\ 
    \hline \hline
     100\% & 92\% & 64\% & 55\%  \\
    \hline
    \end{tabular}
\end{small}

\label{table:optimizations}
\end{table}

\begin{figure}[thb]
        \includegraphics[width=1\linewidth]{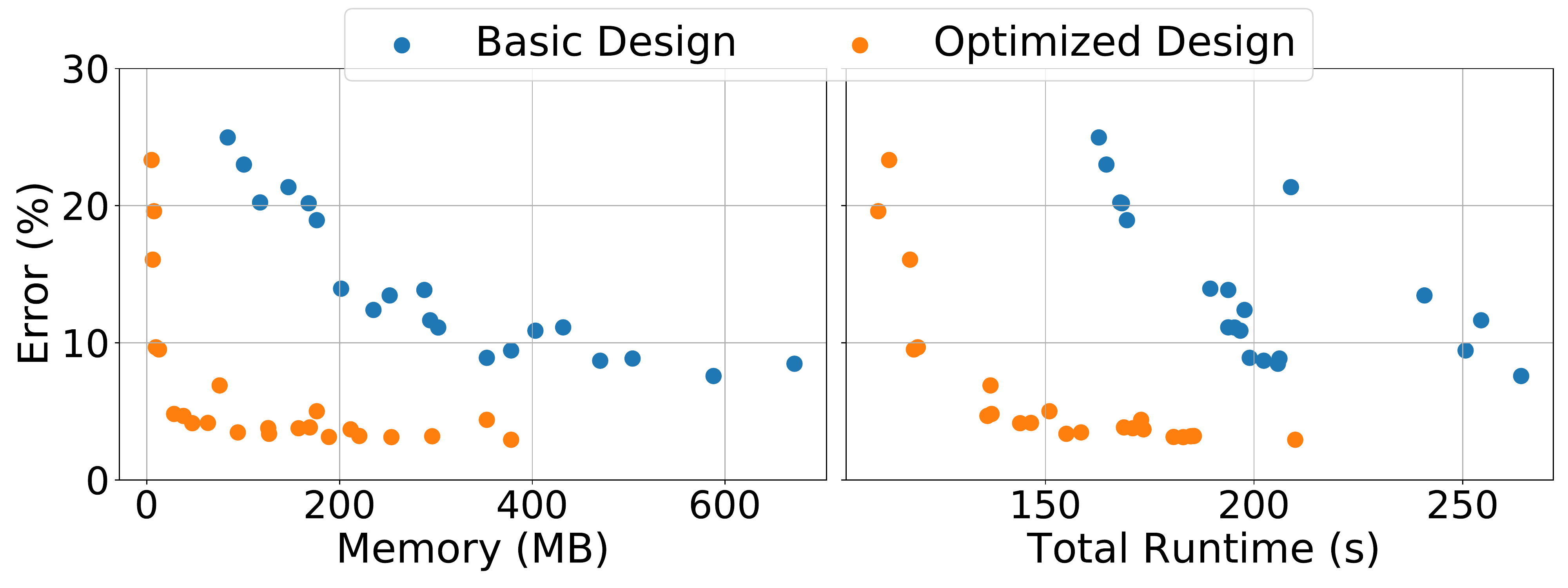}   
        \caption{Comparison of the Pareto frontiers of basic and the optimized
        \sketch implementation for the same configurations.}
        \label{fig:sec7:pareto_comparisons}
\end{figure}

\mypara{Skewness of Dataset} Figure~\ref{fig:sec7:skew}   highlights
the difference in estimation accuracy for two synthetic datasets generated with    a
zipfian distribution. The \aggregations are samples from a zipfian distribution with
parameters $\alpha=0.7$ and $\alpha=0.99$ respectively (a value of  $\alpha=0$
indicates a perfectly uniform distribution). Our experiment confirms our   intuition
that the more skewed dataset ensures a better (memory, error) tradeoff. 
In practice, many real-world datasets are  skewed and thus can benefit from being
analyzed by \sysname. 


\begin{figure}[thb]
        \includegraphics[width=1\linewidth]{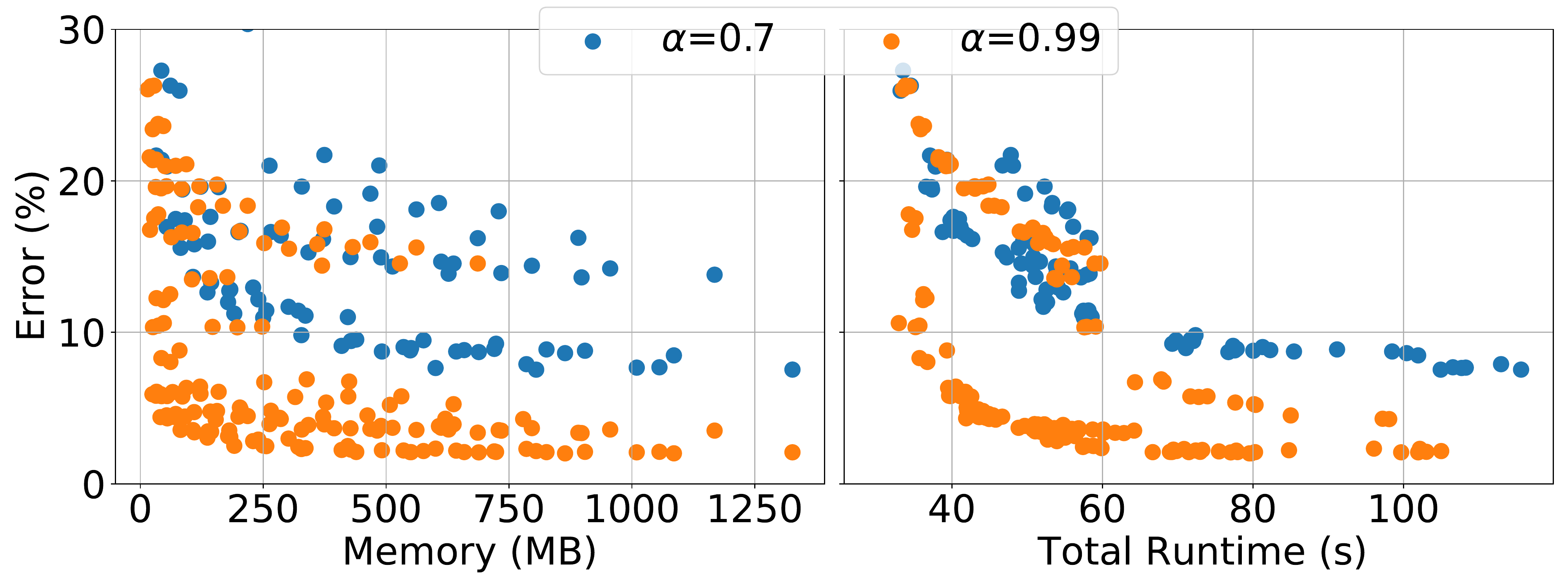} 
        \caption{Impact of data skewness on \sysname's memory footprint and runtime. We use a synthetic dataset where \aggregation sizes are sampled from a Zipfian distribution with parameter $\alpha$.}
        \label{fig:sec7:skew}
\end{figure}

\section{Related work}
\label{sec:related}

\mypara{MapReduce-based Analytics Frameworks} There are various analytics frameworks that are based on the MapReduce paradigm~\cite{hadoop, gfs}. Dryad~\cite{dryad} introduced the concept of user-defined functions in general DAG-based workflows. Apache Drill and Impala~\cite{impala} are limited to SQL variants. Apache Spark~\cite{spark} leverages a DAG-based execution engine and treats  unbounded computation as micro-batches. Apache Flink~\cite{flink} enables pipelined streaming execution for batched and streaming data, offers exactly-one semantics and supports out-of-order processing. \sysname could be built on top of Apache Flink.

\mypara{Stream Processing Frameworks} This line of research focuses on the architecture of stream processing systems, answering questions about out-of-order data management, fault tolerance, high-availability, load management, elasticity \etc~\cite{abadi2005design, arasu2016stream, akidau2013millwheel, kafkastreams,balazinska2005fault, flink, hwang2005high, naiad, streamin, ibms}. Fragkoulis~\etal analyze the state of the art of stream processing engines~\cite{fragkoulis}.

\antonis{\mypara{High-dimensional Data Cubes} Data cubes have been an integral part of online analytics frameworks and enable pre-computing and storing statistics for \multidimensional aggregates so that queries can be answered on the fly. However, data cubes suffer from the same scalability challenges as \sysname. Prior works have focused on mechanisms to identify the most frequently queried subsets of the data cube and optimize operations that are performed only on a small subset of dimensions \textit{at a time}~\cite{li2004high, gray1997data, han2001efficient, harinarayan1996implementing, lakshmanan2002quotient}.}

\mypara{Data Aggregations} The aggregation-based queries that we discussed in    \S\ref{sec:motivation} appear in multiple streaming data systems~\cite{macrobase, braun, gigascope, jetstream, druid, blinkdb, powerdrill} that motivate \sysname. Many of the above frameworks enable approximate analytics but do not fully  satisfy operators' requirements as outlined in \S\ref{sec:motivation}.     

\mypara{Sampling-based Approaches} Multiple analytics frameworks use sampling to provide
approximate estimations~\cite{aqua, olston, Chandola, wang2015spatial}. BlinkDB~\cite{blinkdb} builds stratified samples on its input to reduce query execution
time given specific storage budgets.  STRAT~\cite{strat}  also uses stratified sampling but instead builds a single sample.
SciBORQ~\cite{sidirourgos} builds biased samples based on past query results but cannot provide
accuracy guarantees.

\mypara{Online Aggregation} Online Aggregation frameworks~\cite{hellerstein, pansare,li2016wander} 
continuously refine approximate answers at runtime. In these frameworks, it
is up to the user to determine when the acceptable level of accuracy is reached and to   terminate
estimation. Naturally, this approach is unsuitable for multidimensional telemetry that needs    to
estimate multiple statistics across data subpopulations.

\mypara{Data Summaries} Data ``synopses'' (\eg wavelets, histograms, sketches, \etc) have been extensively
used for data analytics~\cite{agarwal, buragohain, count-min, greenwald, vitter, univmon,wei2015persistent,jestes2011building}. These data summaries can either be lossless or lossy and they aim to provide  efficiency for multidimensional analytics. However, these approaches  are tailored  to a narrow set of estimation tasks.  Gan~\etal develop a compact and efficiently mergeable quantile sketch for multidimensional data~\cite{moment}. 

\antonis{Several prior efforts explore nested sketches as a solution to the multidimensional distinct counting problem~\cite{xiao2015hyper, considine, ting, ting2}. The CountMin Flajolet-Martin (CM-FM) replaces each integer counter of count-min sketch with a distinct counting sketch~\cite{considine}. The CM-FM, while making a step in the right direction for multidimensional analytics, is limited both in terms of the generality and accuracy guarantees it offers~\cite{ting2}. Prior work by Ting~\etal also targets on cardinality estimation in multidimensional data~\cite{ting, ting2} but 
focuses on improving the sketch error bounds. Similar to \sysname, they observe that in distinct counting sketches, accuracy guarantees depend on the characteristics of the underlying data. Their key observation is that the distribution of errors in each counter can be empirically estimated from the sketch itself. By first estimating this distribution, count estimation becomes a statistical estimation and inference problem with a known error distribution. However, computing such error distributions, is computationally heavy in streaming settings as it involves computing maximum likelihood estimators.}

\section{Discussion and Future Work}

\sysname ensures coverage across \aggregations and accuracy guarantees with good resource utilization for subpopulations whose $\Gstatistic_i >= \Gstatistic_{min}$. It is up to the operator to determine $\Gstatistic_{min}$. We believe that this is more versatile than pre-selecting specific \aggregations for which accuracy guarantees should apply. Given a $\Gstatistic_{min}$ threshold, \sysname self-selects the subset of important subpopulations. 

\sysname opens up avenues for future work. For example, an open question is how to enable dynamic sketch reconfiguration given changing workloads or operator goals. Also, a more system-oriented avenue would involve investigating the applicability of \sysname in the context of in-band network telemetry as part of programmable network elements~\cite{sketchlib}.

\section{Conclusions}

Today's large-scale services  require interactive estimates 
of different statistics across subpopulations of their multidimensional
datasets. However, the combinatorial explosion of \aggregations complicates offering multidimensional analytics at a reasonable cost to the operator. We propose \sysname,
a sketch-based framework that leverages \sketch
to summarize data streams in sub-linear memory to the number of \aggregations. We 
show that \sysname is an order of magnitude more efficient than existing analytics
engines while ensuring interactive estimation times.

\begin{acks}
We thank the reviewers for their feedback. This work was supported in part by the CONIX Research Center, one of six centers in JUMP, a Semiconductor Research Corporation (SRC) program sponsored by DARPA, ERDF Project AIDA (POCI-01-0247-FEDER-045907), and NSF awards No. CNS-1513764, CNS-1565343, CNS-2106946, CNS-2107086, SaTC-2132643, and the Kavcic-Moura research award. The authors would also like to thank the Red Hat Collaboratory at Boston University for their support.
\end{acks}

\clearpage
\balance
\bibliographystyle{ACM-Reference-Format}
\bibliography{capp.bib}


\begin{thebibliography}{00}


\ifx \showCODEN    \undefined \def \showCODEN     #1{\unskip}     \fi
\ifx \showDOI      \undefined \def \showDOI       #1{#1}\fi
\ifx \showISBNx    \undefined \def \showISBNx     #1{\unskip}     \fi
\ifx \showISBNxiii \undefined \def \showISBNxiii  #1{\unskip}     \fi
\ifx \showISSN     \undefined \def \showISSN      #1{\unskip}     \fi
\ifx \showLCCN     \undefined \def \showLCCN      #1{\unskip}     \fi
\ifx \shownote     \undefined \def \shownote      #1{#1}          \fi
\ifx \showarticletitle \undefined \def \showarticletitle #1{#1}   \fi
\ifx \showURL      \undefined \def \showURL       {\relax}        \fi
\providecommand\bibfield[2]{#2}
\providecommand\bibinfo[2]{#2}
\providecommand\natexlab[1]{#1}
\providecommand\showeprint[2][]{arXiv:#2}

\bibitem[\protect\citeauthoryear{??}{tre}{2014}]%
        {treeagg}
 \bibinfo{year}{2014}\natexlab{}.
\newblock \bibinfo{title}{{Spark treeAggregate and treeReduce}}.
\newblock
  \bibinfo{howpublished}{\url{https://github.com/apache/spark/pull/1110}}.
  (\bibinfo{year}{2014}).
\newblock
\newblock
\shownote{[Online; accessed 16-July-2022].}


\bibitem[\protect\citeauthoryear{??}{lin}{2015}]%
        {linkedin}
 \bibinfo{year}{2015}\natexlab{}.
\newblock \bibinfo{title}{{Kafka tops 1 trillion messages per day at
  linkedin}}.
\newblock
  \bibinfo{howpublished}{\url{https://www.datanami.com/2015/09/02/kafka-tops-1-trillion-messages-per-day-at-linkedin/}}.
    (\bibinfo{year}{2015}).
\newblock
\newblock
\shownote{[Online; accessed 16-July-2022].}


\bibitem[\protect\citeauthoryear{??}{sur}{2015}]%
        {surus}
 \bibinfo{year}{2015}\natexlab{}.
\newblock \bibinfo{title}{{SURUS - Anomaly detection at Netflix}}.
\newblock
  \bibinfo{howpublished}{\url{https://netflixtechblog.com/rad-outlier-detection-on-big-data-d6b0494371cc}}.
    (\bibinfo{year}{2015}).
\newblock
\newblock
\shownote{[Online; accessed 16-July-2022].}


\bibitem[\protect\citeauthoryear{??}{ola}{2016}]%
        {olapurl}
 \bibinfo{year}{2016}\natexlab{}.
\newblock \bibinfo{title}{{Approximate Algorithms in Apache spark: Hyperloglog
  and Quantiles}}.
\newblock
  \bibinfo{howpublished}{\url{https://databricks.com/blog/2016/05/19/approximate-algorithms-in-apache-spark-hyperloglog-and-quantiles.html}}.
    (\bibinfo{year}{2016}).
\newblock
\newblock
\shownote{[Online; accessed 16-July-2022].}


\bibitem[\protect\citeauthoryear{??}{kaf}{2017}]%
        {kafkastreams}
 \bibinfo{year}{2017}\natexlab{}.
\newblock \bibinfo{title}{{Kafka Streams}}.
\newblock
  \bibinfo{howpublished}{\url{https://kafka.apache.org/documentation/streams/}}.
    (\bibinfo{year}{2017}).
\newblock
\newblock
\shownote{[Online; accessed 16-July-2022].}


\bibitem[\protect\citeauthoryear{??}{aws}{2018}]%
        {aws2}
 \bibinfo{year}{2018}\natexlab{}.
\newblock \bibinfo{title}{{EC2 DNS Resolution Issues in the Asia Pacific
  Region}}.
\newblock \bibinfo{howpublished}{\url{https://aws.amazon.com/message/74876/}}.
   (\bibinfo{year}{2018}).
\newblock
\newblock
\shownote{[Online; accessed 16-July-2022].}


\bibitem[\protect\citeauthoryear{??}{cai}{2019}]%
        {caida2}
 \bibinfo{year}{2019}\natexlab{}.
\newblock \bibinfo{title}{{CAIDA Trace}}.
\newblock
  \bibinfo{howpublished}{\url{https://www.caida.org/catalog/datasets/monitors/passive-equinix-nyc/}}.
    (\bibinfo{year}{2019}).
\newblock
\newblock
\shownote{[Online; accessed 16-July-2022].}


\bibitem[\protect\citeauthoryear{??}{dru}{2019}]%
        {druidingest}
 \bibinfo{year}{2019}\natexlab{}.
\newblock \bibinfo{title}{{Druid Ingestion Performance}}.
\newblock
  \bibinfo{howpublished}{\url{https://stackoverflow.com/questions/54578482/druid-parquet-poor-ingestion-performance\#54580535}}.
    (\bibinfo{year}{2019}).
\newblock
\newblock
\shownote{[Online; accessed 16-July-2022].}


\bibitem[\protect\citeauthoryear{??}{aws}{2019}]%
        {aws1}
 \bibinfo{year}{2019}\natexlab{}.
\newblock \bibinfo{title}{{EBS Service Event in the Tokyo Region}}.
\newblock \bibinfo{howpublished}{\url{https://aws.amazon.com/message/56489/}}.
   (\bibinfo{year}{2019}).
\newblock
\newblock
\shownote{[Online; accessed 16-July-2022].}


\bibitem[\protect\citeauthoryear{??}{cai}{2021}]%
        {caida}
 \bibinfo{year}{2021}\natexlab{}.
\newblock \bibinfo{title}{{CAIDA Network Flow Traces}}.
\newblock
  \bibinfo{howpublished}{\url{https://www.caida.org/catalog/datasets/overview/}}.
    (\bibinfo{year}{2021}).
\newblock
\newblock
\shownote{[Online; accessed 16-July-2022].}


\bibitem[\protect\citeauthoryear{??}{pri}{2022}]%
        {pricing}
 \bibinfo{year}{2022}\natexlab{}.
\newblock \bibinfo{title}{{Amazon AWS EC2 pricing }}.
\newblock
  \bibinfo{howpublished}{\url{https://aws.amazon.com/ec2/pricing/on-demand/}}.
   (\bibinfo{year}{2022}).
\newblock
\newblock
\shownote{[Online; accessed 16-July-2022].}


\bibitem[\protect\citeauthoryear{??}{con}{2022}]%
        {conviva}
 \bibinfo{year}{2022}\natexlab{}.
\newblock \bibinfo{title}{{Conviva - Real-time Streaming Video Intelligence}}.
\newblock \bibinfo{howpublished}{\url{https://www.conviva.com/}}.
  (\bibinfo{year}{2022}).
\newblock
\newblock
\shownote{[Online; accessed 16-July-2022].}


\bibitem[\protect\citeauthoryear{??}{rep}{2022}]%
        {repo}
 \bibinfo{year}{2022}\natexlab{}.
\newblock \bibinfo{title}{{HYDRA repository}}.
\newblock
  \bibinfo{howpublished}{\url{https://github.com/antonis-m/HYDRA_VLDB}}.
  (\bibinfo{year}{2022}).
\newblock
\newblock
\shownote{[Online; accessed 16-July-2022].}


\bibitem[\protect\citeauthoryear{??}{ibm}{2022}]%
        {ibms}
 \bibinfo{year}{2022}\natexlab{}.
\newblock \bibinfo{title}{{IBM Streams}}.
\newblock
  \bibinfo{howpublished}{\url{https://www.ibm.com/cloud/streaming-analytics}}.
   (\bibinfo{year}{2022}).
\newblock
\newblock
\shownote{[Online; accessed 16-July-2022].}


\bibitem[\protect\citeauthoryear{Abadi, Ahmad, Balazinska, Cetintemel,
  Cherniack, Hwang, Lindner, Maskey, Rasin, Ryvkina, et~al\mbox{.}}{Abadi
  et~al\mbox{.}}{2005}]%
        {abadi2005design}
\bibfield{author}{\bibinfo{person}{Daniel~J Abadi}, \bibinfo{person}{Yanif
  Ahmad}, \bibinfo{person}{Magdalena Balazinska}, \bibinfo{person}{Ugur
  Cetintemel}, \bibinfo{person}{Mitch Cherniack}, \bibinfo{person}{Jeong-Hyon
  Hwang}, \bibinfo{person}{Wolfgang Lindner}, \bibinfo{person}{Anurag Maskey},
  \bibinfo{person}{Alex Rasin}, \bibinfo{person}{Esther Ryvkina},
  {et~al\mbox{.}}} \bibinfo{year}{2005}\natexlab{}.
\newblock \showarticletitle{The design of the borealis stream processing
  engine.}. In \bibinfo{booktitle}{{\em Cidr}}, Vol.~\bibinfo{volume}{5}.
  \bibinfo{pages}{277--289}.
\newblock


\bibitem[\protect\citeauthoryear{Abraham, Allen, Barykin, Borkar, Chopra,
  Gerea, Merl, Metzler, Reiss, Subramanian, et~al\mbox{.}}{Abraham
  et~al\mbox{.}}{2013}]%
        {scuba}
\bibfield{author}{\bibinfo{person}{Lior Abraham}, \bibinfo{person}{John Allen},
  \bibinfo{person}{Oleksandr Barykin}, \bibinfo{person}{Vinayak Borkar},
  \bibinfo{person}{Bhuwan Chopra}, \bibinfo{person}{Ciprian Gerea},
  \bibinfo{person}{Daniel Merl}, \bibinfo{person}{Josh Metzler},
  \bibinfo{person}{David Reiss}, \bibinfo{person}{Subbu Subramanian},
  {et~al\mbox{.}}} \bibinfo{year}{2013}\natexlab{}.
\newblock \showarticletitle{Scuba: Diving into data at facebook}.
\newblock \bibinfo{journal}{{\em Proceedings of the VLDB Endowment\/}}
  \bibinfo{volume}{6}, \bibinfo{number}{11} (\bibinfo{year}{2013}),
  \bibinfo{pages}{1057--1067}.
\newblock


\bibitem[\protect\citeauthoryear{Acharya, Gibbons, and Poosala}{Acharya
  et~al\mbox{.}}{2000}]%
        {congressional}
\bibfield{author}{\bibinfo{person}{Swarup Acharya}, \bibinfo{person}{Phillip~B
  Gibbons}, {and} \bibinfo{person}{Viswanath Poosala}.}
  \bibinfo{year}{2000}\natexlab{}.
\newblock \showarticletitle{Congressional samples for approximate answering of
  group-by queries}. In \bibinfo{booktitle}{{\em Proceedings of the 2000 ACM
  SIGMOD international conference on Management of data}}.
  \bibinfo{pages}{487--498}.
\newblock


\bibitem[\protect\citeauthoryear{Acharya, Gibbons, Poosala, and
  Ramaswamy}{Acharya et~al\mbox{.}}{1999}]%
        {aqua}
\bibfield{author}{\bibinfo{person}{Swarup Acharya}, \bibinfo{person}{Phillip~B
  Gibbons}, \bibinfo{person}{Viswanath Poosala}, {and} \bibinfo{person}{Sridhar
  Ramaswamy}.} \bibinfo{year}{1999}\natexlab{}.
\newblock \showarticletitle{The aqua approximate query answering system}. In
  \bibinfo{booktitle}{{\em Proceedings of the 1999 ACM SIGMOD international
  conference on Management of data}}. \bibinfo{pages}{574--576}.
\newblock


\bibitem[\protect\citeauthoryear{Agarwal, Cormode, Huang, Phillips, Wei, and
  Yi}{Agarwal et~al\mbox{.}}{2013a}]%
        {agarwal}
\bibfield{author}{\bibinfo{person}{Pankaj~K Agarwal}, \bibinfo{person}{Graham
  Cormode}, \bibinfo{person}{Zengfeng Huang}, \bibinfo{person}{Jeff~M
  Phillips}, \bibinfo{person}{Zhewei Wei}, {and} \bibinfo{person}{Ke Yi}.}
  \bibinfo{year}{2013}\natexlab{a}.
\newblock \showarticletitle{Mergeable summaries}.
\newblock \bibinfo{journal}{{\em ACM Transactions on Database Systems
  (TODS)\/}} \bibinfo{volume}{38}, \bibinfo{number}{4} (\bibinfo{year}{2013}),
  \bibinfo{pages}{1--28}.
\newblock


\bibitem[\protect\citeauthoryear{Agarwal, Mozafari, Panda, Milner, Madden, and
  Stoica}{Agarwal et~al\mbox{.}}{2013b}]%
        {blinkdb}
\bibfield{author}{\bibinfo{person}{Sameer Agarwal}, \bibinfo{person}{Barzan
  Mozafari}, \bibinfo{person}{Aurojit Panda}, \bibinfo{person}{Henry Milner},
  \bibinfo{person}{Samuel Madden}, {and} \bibinfo{person}{Ion Stoica}.}
  \bibinfo{year}{2013}\natexlab{b}.
\newblock \showarticletitle{BlinkDB: queries with bounded errors and bounded
  response times on very large data}. In \bibinfo{booktitle}{{\em Proceedings
  of the 8th ACM European Conference on Computer Systems}}.
  \bibinfo{pages}{29--42}.
\newblock


\bibitem[\protect\citeauthoryear{Akidau, Balikov, Bekiro{\u{g}}lu, Chernyak,
  Haberman, Lax, McVeety, Mills, Nordstrom, and Whittle}{Akidau
  et~al\mbox{.}}{2013}]%
        {akidau2013millwheel}
\bibfield{author}{\bibinfo{person}{Tyler Akidau}, \bibinfo{person}{Alex
  Balikov}, \bibinfo{person}{Kaya Bekiro{\u{g}}lu}, \bibinfo{person}{Slava
  Chernyak}, \bibinfo{person}{Josh Haberman}, \bibinfo{person}{Reuven Lax},
  \bibinfo{person}{Sam McVeety}, \bibinfo{person}{Daniel Mills},
  \bibinfo{person}{Paul Nordstrom}, {and} \bibinfo{person}{Sam Whittle}.}
  \bibinfo{year}{2013}\natexlab{}.
\newblock \showarticletitle{Millwheel: Fault-tolerant stream processing at
  internet scale}.
\newblock \bibinfo{journal}{{\em Proceedings of the VLDB Endowment\/}}
  \bibinfo{volume}{6}, \bibinfo{number}{11} (\bibinfo{year}{2013}),
  \bibinfo{pages}{1033--1044}.
\newblock


\bibitem[\protect\citeauthoryear{Alon, Matias, and Szegedy}{Alon
  et~al\mbox{.}}{1996}]%
        {ams}
\bibfield{author}{\bibinfo{person}{Noga Alon}, \bibinfo{person}{Yossi Matias},
  {and} \bibinfo{person}{Mario Szegedy}.} \bibinfo{year}{1996}\natexlab{}.
\newblock \showarticletitle{The Space Complexity of Approximating the Frequency
  Moments}. In \bibinfo{booktitle}{{\em Proc. of ACM STOC}}.
\newblock


\bibitem[\protect\citeauthoryear{Arasu, Babcock, Babu, Cieslewicz, Datar, Ito,
  Motwani, Srivastava, and Widom}{Arasu et~al\mbox{.}}{2016}]%
        {arasu2016stream}
\bibfield{author}{\bibinfo{person}{Arvind Arasu}, \bibinfo{person}{Brian
  Babcock}, \bibinfo{person}{Shivnath Babu}, \bibinfo{person}{John Cieslewicz},
  \bibinfo{person}{Mayur Datar}, \bibinfo{person}{Keith Ito},
  \bibinfo{person}{Rajeev Motwani}, \bibinfo{person}{Utkarsh Srivastava}, {and}
  \bibinfo{person}{Jennifer Widom}.} \bibinfo{year}{2016}\natexlab{}.
\newblock \showarticletitle{Stream: The stanford data stream management
  system}.
\newblock In \bibinfo{booktitle}{{\em Data Stream Management}}.
  \bibinfo{publisher}{Springer}, \bibinfo{pages}{317--336}.
\newblock


\bibitem[\protect\citeauthoryear{Armbrust, Xin, Lian, Huai, Liu, Bradley, Meng,
  Kaftan, Franklin, Ghodsi, et~al\mbox{.}}{Armbrust et~al\mbox{.}}{2015}]%
        {QTspark}
\bibfield{author}{\bibinfo{person}{Michael Armbrust},
  \bibinfo{person}{Reynold~S Xin}, \bibinfo{person}{Cheng Lian},
  \bibinfo{person}{Yin Huai}, \bibinfo{person}{Davies Liu},
  \bibinfo{person}{Joseph~K Bradley}, \bibinfo{person}{Xiangrui Meng},
  \bibinfo{person}{Tomer Kaftan}, \bibinfo{person}{Michael~J Franklin},
  \bibinfo{person}{Ali Ghodsi}, {et~al\mbox{.}}}
  \bibinfo{year}{2015}\natexlab{}.
\newblock \showarticletitle{Spark sql: Relational data processing in spark}. In
  \bibinfo{booktitle}{{\em Proceedings of the 2015 ACM SIGMOD international
  conference on management of data}}. \bibinfo{pages}{1383--1394}.
\newblock


\bibitem[\protect\citeauthoryear{Asta}{Asta}{2016}]%
        {asta}
\bibfield{author}{\bibinfo{person}{A Asta}.} \bibinfo{year}{2016}\natexlab{}.
\newblock \bibinfo{title}{Observability at Twitter: technical overview, part i,
  2016}.
\newblock   (\bibinfo{year}{2016}).
\newblock


\bibitem[\protect\citeauthoryear{Bailis, Gan, Madden, Narayanan, Rong, and
  Suri}{Bailis et~al\mbox{.}}{2017}]%
        {macrobase}
\bibfield{author}{\bibinfo{person}{Peter Bailis}, \bibinfo{person}{Edward Gan},
  \bibinfo{person}{Samuel Madden}, \bibinfo{person}{Deepak Narayanan},
  \bibinfo{person}{Kexin Rong}, {and} \bibinfo{person}{Sahaana Suri}.}
  \bibinfo{year}{2017}\natexlab{}.
\newblock \showarticletitle{Macrobase: Prioritizing attention in fast data}. In
  \bibinfo{booktitle}{{\em Proceedings of the 2017 ACM International Conference
  on Management of Data}}. \bibinfo{pages}{541--556}.
\newblock


\bibitem[\protect\citeauthoryear{Balazinska, Balakrishnan, Madden, and
  Stonebraker}{Balazinska et~al\mbox{.}}{2005}]%
        {balazinska2005fault}
\bibfield{author}{\bibinfo{person}{Magdalena Balazinska}, \bibinfo{person}{Hari
  Balakrishnan}, \bibinfo{person}{Samuel Madden}, {and}
  \bibinfo{person}{Michael Stonebraker}.} \bibinfo{year}{2005}\natexlab{}.
\newblock \showarticletitle{Fault-tolerance in the Borealis distributed stream
  processing system}. In \bibinfo{booktitle}{{\em Proceedings of the 2005 ACM
  SIGMOD international conference on Management of data}}.
  \bibinfo{pages}{13--24}.
\newblock


\bibitem[\protect\citeauthoryear{Basat, Einziger, Mitzenmacher, and
  Vargaftik}{Basat et~al\mbox{.}}{2020}]%
        {basat2020faster}
\bibfield{author}{\bibinfo{person}{Ran~Ben Basat}, \bibinfo{person}{Gil
  Einziger}, \bibinfo{person}{Michael Mitzenmacher}, {and}
  \bibinfo{person}{Shay Vargaftik}.} \bibinfo{year}{2020}\natexlab{}.
\newblock \showarticletitle{Faster and more accurate measurement through
  additive-error counters}. In \bibinfo{booktitle}{{\em IEEE INFOCOM 2020-IEEE
  Conference on Computer Communications}}. IEEE, \bibinfo{pages}{1251--1260}.
\newblock


\bibitem[\protect\citeauthoryear{Basat, Einziger, Mitzenmacher, and
  Vargaftik}{Basat et~al\mbox{.}}{2021}]%
        {basat2021salsa}
\bibfield{author}{\bibinfo{person}{Ran~Ben Basat}, \bibinfo{person}{Gil
  Einziger}, \bibinfo{person}{Michael Mitzenmacher}, {and}
  \bibinfo{person}{Shay Vargaftik}.} \bibinfo{year}{2021}\natexlab{}.
\newblock \showarticletitle{SALSA: self-adjusting lean streaming analytics}. In
  \bibinfo{booktitle}{{\em 2021 IEEE 37th International Conference on Data
  Engineering (ICDE)}}. IEEE, \bibinfo{pages}{864--875}.
\newblock


\bibitem[\protect\citeauthoryear{Ben~Basat, Einziger, Friedman, Luizelli, and
  Waisbard}{Ben~Basat et~al\mbox{.}}{2017}]%
        {ben2017constant}
\bibfield{author}{\bibinfo{person}{Ran Ben~Basat}, \bibinfo{person}{Gil
  Einziger}, \bibinfo{person}{Roy Friedman}, \bibinfo{person}{Marcelo~C
  Luizelli}, {and} \bibinfo{person}{Erez Waisbard}.}
  \bibinfo{year}{2017}\natexlab{}.
\newblock \showarticletitle{Constant time updates in hierarchical heavy
  hitters}. In \bibinfo{booktitle}{{\em Proceedings of the Conference of the
  ACM Special Interest Group on Data Communication}}.
  \bibinfo{pages}{127--140}.
\newblock


\bibitem[\protect\citeauthoryear{Braun, Etter, Gasparis, Kaufmann, Kossmann,
  Widmer, Avitzur, Iliopoulos, Levy, and Liang}{Braun et~al\mbox{.}}{2015}]%
        {braun}
\bibfield{author}{\bibinfo{person}{Lucas Braun}, \bibinfo{person}{Thomas
  Etter}, \bibinfo{person}{Georgios Gasparis}, \bibinfo{person}{Martin
  Kaufmann}, \bibinfo{person}{Donald Kossmann}, \bibinfo{person}{Daniel
  Widmer}, \bibinfo{person}{Aharon Avitzur}, \bibinfo{person}{Anthony
  Iliopoulos}, \bibinfo{person}{Eliezer Levy}, {and} \bibinfo{person}{Ning
  Liang}.} \bibinfo{year}{2015}\natexlab{}.
\newblock \showarticletitle{Analytics in motion: High performance
  event-processing and real-time analytics in the same database}. In
  \bibinfo{booktitle}{{\em Proceedings of the 2015 ACM SIGMOD International
  Conference on Management of Data}}. \bibinfo{pages}{251--264}.
\newblock


\bibitem[\protect\citeauthoryear{Braverman and Chestnut}{Braverman and
  Chestnut}{2014}]%
        {braverman2014universal}
\bibfield{author}{\bibinfo{person}{Vladimir Braverman} {and}
  \bibinfo{person}{Stephen~R Chestnut}.} \bibinfo{year}{2014}\natexlab{}.
\newblock \showarticletitle{Universal sketches for the frequency negative
  moments and other decreasing streaming sums}.
\newblock \bibinfo{journal}{{\em arXiv preprint arXiv:1408.5096\/}}
  (\bibinfo{year}{2014}).
\newblock


\bibitem[\protect\citeauthoryear{Braverman and Ostrovsky}{Braverman and
  Ostrovsky}{2010}]%
        {zero_one_freq_law}
\bibfield{author}{\bibinfo{person}{Vladimir Braverman} {and}
  \bibinfo{person}{Rafail Ostrovsky}.} \bibinfo{year}{2010}\natexlab{}.
\newblock \showarticletitle{Zero-one frequency laws}. In
  \bibinfo{booktitle}{{\em Proceedings of the forty-second ACM symposium on
  Theory of computing}}. \bibinfo{pages}{281--290}.
\newblock


\bibitem[\protect\citeauthoryear{Buragohain and Suri}{Buragohain and
  Suri}{2009}]%
        {buragohain}
\bibfield{author}{\bibinfo{person}{Chiranjeeb Buragohain} {and}
  \bibinfo{person}{Subhash Suri}.} \bibinfo{year}{2009}\natexlab{}.
\newblock \bibinfo{title}{Quantiles on Streams.}
\newblock   (\bibinfo{year}{2009}).
\newblock


\bibitem[\protect\citeauthoryear{Carbone, Katsifodimos, Ewen, Markl, Haridi,
  and Tzoumas}{Carbone et~al\mbox{.}}{2015}]%
        {flink}
\bibfield{author}{\bibinfo{person}{Paris Carbone}, \bibinfo{person}{Asterios
  Katsifodimos}, \bibinfo{person}{Stephan Ewen}, \bibinfo{person}{Volker
  Markl}, \bibinfo{person}{Seif Haridi}, {and} \bibinfo{person}{Kostas
  Tzoumas}.} \bibinfo{year}{2015}\natexlab{}.
\newblock \showarticletitle{Apache flink: Stream and batch processing in a
  single engine}.
\newblock \bibinfo{journal}{{\em Bulletin of the IEEE Computer Society
  Technical Committee on Data Engineering\/}} \bibinfo{volume}{36},
  \bibinfo{number}{4} (\bibinfo{year}{2015}).
\newblock


\bibitem[\protect\citeauthoryear{Chaiken, Jenkins, Larson, Ramsey, Shakib,
  Weaver, and Zhou}{Chaiken et~al\mbox{.}}{2008}]%
        {scope}
\bibfield{author}{\bibinfo{person}{Ronnie Chaiken}, \bibinfo{person}{Bob
  Jenkins}, \bibinfo{person}{Per-{\AA}ke Larson}, \bibinfo{person}{Bill
  Ramsey}, \bibinfo{person}{Darren Shakib}, \bibinfo{person}{Simon Weaver},
  {and} \bibinfo{person}{Jingren Zhou}.} \bibinfo{year}{2008}\natexlab{}.
\newblock \showarticletitle{Scope: easy and efficient parallel processing of
  massive data sets}.
\newblock \bibinfo{journal}{{\em Proceedings of the VLDB Endowment\/}}
  \bibinfo{volume}{1}, \bibinfo{number}{2} (\bibinfo{year}{2008}),
  \bibinfo{pages}{1265--1276}.
\newblock


\bibitem[\protect\citeauthoryear{Chandola, Banerjee, and Kumar}{Chandola
  et~al\mbox{.}}{2009}]%
        {Chandola}
\bibfield{author}{\bibinfo{person}{Varun Chandola}, \bibinfo{person}{Arindam
  Banerjee}, {and} \bibinfo{person}{Vipin Kumar}.}
  \bibinfo{year}{2009}\natexlab{}.
\newblock \showarticletitle{Anomaly Detection: A Survey}.
\newblock \bibinfo{journal}{{\em ACM Comput. Surv.\/}} \bibinfo{volume}{41},
  \bibinfo{number}{3}, Article \bibinfo{articleno}{15} (\bibinfo{date}{July}
  \bibinfo{year}{2009}), \bibinfo{numpages}{58}~pages.
\newblock
\showISSN{0360-0300}
\showDOI{%
\url{https://doi.org/10.1145/1541880.1541882}}


\bibitem[\protect\citeauthoryear{Chaudhuri, Das, and Narasayya}{Chaudhuri
  et~al\mbox{.}}{2007}]%
        {strat}
\bibfield{author}{\bibinfo{person}{Surajit Chaudhuri}, \bibinfo{person}{Gautam
  Das}, {and} \bibinfo{person}{Vivek Narasayya}.}
  \bibinfo{year}{2007}\natexlab{}.
\newblock \showarticletitle{Optimized stratified sampling for approximate query
  processing}.
\newblock \bibinfo{journal}{{\em ACM Transactions on Database Systems
  (TODS)\/}} \bibinfo{volume}{32}, \bibinfo{number}{2} (\bibinfo{year}{2007}),
  \bibinfo{pages}{9--es}.
\newblock


\bibitem[\protect\citeauthoryear{Chaudhuri, Ding, and Kandula}{Chaudhuri
  et~al\mbox{.}}{2017}]%
        {approximate}
\bibfield{author}{\bibinfo{person}{Surajit Chaudhuri}, \bibinfo{person}{Bolin
  Ding}, {and} \bibinfo{person}{Srikanth Kandula}.}
  \bibinfo{year}{2017}\natexlab{}.
\newblock \showarticletitle{Approximate query processing: No silver bullet}. In
  \bibinfo{booktitle}{{\em Proceedings of the 2017 ACM International Conference
  on Management of Data}}. \bibinfo{pages}{511--519}.
\newblock


\bibitem[\protect\citeauthoryear{Chen, Landau-Feibish, Braverman, and
  Rexford}{Chen et~al\mbox{.}}{2020}]%
        {beaucoup}
\bibfield{author}{\bibinfo{person}{Xiaoqi Chen}, \bibinfo{person}{Shir
  Landau-Feibish}, \bibinfo{person}{Mark Braverman}, {and}
  \bibinfo{person}{Jennifer Rexford}.} \bibinfo{year}{2020}\natexlab{}.
\newblock \showarticletitle{Beaucoup: Answering many network traffic queries,
  one memory update at a time}. In \bibinfo{booktitle}{{\em Proceedings of the
  Annual conference of the ACM Special Interest Group on Data Communication on
  the applications, technologies, architectures, and protocols for computer
  communication}}. \bibinfo{pages}{226--239}.
\newblock


\bibitem[\protect\citeauthoryear{Considine, Hadjieleftheriou, Li, Byers, and
  Kollios}{Considine et~al\mbox{.}}{2009}]%
        {considine}
\bibfield{author}{\bibinfo{person}{Jeffrey Considine}, \bibinfo{person}{Marios
  Hadjieleftheriou}, \bibinfo{person}{Feifei Li}, \bibinfo{person}{John Byers},
  {and} \bibinfo{person}{George Kollios}.} \bibinfo{year}{2009}\natexlab{}.
\newblock \showarticletitle{Robust approximate aggregation in sensor data
  management systems}.
\newblock \bibinfo{journal}{{\em ACM Transactions on Database Systems
  (TODS)\/}} \bibinfo{volume}{34}, \bibinfo{number}{1} (\bibinfo{year}{2009}),
  \bibinfo{pages}{1--35}.
\newblock


\bibitem[\protect\citeauthoryear{Cormode, Garofalakis, Haas, and
  Jermaine}{Cormode et~al\mbox{.}}{2012}]%
        {cormode2012synopses}
\bibfield{author}{\bibinfo{person}{Graham Cormode}, \bibinfo{person}{Minos
  Garofalakis}, \bibinfo{person}{Peter~J Haas}, {and} \bibinfo{person}{Chris
  Jermaine}.} \bibinfo{year}{2012}\natexlab{}.
\newblock \showarticletitle{Synopses for massive data: Samples, histograms,
  wavelets, sketches}.
\newblock \bibinfo{journal}{{\em Foundations and Trends in Databases\/}}
  \bibinfo{volume}{4}, \bibinfo{number}{1--3} (\bibinfo{year}{2012}),
  \bibinfo{pages}{1--294}.
\newblock


\bibitem[\protect\citeauthoryear{Cormode and Muthukrishnan}{Cormode and
  Muthukrishnan}{2005}]%
        {count-min}
\bibfield{author}{\bibinfo{person}{Graham Cormode} {and} \bibinfo{person}{Shan
  Muthukrishnan}.} \bibinfo{year}{2005}\natexlab{}.
\newblock \showarticletitle{An improved data stream summary: the count-min
  sketch and its applications}.
\newblock \bibinfo{journal}{{\em Journal of Algorithms\/}}
  \bibinfo{volume}{55}, \bibinfo{number}{1} (\bibinfo{year}{2005}),
  \bibinfo{pages}{58--75}.
\newblock


\bibitem[\protect\citeauthoryear{Cranor, Johnson, Spataschek, and
  Shkapenyuk}{Cranor et~al\mbox{.}}{2003}]%
        {gigascope}
\bibfield{author}{\bibinfo{person}{Chuck Cranor}, \bibinfo{person}{Theodore
  Johnson}, \bibinfo{person}{Oliver Spataschek}, {and}
  \bibinfo{person}{Vladislav Shkapenyuk}.} \bibinfo{year}{2003}\natexlab{}.
\newblock \showarticletitle{Gigascope: A stream database for network
  applications}. In \bibinfo{booktitle}{{\em Proceedings of the 2003 ACM SIGMOD
  international conference on Management of data}}. \bibinfo{pages}{647--651}.
\newblock


\bibitem[\protect\citeauthoryear{Durand and Flajolet}{Durand and
  Flajolet}{2003}]%
        {loglog}
\bibfield{author}{\bibinfo{person}{Marianne Durand} {and}
  \bibinfo{person}{Philippe Flajolet}.} \bibinfo{year}{2003}\natexlab{}.
\newblock \showarticletitle{Loglog counting of large cardinalities}. In
  \bibinfo{booktitle}{{\em European Symposium on Algorithms}}. Springer,
  \bibinfo{pages}{605--617}.
\newblock


\bibitem[\protect\citeauthoryear{Feldmann, Greenberg, Lund, Reingold, Rexford,
  and True}{Feldmann et~al\mbox{.}}{2001}]%
        {deriving}
\bibfield{author}{\bibinfo{person}{Anja Feldmann}, \bibinfo{person}{Albert
  Greenberg}, \bibinfo{person}{Carsten Lund}, \bibinfo{person}{Nick Reingold},
  \bibinfo{person}{Jennifer Rexford}, {and} \bibinfo{person}{Fred True}.}
  \bibinfo{year}{2001}\natexlab{}.
\newblock \showarticletitle{Deriving traffic demands for operational IP
  networks: Methodology and experience}.
\newblock \bibinfo{journal}{{\em IEEE/ACM Transactions On Networking\/}}
  \bibinfo{volume}{9}, \bibinfo{number}{3} (\bibinfo{year}{2001}),
  \bibinfo{pages}{265--279}.
\newblock


\bibitem[\protect\citeauthoryear{Flajolet and Martin}{Flajolet and
  Martin}{1985}]%
        {flajolet1985probabilistic}
\bibfield{author}{\bibinfo{person}{Philippe Flajolet} {and}
  \bibinfo{person}{G~Nigel Martin}.} \bibinfo{year}{1985}\natexlab{}.
\newblock \showarticletitle{Probabilistic counting algorithms for data base
  applications}.
\newblock \bibinfo{journal}{{\em Journal of computer and system sciences\/}}
  \bibinfo{volume}{31}, \bibinfo{number}{2} (\bibinfo{year}{1985}),
  \bibinfo{pages}{182--209}.
\newblock


\bibitem[\protect\citeauthoryear{Fragkoulis, Carbone, Kalavri, and
  Katsifodimos}{Fragkoulis et~al\mbox{.}}{2020}]%
        {fragkoulis}
\bibfield{author}{\bibinfo{person}{Marios Fragkoulis}, \bibinfo{person}{Paris
  Carbone}, \bibinfo{person}{Vasiliki Kalavri}, {and} \bibinfo{person}{Asterios
  Katsifodimos}.} \bibinfo{year}{2020}\natexlab{}.
\newblock \showarticletitle{A Survey on the Evolution of Stream Processing
  Systems}.
\newblock \bibinfo{journal}{{\em arXiv preprint arXiv:2008.00842\/}}
  (\bibinfo{year}{2020}).
\newblock


\bibitem[\protect\citeauthoryear{Gan, Bailis, and Charikar}{Gan
  et~al\mbox{.}}{2020}]%
        {coopstore}
\bibfield{author}{\bibinfo{person}{Edward Gan}, \bibinfo{person}{Peter Bailis},
  {and} \bibinfo{person}{Moses Charikar}.} \bibinfo{year}{2020}\natexlab{}.
\newblock \showarticletitle{Coopstore: Optimizing precomputed summaries for
  aggregation}.
\newblock \bibinfo{journal}{{\em Proceedings of the VLDB Endowment\/}}
  \bibinfo{volume}{13}, \bibinfo{number}{12} (\bibinfo{year}{2020}),
  \bibinfo{pages}{2174--2187}.
\newblock


\bibitem[\protect\citeauthoryear{Gan, Ding, Tai, Sharan, and Bailis}{Gan
  et~al\mbox{.}}{2018}]%
        {moment}
\bibfield{author}{\bibinfo{person}{Edward Gan}, \bibinfo{person}{Jialin Ding},
  \bibinfo{person}{Kai~Sheng Tai}, \bibinfo{person}{Vatsal Sharan}, {and}
  \bibinfo{person}{Peter Bailis}.} \bibinfo{year}{2018}\natexlab{}.
\newblock \showarticletitle{Moment-based quantile sketches for efficient high
  cardinality aggregation queries}.
\newblock \bibinfo{journal}{{\em arXiv preprint arXiv:1803.01969\/}}
  (\bibinfo{year}{2018}).
\newblock


\bibitem[\protect\citeauthoryear{Ghemawat, Gobioff, and Leung}{Ghemawat
  et~al\mbox{.}}{2003}]%
        {gfs}
\bibfield{author}{\bibinfo{person}{Sanjay Ghemawat}, \bibinfo{person}{Howard
  Gobioff}, {and} \bibinfo{person}{Shun-Tak Leung}.}
  \bibinfo{year}{2003}\natexlab{}.
\newblock \showarticletitle{The Google file system}. In
  \bibinfo{booktitle}{{\em Proceedings of the nineteenth ACM symposium on
  Operating systems principles}}. \bibinfo{pages}{29--43}.
\newblock


\bibitem[\protect\citeauthoryear{Gray, Chaudhuri, Bosworth, Layman, Reichart,
  Venkatrao, Pellow, and Pirahesh}{Gray et~al\mbox{.}}{1997}]%
        {gray1997data}
\bibfield{author}{\bibinfo{person}{Jim Gray}, \bibinfo{person}{Surajit
  Chaudhuri}, \bibinfo{person}{Adam Bosworth}, \bibinfo{person}{Andrew Layman},
  \bibinfo{person}{Don Reichart}, \bibinfo{person}{Murali Venkatrao},
  \bibinfo{person}{Frank Pellow}, {and} \bibinfo{person}{Hamid Pirahesh}.}
  \bibinfo{year}{1997}\natexlab{}.
\newblock \showarticletitle{Data cube: A relational aggregation operator
  generalizing group-by, cross-tab, and sub-totals}.
\newblock \bibinfo{journal}{{\em Data mining and knowledge discovery\/}}
  \bibinfo{volume}{1}, \bibinfo{number}{1} (\bibinfo{year}{1997}),
  \bibinfo{pages}{29--53}.
\newblock


\bibitem[\protect\citeauthoryear{Greenwald and Khanna}{Greenwald and
  Khanna}{2001}]%
        {greenwald}
\bibfield{author}{\bibinfo{person}{Michael Greenwald} {and}
  \bibinfo{person}{Sanjeev Khanna}.} \bibinfo{year}{2001}\natexlab{}.
\newblock \showarticletitle{Space-efficient online computation of quantile
  summaries}.
\newblock \bibinfo{journal}{{\em ACM SIGMOD Record\/}} \bibinfo{volume}{30},
  \bibinfo{number}{2} (\bibinfo{year}{2001}), \bibinfo{pages}{58--66}.
\newblock


\bibitem[\protect\citeauthoryear{Gupta, Harrison, Canini, Feamster, Rexford,
  and Willinger}{Gupta et~al\mbox{.}}{2018}]%
        {sonata}
\bibfield{author}{\bibinfo{person}{Arpit Gupta}, \bibinfo{person}{Rob
  Harrison}, \bibinfo{person}{Marco Canini}, \bibinfo{person}{Nick Feamster},
  \bibinfo{person}{Jennifer Rexford}, {and} \bibinfo{person}{Walter
  Willinger}.} \bibinfo{year}{2018}\natexlab{}.
\newblock \showarticletitle{Sonata: Query-driven streaming network telemetry}.
  In \bibinfo{booktitle}{{\em Proceedings of the 2018 conference of the ACM
  special interest group on data communication}}. \bibinfo{pages}{357--371}.
\newblock


\bibitem[\protect\citeauthoryear{Hall, Tudorica, Buruiana, Hofmann, Ganceanu,
  and Hofmann}{Hall et~al\mbox{.}}{2016}]%
        {powerdrill}
\bibfield{author}{\bibinfo{person}{Alex Hall}, \bibinfo{person}{Alexandru
  Tudorica}, \bibinfo{person}{Filip Buruiana}, \bibinfo{person}{Reimar
  Hofmann}, \bibinfo{person}{Silviu-Ionut Ganceanu}, {and}
  \bibinfo{person}{Thomas Hofmann}.} \bibinfo{year}{2016}\natexlab{}.
\newblock \showarticletitle{Trading off accuracy for speed in PowerDrill}.
\newblock  (\bibinfo{year}{2016}).
\newblock


\bibitem[\protect\citeauthoryear{Han, Pei, Dong, and Wang}{Han
  et~al\mbox{.}}{2001}]%
        {han2001efficient}
\bibfield{author}{\bibinfo{person}{Jiawei Han}, \bibinfo{person}{Jian Pei},
  \bibinfo{person}{Guozhu Dong}, {and} \bibinfo{person}{Ke Wang}.}
  \bibinfo{year}{2001}\natexlab{}.
\newblock \showarticletitle{Efficient computation of iceberg cubes with complex
  measures}. In \bibinfo{booktitle}{{\em Proceedings of the 2001 ACM SIGMOD
  international conference on Management of data}}. \bibinfo{pages}{1--12}.
\newblock


\bibitem[\protect\citeauthoryear{Harinarayan, Rajaraman, and
  Ullman}{Harinarayan et~al\mbox{.}}{1996}]%
        {harinarayan1996implementing}
\bibfield{author}{\bibinfo{person}{Venky Harinarayan}, \bibinfo{person}{Anand
  Rajaraman}, {and} \bibinfo{person}{Jeffrey~D Ullman}.}
  \bibinfo{year}{1996}\natexlab{}.
\newblock \showarticletitle{Implementing data cubes efficiently}.
\newblock \bibinfo{journal}{{\em Acm Sigmod Record\/}} \bibinfo{volume}{25},
  \bibinfo{number}{2} (\bibinfo{year}{1996}), \bibinfo{pages}{205--216}.
\newblock


\bibitem[\protect\citeauthoryear{Hellerstein, Haas, and Wang}{Hellerstein
  et~al\mbox{.}}{1997}]%
        {hellerstein}
\bibfield{author}{\bibinfo{person}{Joseph~M Hellerstein},
  \bibinfo{person}{Peter~J Haas}, {and} \bibinfo{person}{Helen~J Wang}.}
  \bibinfo{year}{1997}\natexlab{}.
\newblock \showarticletitle{Online aggregation}. In \bibinfo{booktitle}{{\em
  Proceedings of the 1997 ACM SIGMOD international conference on Management of
  data}}. \bibinfo{pages}{171--182}.
\newblock


\bibitem[\protect\citeauthoryear{Hill, Nassif, Liu, Iyer, and
  Vishwanathan}{Hill et~al\mbox{.}}{2017}]%
        {AB2}
\bibfield{author}{\bibinfo{person}{Daniel~N Hill}, \bibinfo{person}{Houssam
  Nassif}, \bibinfo{person}{Yi Liu}, \bibinfo{person}{Anand Iyer}, {and}
  \bibinfo{person}{SVN Vishwanathan}.} \bibinfo{year}{2017}\natexlab{}.
\newblock \showarticletitle{An efficient bandit algorithm for realtime
  multivariate optimization}. In \bibinfo{booktitle}{{\em Proceedings of the
  23rd ACM SIGKDD International Conference on Knowledge Discovery and Data
  Mining}}. \bibinfo{pages}{1813--1821}.
\newblock


\bibitem[\protect\citeauthoryear{Hwang, Balazinska, Rasin, Cetintemel,
  Stonebraker, and Zdonik}{Hwang et~al\mbox{.}}{2005}]%
        {hwang2005high}
\bibfield{author}{\bibinfo{person}{J-H Hwang}, \bibinfo{person}{Magdalena
  Balazinska}, \bibinfo{person}{Alex Rasin}, \bibinfo{person}{Ugur Cetintemel},
  \bibinfo{person}{Michael Stonebraker}, {and} \bibinfo{person}{Stan Zdonik}.}
  \bibinfo{year}{2005}\natexlab{}.
\newblock \showarticletitle{High-availability algorithms for distributed stream
  processing}. In \bibinfo{booktitle}{{\em 21st International Conference on
  Data Engineering (ICDE'05)}}. IEEE, \bibinfo{pages}{779--790}.
\newblock


\bibitem[\protect\citeauthoryear{Isard, Budiu, Yu, Birrell, and Fetterly}{Isard
  et~al\mbox{.}}{2007}]%
        {dryad}
\bibfield{author}{\bibinfo{person}{Michael Isard}, \bibinfo{person}{Mihai
  Budiu}, \bibinfo{person}{Yuan Yu}, \bibinfo{person}{Andrew Birrell}, {and}
  \bibinfo{person}{Dennis Fetterly}.} \bibinfo{year}{2007}\natexlab{}.
\newblock \showarticletitle{Dryad: distributed data-parallel programs from
  sequential building blocks}. In \bibinfo{booktitle}{{\em Proceedings of the
  2nd ACM SIGOPS/EuroSys European Conference on Computer Systems 2007}}.
  \bibinfo{pages}{59--72}.
\newblock


\bibitem[\protect\citeauthoryear{Jestes, Yi, and Li}{Jestes
  et~al\mbox{.}}{2011}]%
        {jestes2011building}
\bibfield{author}{\bibinfo{person}{Jeffrey Jestes}, \bibinfo{person}{Ke Yi},
  {and} \bibinfo{person}{Feifei Li}.} \bibinfo{year}{2011}\natexlab{}.
\newblock \showarticletitle{Building wavelet histograms on large data in
  mapreduce}.
\newblock \bibinfo{journal}{{\em arXiv preprint arXiv:1110.6649\/}}
  (\bibinfo{year}{2011}).
\newblock


\bibitem[\protect\citeauthoryear{Jiang, Sekar, Milner, Shepherd, Stoica, and
  Zhang}{Jiang et~al\mbox{.}}{2016}]%
        {CFA}
\bibfield{author}{\bibinfo{person}{Junchen Jiang}, \bibinfo{person}{Vyas
  Sekar}, \bibinfo{person}{Henry Milner}, \bibinfo{person}{Davis Shepherd},
  \bibinfo{person}{Ion Stoica}, {and} \bibinfo{person}{Hui Zhang}.}
  \bibinfo{year}{2016}\natexlab{}.
\newblock \showarticletitle{CFA: A Practical Prediction System for Video QoE
  Optimization}. In \bibinfo{booktitle}{{\em Proceedings of the 13th Usenix
  Conference on Networked Systems Design and Implementation}} {\em
  (\bibinfo{series}{NSDI'16})}. \bibinfo{publisher}{USENIX Association},
  \bibinfo{address}{Berkeley, CA, USA}, \bibinfo{pages}{137--150}.
\newblock
\showISBNx{978-1-931971-29-4}
\showURL{%
\url{http://dl.acm.org/citation.cfm?id=2930611.2930621}}


\bibitem[\protect\citeauthoryear{Jiang, Sekar, Stoica, and Zhang}{Jiang
  et~al\mbox{.}}{2013}]%
        {junchenconext}
\bibfield{author}{\bibinfo{person}{Junchen Jiang}, \bibinfo{person}{Vyas
  Sekar}, \bibinfo{person}{Ion Stoica}, {and} \bibinfo{person}{Hui Zhang}.}
  \bibinfo{year}{2013}\natexlab{}.
\newblock \showarticletitle{Shedding light on the structure of internet video
  quality problems in the wild}. In \bibinfo{booktitle}{{\em Proceedings of the
  ninth ACM conference on Emerging networking experiments and technologies}}.
  ACM, \bibinfo{pages}{357--368}.
\newblock


\bibitem[\protect\citeauthoryear{Johari, Koomen, Pekelis, and Walsh}{Johari
  et~al\mbox{.}}{2017}]%
        {AB1}
\bibfield{author}{\bibinfo{person}{Ramesh Johari}, \bibinfo{person}{Pete
  Koomen}, \bibinfo{person}{Leonid Pekelis}, {and} \bibinfo{person}{David
  Walsh}.} \bibinfo{year}{2017}\natexlab{}.
\newblock \showarticletitle{Peeking at a/b tests: Why it matters, and what to
  do about it}. In \bibinfo{booktitle}{{\em Proceedings of the 23rd ACM SIGKDD
  International Conference on Knowledge Discovery and Data Mining}}.
  \bibinfo{pages}{1517--1525}.
\newblock


\bibitem[\protect\citeauthoryear{Kazemitabar, Demiryurek, Ali, Akdogan, and
  Shahabi}{Kazemitabar et~al\mbox{.}}{2010}]%
        {streamin}
\bibfield{author}{\bibinfo{person}{Seyed~Jalal Kazemitabar},
  \bibinfo{person}{Ugur Demiryurek}, \bibinfo{person}{Mohamed Ali},
  \bibinfo{person}{Afsin Akdogan}, {and} \bibinfo{person}{Cyrus Shahabi}.}
  \bibinfo{year}{2010}\natexlab{}.
\newblock \showarticletitle{Geospatial stream query processing using Microsoft
  SQL Server StreamInsight}.
\newblock \bibinfo{journal}{{\em Proceedings of the VLDB Endowment\/}}
  \bibinfo{volume}{3}, \bibinfo{number}{1-2} (\bibinfo{year}{2010}),
  \bibinfo{pages}{1537--1540}.
\newblock


\bibitem[\protect\citeauthoryear{Kirsch and Mitzenmacher}{Kirsch and
  Mitzenmacher}{2006}]%
        {kirsch2006less}
\bibfield{author}{\bibinfo{person}{Adam Kirsch} {and} \bibinfo{person}{Michael
  Mitzenmacher}.} \bibinfo{year}{2006}\natexlab{}.
\newblock \showarticletitle{Less hashing, same performance: building a better
  bloom filter}. In \bibinfo{booktitle}{{\em European Symposium on
  Algorithms}}. Springer, \bibinfo{pages}{456--467}.
\newblock


\bibitem[\protect\citeauthoryear{Kornacker, Behm, Bittorf, Bobrovytsky, Ching,
  Choi, Erickson, Grund, Hecht, Jacobs, et~al\mbox{.}}{Kornacker
  et~al\mbox{.}}{2015}]%
        {impala}
\bibfield{author}{\bibinfo{person}{Marcel Kornacker},
  \bibinfo{person}{Alexander Behm}, \bibinfo{person}{Victor Bittorf},
  \bibinfo{person}{Taras Bobrovytsky}, \bibinfo{person}{Casey Ching},
  \bibinfo{person}{Alan Choi}, \bibinfo{person}{Justin Erickson},
  \bibinfo{person}{Martin Grund}, \bibinfo{person}{Daniel Hecht},
  \bibinfo{person}{Matthew Jacobs}, {et~al\mbox{.}}}
  \bibinfo{year}{2015}\natexlab{}.
\newblock \showarticletitle{Impala: A Modern, Open-Source SQL Engine for
  Hadoop.}. In \bibinfo{booktitle}{{\em Cidr}}, Vol.~\bibinfo{volume}{1}.
  \bibinfo{pages}{9}.
\newblock


\bibitem[\protect\citeauthoryear{Lakshmanan, Pei, and Han}{Lakshmanan
  et~al\mbox{.}}{2002}]%
        {lakshmanan2002quotient}
\bibfield{author}{\bibinfo{person}{Laks~VS Lakshmanan}, \bibinfo{person}{Jian
  Pei}, {and} \bibinfo{person}{Jiawei Han}.} \bibinfo{year}{2002}\natexlab{}.
\newblock \showarticletitle{Quotient cube: How to summarize the semantics of a
  data cube}. In \bibinfo{booktitle}{{\em VLDB'02: Proceedings of the 28th
  International Conference on Very Large Databases}}. Elsevier,
  \bibinfo{pages}{778--789}.
\newblock


\bibitem[\protect\citeauthoryear{Li, Wu, Yi, and Zhao}{Li
  et~al\mbox{.}}{2016}]%
        {li2016wander}
\bibfield{author}{\bibinfo{person}{Feifei Li}, \bibinfo{person}{Bin Wu},
  \bibinfo{person}{Ke Yi}, {and} \bibinfo{person}{Zhuoyue Zhao}.}
  \bibinfo{year}{2016}\natexlab{}.
\newblock \showarticletitle{Wander join: Online aggregation via random walks}.
  In \bibinfo{booktitle}{{\em Proceedings of the 2016 International Conference
  on Management of Data}}. \bibinfo{pages}{615--629}.
\newblock


\bibitem[\protect\citeauthoryear{Li, Han, and Gonzalez}{Li
  et~al\mbox{.}}{2004}]%
        {li2004high}
\bibfield{author}{\bibinfo{person}{Xiaolei Li}, \bibinfo{person}{Jiawei Han},
  {and} \bibinfo{person}{Hector Gonzalez}.} \bibinfo{year}{2004}\natexlab{}.
\newblock \showarticletitle{High-dimensional OLAP: A minimal cubing approach}.
  In \bibinfo{booktitle}{{\em Proceedings of the Thirtieth international
  conference on Very large data bases-Volume 30}}. \bibinfo{pages}{528--539}.
\newblock


\bibitem[\protect\citeauthoryear{Liu, Manousis, Vorsanger, Sekar, and
  Braverman}{Liu et~al\mbox{.}}{2016}]%
        {univmon}
\bibfield{author}{\bibinfo{person}{Zaoxing Liu}, \bibinfo{person}{Antonis
  Manousis}, \bibinfo{person}{Gregory Vorsanger}, \bibinfo{person}{Vyas Sekar},
  {and} \bibinfo{person}{Vladimir Braverman}.} \bibinfo{year}{2016}\natexlab{}.
\newblock \showarticletitle{One sketch to rule them all: Rethinking network
  flow monitoring with univmon}. In \bibinfo{booktitle}{{\em Proceedings of the
  2016 ACM SIGCOMM Conference}}. \bibinfo{pages}{101--114}.
\newblock


\bibitem[\protect\citeauthoryear{Ma and Triantafillou}{Ma and
  Triantafillou}{2019}]%
        {dbest}
\bibfield{author}{\bibinfo{person}{Qingzhi Ma} {and} \bibinfo{person}{Peter
  Triantafillou}.} \bibinfo{year}{2019}\natexlab{}.
\newblock \showarticletitle{Dbest: Revisiting approximate query processing
  engines with machine learning models}. In \bibinfo{booktitle}{{\em
  Proceedings of the 2019 International Conference on Management of Data}}.
  \bibinfo{pages}{1553--1570}.
\newblock


\bibitem[\protect\citeauthoryear{Melnik, Gubarev, Long, Romer, Shivakumar,
  Tolton, and Vassilakis}{Melnik et~al\mbox{.}}{2010}]%
        {dremel}
\bibfield{author}{\bibinfo{person}{Sergey Melnik}, \bibinfo{person}{Andrey
  Gubarev}, \bibinfo{person}{Jing~Jing Long}, \bibinfo{person}{Geoffrey Romer},
  \bibinfo{person}{Shiva Shivakumar}, \bibinfo{person}{Matt Tolton}, {and}
  \bibinfo{person}{Theo Vassilakis}.} \bibinfo{year}{2010}\natexlab{}.
\newblock \showarticletitle{Dremel: interactive analysis of web-scale
  datasets}.
\newblock \bibinfo{journal}{{\em Proceedings of the VLDB Endowment\/}}
  \bibinfo{volume}{3}, \bibinfo{number}{1-2} (\bibinfo{year}{2010}),
  \bibinfo{pages}{330--339}.
\newblock


\bibitem[\protect\citeauthoryear{Minton and Price}{Minton and Price}{2014}]%
        {cs1}
\bibfield{author}{\bibinfo{person}{Gregory~T Minton} {and}
  \bibinfo{person}{Eric Price}.} \bibinfo{year}{2014}\natexlab{}.
\newblock \showarticletitle{Improved concentration bounds for count-sketch}. In
  \bibinfo{booktitle}{{\em Proceedings of the twenty-fifth annual ACM-SIAM
  symposium on Discrete algorithms}}. SIAM, \bibinfo{pages}{669--686}.
\newblock


\bibitem[\protect\citeauthoryear{Murray, McSherry, Isaacs, Isard, Barham, and
  Abadi}{Murray et~al\mbox{.}}{2013}]%
        {naiad}
\bibfield{author}{\bibinfo{person}{Derek~G Murray}, \bibinfo{person}{Frank
  McSherry}, \bibinfo{person}{Rebecca Isaacs}, \bibinfo{person}{Michael Isard},
  \bibinfo{person}{Paul Barham}, {and} \bibinfo{person}{Mart{\'\i}n Abadi}.}
  \bibinfo{year}{2013}\natexlab{}.
\newblock \showarticletitle{Naiad: a timely dataflow system}. In
  \bibinfo{booktitle}{{\em Proceedings of the Twenty-Fourth ACM Symposium on
  Operating Systems Principles}}. \bibinfo{pages}{439--455}.
\newblock


\bibitem[\protect\citeauthoryear{Namkung, Liu, Kim, Sekar, Steenkiste, Liu, Li,
  Canel, Philip, Ware, et~al\mbox{.}}{Namkung et~al\mbox{.}}{}]%
        {sketchlib}
\bibfield{author}{\bibinfo{person}{Hun Namkung}, \bibinfo{person}{Zaoxing Liu},
  \bibinfo{person}{Daehyeok Kim}, \bibinfo{person}{Vyas Sekar},
  \bibinfo{person}{Peter Steenkiste}, \bibinfo{person}{Guyue Liu},
  \bibinfo{person}{Ao Li}, \bibinfo{person}{Christopher Canel},
  \bibinfo{person}{Adithya~Abraham Philip}, \bibinfo{person}{Ranysha Ware},
  {et~al\mbox{.}}}
\newblock \showarticletitle{Sketchlib: Enabling efficient sketch-based
  monitoring on programmable switches}. NSDI.
\newblock


\bibitem[\protect\citeauthoryear{Olston, Bortnikov, Elmeleegy, Junqueira, and
  Reed}{Olston et~al\mbox{.}}{2009}]%
        {olston}
\bibfield{author}{\bibinfo{person}{Christopher Olston}, \bibinfo{person}{Edward
  Bortnikov}, \bibinfo{person}{Khaled Elmeleegy}, \bibinfo{person}{Flavio
  Junqueira}, {and} \bibinfo{person}{Benjamin Reed}.}
  \bibinfo{year}{2009}\natexlab{}.
\newblock \showarticletitle{Interactive Analysis of Web-Scale Data.}. In
  \bibinfo{booktitle}{{\em CIDR}}. Citeseer.
\newblock


\bibitem[\protect\citeauthoryear{Olston, Reed, Srivastava, Kumar, and
  Tomkins}{Olston et~al\mbox{.}}{2008}]%
        {pig}
\bibfield{author}{\bibinfo{person}{Christopher Olston},
  \bibinfo{person}{Benjamin Reed}, \bibinfo{person}{Utkarsh Srivastava},
  \bibinfo{person}{Ravi Kumar}, {and} \bibinfo{person}{Andrew Tomkins}.}
  \bibinfo{year}{2008}\natexlab{}.
\newblock \showarticletitle{Pig latin: a not-so-foreign language for data
  processing}. In \bibinfo{booktitle}{{\em Proceedings of the 2008 ACM SIGMOD
  international conference on Management of data}}.
  \bibinfo{pages}{1099--1110}.
\newblock


\bibitem[\protect\citeauthoryear{Pansare, Borkar, Jermaine, and Condie}{Pansare
  et~al\mbox{.}}{2011}]%
        {pansare}
\bibfield{author}{\bibinfo{person}{Niketan Pansare}, \bibinfo{person}{Vinayak
  Borkar}, \bibinfo{person}{Chris Jermaine}, {and} \bibinfo{person}{Tyson
  Condie}.} \bibinfo{year}{2011}\natexlab{}.
\newblock \showarticletitle{Online aggregation for large mapreduce jobs}.
\newblock \bibinfo{journal}{{\em Proceedings of the VLDB Endowment\/}}
  \bibinfo{volume}{4}, \bibinfo{number}{11} (\bibinfo{year}{2011}),
  \bibinfo{pages}{1135--1145}.
\newblock


\bibitem[\protect\citeauthoryear{Park, Mozafari, Sorenson, and Wang}{Park
  et~al\mbox{.}}{2018}]%
        {verdictdb}
\bibfield{author}{\bibinfo{person}{Yongjoo Park}, \bibinfo{person}{Barzan
  Mozafari}, \bibinfo{person}{Joseph Sorenson}, {and} \bibinfo{person}{Junhao
  Wang}.} \bibinfo{year}{2018}\natexlab{}.
\newblock \showarticletitle{Verdictdb: Universalizing approximate query
  processing}. In \bibinfo{booktitle}{{\em Proceedings of the 2018
  International Conference on Management of Data}}.
  \bibinfo{pages}{1461--1476}.
\newblock


\bibitem[\protect\citeauthoryear{Pelkonen, Franklin, Teller, Cavallaro, Huang,
  Meza, and Veeraraghavan}{Pelkonen et~al\mbox{.}}{2015}]%
        {gorilla}
\bibfield{author}{\bibinfo{person}{Tuomas Pelkonen}, \bibinfo{person}{Scott
  Franklin}, \bibinfo{person}{Justin Teller}, \bibinfo{person}{Paul Cavallaro},
  \bibinfo{person}{Qi Huang}, \bibinfo{person}{Justin Meza}, {and}
  \bibinfo{person}{Kaushik Veeraraghavan}.} \bibinfo{year}{2015}\natexlab{}.
\newblock \showarticletitle{Gorilla: A fast, scalable, in-memory time series
  database}.
\newblock \bibinfo{journal}{{\em Proceedings of the VLDB Endowment\/}}
  \bibinfo{volume}{8}, \bibinfo{number}{12} (\bibinfo{year}{2015}),
  \bibinfo{pages}{1816--1827}.
\newblock


\bibitem[\protect\citeauthoryear{Rabkin, Arye, Sen, Pai, and Freedman}{Rabkin
  et~al\mbox{.}}{2014}]%
        {jetstream}
\bibfield{author}{\bibinfo{person}{Ariel Rabkin}, \bibinfo{person}{Matvey
  Arye}, \bibinfo{person}{Siddhartha Sen}, \bibinfo{person}{Vivek~S Pai}, {and}
  \bibinfo{person}{Michael~J Freedman}.} \bibinfo{year}{2014}\natexlab{}.
\newblock \showarticletitle{Aggregation and degradation in jetstream: Streaming
  analytics in the wide area}. In \bibinfo{booktitle}{{\em 11th $\{$USENIX$\}$
  Symposium on Networked Systems Design and Implementation ($\{$NSDI$\}$ 14)}}.
  \bibinfo{pages}{275--288}.
\newblock


\bibitem[\protect\citeauthoryear{Ramachandran, Seetharaman, Feamster, and
  Vazirani}{Ramachandran et~al\mbox{.}}{2008}]%
        {fastmonitoring}
\bibfield{author}{\bibinfo{person}{Anirudh Ramachandran},
  \bibinfo{person}{Srinivasan Seetharaman}, \bibinfo{person}{Nick Feamster},
  {and} \bibinfo{person}{Vijay Vazirani}.} \bibinfo{year}{2008}\natexlab{}.
\newblock \showarticletitle{Fast monitoring of traffic subpopulations}. In
  \bibinfo{booktitle}{{\em Proceedings of the 8th ACM SIGCOMM conference on
  Internet measurement}}. \bibinfo{pages}{257--270}.
\newblock


\bibitem[\protect\citeauthoryear{Shvachko, Kuang, Radia, and Chansler}{Shvachko
  et~al\mbox{.}}{2010}]%
        {hadoop}
\bibfield{author}{\bibinfo{person}{Konstantin Shvachko},
  \bibinfo{person}{Hairong Kuang}, \bibinfo{person}{Sanjay Radia}, {and}
  \bibinfo{person}{Robert Chansler}.} \bibinfo{year}{2010}\natexlab{}.
\newblock \showarticletitle{The hadoop distributed file system}. In
  \bibinfo{booktitle}{{\em 2010 IEEE 26th symposium on mass storage systems and
  technologies (MSST)}}. Ieee, \bibinfo{pages}{1--10}.
\newblock


\bibitem[\protect\citeauthoryear{Sidirourgos, Kersten, Boncz,
  et~al\mbox{.}}{Sidirourgos et~al\mbox{.}}{2011}]%
        {sidirourgos}
\bibfield{author}{\bibinfo{person}{Lefteris Sidirourgos},
  \bibinfo{person}{Martin~L Kersten}, \bibinfo{person}{Peter~A Boncz},
  {et~al\mbox{.}}} \bibinfo{year}{2011}\natexlab{}.
\newblock \showarticletitle{Sciborq: scientific data management with bounds on
  runtime and quality.}. In \bibinfo{booktitle}{{\em CIDR}},
  Vol.~\bibinfo{volume}{11}. \bibinfo{pages}{296--301}.
\newblock


\bibitem[\protect\citeauthoryear{Thusoo, Sarma, Jain, Shao, Chakka, Anthony,
  Liu, Wyckoff, and Murthy}{Thusoo et~al\mbox{.}}{2009}]%
        {hive}
\bibfield{author}{\bibinfo{person}{Ashish Thusoo}, \bibinfo{person}{Joydeep~Sen
  Sarma}, \bibinfo{person}{Namit Jain}, \bibinfo{person}{Zheng Shao},
  \bibinfo{person}{Prasad Chakka}, \bibinfo{person}{Suresh Anthony},
  \bibinfo{person}{Hao Liu}, \bibinfo{person}{Pete Wyckoff}, {and}
  \bibinfo{person}{Raghotham Murthy}.} \bibinfo{year}{2009}\natexlab{}.
\newblock \showarticletitle{Hive: a warehousing solution over a map-reduce
  framework}.
\newblock \bibinfo{journal}{{\em Proceedings of the VLDB Endowment\/}}
  \bibinfo{volume}{2}, \bibinfo{number}{2} (\bibinfo{year}{2009}),
  \bibinfo{pages}{1626--1629}.
\newblock


\bibitem[\protect\citeauthoryear{Ting}{Ting}{2018}]%
        {ting2}
\bibfield{author}{\bibinfo{person}{Daniel Ting}.}
  \bibinfo{year}{2018}\natexlab{}.
\newblock \showarticletitle{Count-min: optimal estimation and tight error
  bounds using empirical error distributions}. In \bibinfo{booktitle}{{\em
  Proceedings of the 24th ACM SIGKDD International Conference on Knowledge
  Discovery \& Data Mining}}. \bibinfo{pages}{2319--2328}.
\newblock


\bibitem[\protect\citeauthoryear{Ting}{Ting}{2019}]%
        {ting}
\bibfield{author}{\bibinfo{person}{Daniel Ting}.}
  \bibinfo{year}{2019}\natexlab{}.
\newblock \showarticletitle{Approximate distinct counts for billions of
  datasets}. In \bibinfo{booktitle}{{\em Proceedings of the 2019 International
  Conference on Management of Data}}. \bibinfo{pages}{69--86}.
\newblock


\bibitem[\protect\citeauthoryear{Vartak, Rahman, Madden, Parameswaran, and
  Polyzotis}{Vartak et~al\mbox{.}}{2015}]%
        {exploration1}
\bibfield{author}{\bibinfo{person}{Manasi Vartak}, \bibinfo{person}{Sajjadur
  Rahman}, \bibinfo{person}{Samuel Madden}, \bibinfo{person}{Aditya
  Parameswaran}, {and} \bibinfo{person}{Neoklis Polyzotis}.}
  \bibinfo{year}{2015}\natexlab{}.
\newblock \showarticletitle{Seedb: Efficient data-driven visualization
  recommendations to support visual analytics}. In \bibinfo{booktitle}{{\em
  Proceedings of the VLDB Endowment International Conference on Very Large Data
  Bases}}, Vol.~\bibinfo{volume}{8}. NIH Public Access, \bibinfo{pages}{2182}.
\newblock


\bibitem[\protect\citeauthoryear{Vitter and Wang}{Vitter and Wang}{1999}]%
        {vitter}
\bibfield{author}{\bibinfo{person}{Jeffrey~Scott Vitter} {and}
  \bibinfo{person}{Min Wang}.} \bibinfo{year}{1999}\natexlab{}.
\newblock \showarticletitle{Approximate computation of multidimensional
  aggregates of sparse data using wavelets}.
\newblock \bibinfo{journal}{{\em Acm Sigmod Record\/}} \bibinfo{volume}{28},
  \bibinfo{number}{2} (\bibinfo{year}{1999}), \bibinfo{pages}{193--204}.
\newblock


\bibitem[\protect\citeauthoryear{Wang, Christensen, Li, and Yi}{Wang
  et~al\mbox{.}}{2015}]%
        {wang2015spatial}
\bibfield{author}{\bibinfo{person}{Lu Wang}, \bibinfo{person}{Robert
  Christensen}, \bibinfo{person}{Feifei Li}, {and} \bibinfo{person}{Ke Yi}.}
  \bibinfo{year}{2015}\natexlab{}.
\newblock \showarticletitle{Spatial online sampling and aggregation}.
\newblock \bibinfo{journal}{{\em Proceedings of the VLDB Endowment\/}}
  \bibinfo{volume}{9}, \bibinfo{number}{3} (\bibinfo{year}{2015}),
  \bibinfo{pages}{84--95}.
\newblock


\bibitem[\protect\citeauthoryear{Wei, Luo, Yi, Du, and Wen}{Wei
  et~al\mbox{.}}{2015}]%
        {wei2015persistent}
\bibfield{author}{\bibinfo{person}{Zhewei Wei}, \bibinfo{person}{Ge Luo},
  \bibinfo{person}{Ke Yi}, \bibinfo{person}{Xiaoyong Du}, {and}
  \bibinfo{person}{Ji-Rong Wen}.} \bibinfo{year}{2015}\natexlab{}.
\newblock \showarticletitle{Persistent data sketching}. In
  \bibinfo{booktitle}{{\em Proceedings of the 2015 ACM SIGMOD international
  conference on Management of Data}}. \bibinfo{pages}{795--810}.
\newblock


\bibitem[\protect\citeauthoryear{Xiao, Chen, Chen, and Ling}{Xiao
  et~al\mbox{.}}{2015}]%
        {xiao2015hyper}
\bibfield{author}{\bibinfo{person}{Qingjun Xiao}, \bibinfo{person}{Shigang
  Chen}, \bibinfo{person}{Min Chen}, {and} \bibinfo{person}{Yibei Ling}.}
  \bibinfo{year}{2015}\natexlab{}.
\newblock \showarticletitle{Hyper-compact virtual estimators for big network
  data based on register sharing}. In \bibinfo{booktitle}{{\em Proceedings of
  the 2015 ACM SIGMETRICS International Conference on Measurement and Modeling
  of Computer Systems}}. \bibinfo{pages}{417--428}.
\newblock


\bibitem[\protect\citeauthoryear{Xie, Sekar, Maltz, Reiter, and Zhang}{Xie
  et~al\mbox{.}}{2005}]%
        {worm}
\bibfield{author}{\bibinfo{person}{Yinglian Xie}, \bibinfo{person}{Vyas Sekar},
  \bibinfo{person}{David~A Maltz}, \bibinfo{person}{Michael~K Reiter}, {and}
  \bibinfo{person}{Hui Zhang}.} \bibinfo{year}{2005}\natexlab{}.
\newblock \showarticletitle{Worm origin identification using random moonwalks}.
  In \bibinfo{booktitle}{{\em 2005 IEEE Symposium on Security and Privacy
  (S\&P'05)}}. IEEE, \bibinfo{pages}{242--256}.
\newblock


\bibitem[\protect\citeauthoryear{Yang, Tschetter, L{\'e}aut{\'e}, Ray, Merlino,
  and Ganguli}{Yang et~al\mbox{.}}{2014}]%
        {druid}
\bibfield{author}{\bibinfo{person}{Fangjin Yang}, \bibinfo{person}{Eric
  Tschetter}, \bibinfo{person}{Xavier L{\'e}aut{\'e}}, \bibinfo{person}{Nelson
  Ray}, \bibinfo{person}{Gian Merlino}, {and} \bibinfo{person}{Deep Ganguli}.}
  \bibinfo{year}{2014}\natexlab{}.
\newblock \showarticletitle{Druid: A real-time analytical data store}. In
  \bibinfo{booktitle}{{\em Proceedings of the 2014 ACM SIGMOD international
  conference on Management of data}}. \bibinfo{pages}{157--168}.
\newblock


\bibitem[\protect\citeauthoryear{Yang, Zhang, Gadre, Liu, Kumar, and
  Sekar}{Yang et~al\mbox{.}}{2020}]%
        {joltik}
\bibfield{author}{\bibinfo{person}{Mingran Yang}, \bibinfo{person}{Junbo
  Zhang}, \bibinfo{person}{Akshay Gadre}, \bibinfo{person}{Zaoxing Liu},
  \bibinfo{person}{Swarun Kumar}, {and} \bibinfo{person}{Vyas Sekar}.}
  \bibinfo{year}{2020}\natexlab{}.
\newblock \showarticletitle{Joltik: enabling energy-efficient" future-proof"
  analytics on low-power wide-area networks}. In \bibinfo{booktitle}{{\em
  Proceedings of the 26th Annual International Conference on Mobile Computing
  and Networking}}. \bibinfo{pages}{1--14}.
\newblock


\bibitem[\protect\citeauthoryear{Yu, Jose, and Miao}{Yu et~al\mbox{.}}{2013}]%
        {opensketch}
\bibfield{author}{\bibinfo{person}{Minlan Yu}, \bibinfo{person}{Lavanya Jose},
  {and} \bibinfo{person}{Rui Miao}.} \bibinfo{year}{2013}\natexlab{}.
\newblock \showarticletitle{Software Defined Traffic Measurement with
  OpenSketch}. In \bibinfo{booktitle}{{\em 10th $\{$USENIX$\}$ Symposium on
  Networked Systems Design and Implementation ($\{$NSDI$\}$ 13)}}.
  \bibinfo{pages}{29--42}.
\newblock


\bibitem[\protect\citeauthoryear{Zaharia, Xin, Wendell, Das, Armbrust, Dave,
  Meng, Rosen, Venkataraman, Franklin, et~al\mbox{.}}{Zaharia
  et~al\mbox{.}}{2016}]%
        {spark}
\bibfield{author}{\bibinfo{person}{Matei Zaharia}, \bibinfo{person}{Reynold~S
  Xin}, \bibinfo{person}{Patrick Wendell}, \bibinfo{person}{Tathagata Das},
  \bibinfo{person}{Michael Armbrust}, \bibinfo{person}{Ankur Dave},
  \bibinfo{person}{Xiangrui Meng}, \bibinfo{person}{Josh Rosen},
  \bibinfo{person}{Shivaram Venkataraman}, \bibinfo{person}{Michael~J
  Franklin}, {et~al\mbox{.}}} \bibinfo{year}{2016}\natexlab{}.
\newblock \showarticletitle{Apache spark: a unified engine for big data
  processing}.
\newblock \bibinfo{journal}{{\it Commun. ACM}} \bibinfo{volume}{59},
  \bibinfo{number}{11} (\bibinfo{year}{2016}), \bibinfo{pages}{56--65}.
\newblock


\end{thebibliography}

\end{document}